\definecolor{light-gray}{gray}{0.75}
\theoremstyle{plain}
\newtheorem{theorem}{Theorem}[section]
\newtheorem{proposition}[theorem]{Proposition}
\newtheorem{lemma}[theorem]{Lemma}
\theoremstyle{definition}
\newtheorem{example}[theorem]{Example}
\newsavebox{\@brx}
\newcommand{\llangle}[1][]{\savebox{\@brx}{\(\m@th{#1\langle}\)}%
  \mathopen{\copy\@brx\kern-0.5\wd\@brx\usebox{\@brx}}}
\newcommand{\rrangle}[1][]{\savebox{\@brx}{\(\m@th{#1\rangle}\)}%
  \mathclose{\copy\@brx\kern-0.5\wd\@brx\usebox{\@brx}}}
\newcommand{\extd}{\mathrm{d}}
\newcommand{\id}{\mathrm{id}}
\newcommand{\tens}{\otimes}
\newcommand{\Ad}{\mathrm{Ad}}
\newcommand{\Hom}{\mathrm{Hom}}
\newcommand{\del}{\partial}
\newcommand{\la}{\triangleright}
\newcommand{\ra}{\triangleleft}
\newcommand{\vspan}{\mathrm{span}}  
\newcommand{\Vol}{\mathrm{Vol}}
\renewcommand{\>}{\rangle}
\newcommand{\<}{\langle}
\newcommand{\C}{\mathbb{C}}
\newcommand{\N}{\mathbb{N}}
\newcommand{\R}{\mathbb{R}}
\newcommand{\Z}{\mathbb{Z}}
\newcommand{\CO}{\mathcal{O}}
\newcommand{\CD}{\mathcal{D}}
\newcommand{\CA}{\mathcal{A}}
\newcommand{\CG}{\mathcal{G}}
\newcommand{\CC}{\mathcal{C}}
\newcommand{\CL}{\mathcal{L}}
\newcommand{\CU}{\mathcal{U}}
\newcommand{\CZ}{\mathcal{Z}}
\newcommand{\cg}{\mathfrak{g}}
\renewcommand{\imath}{{\mathfrak{i}}}
\title{Finite group gauge theory on graphs and gravity-like modes}
\author[1]{Shahn Majid \thanks{\textit{E-mail}: \href{mailto:s.majid@qmul.ac.uk}{\texttt{s.majid@qmul.ac.uk}}}}
\author[2]{Francisco Simão \thanks{\textit{E-mail}: \href{mailto:f.castelasimao@qmul.ac.uk}{\texttt{f.castelasimao@qmul.ac.uk}}}}
\affil[1,2]{\normalsize School of Mathematical Sciences\par\nopagebreak Queen Mary University of London\par\nopagebreak
Mile End Rd, London E1 4NS}
\date{}
\begin{document}
\maketitle
\begin{abstract} 
We study gauge theory with finite group $G$  on a graph $X$ using noncommutative differential geometry and Hopf algebra methods with $G$-valued holonomies replaced by  gauge fields  valued in a `finite group  Lie algebra' subset of the group algebra $\C G$ corresponding to the complete graph differential structure on $G$.  We show that this richer theory decomposes as a product over the nontrivial irreducible representations $\rho$ with dimension $d_\rho$ of certain noncommutative $U(d_\rho)$-Yang-Mills theories, which we introduce. The Yang-Mills action  recovers the Wilson action for a lattice but now with additional terms. We compute the moduli space $\CA^\times / \CG$ of regular connections modulo gauge transformations on connected graphs $X$. For $G$ Abelian, this is given as expected by phases  associated to fundamental loops but with additional $\R_{>0}$-valued modes on every edge resembling the metric for quantum gravity models on graphs. For nonAbelian $G$, these modes become positive-matrix valued modes. We study the  quantum gauge field theory in the Abelian case in a functional integral approach, particularly for $X$ the finite chain $A_{n+1}$, the $n$-gon $\Z_n$ and the single plaquette $\Z_2\times \Z_2$. We show that, in stark contrast to usual lattice gauge theory, the Lorentzian version is well-behaved, and we identify novel boundary vs bulk effects in the case of the finite chain.  We also consider gauge fields valued in the finite-group Lie algebra corresponding to a general Cayley graph differential calculus on $G$, where we study an obstruction to closure of gauge transformations. \end{abstract}

\section{Introduction}\label{secintro}

Lattice gauge theory (LGT) is a standard and well-established approximation of continuum gauge theory, as for example in lattice QCD~\cite{Smi,KogSus:ham}. Generalised to other graphs $X$, it may also be relevant to networks and topological data analysis. The case of a finite gauge group $G$  is particularly relevant to the Kitaev model for topologically fault tolerant quantum computing~\cite{Kit} as well as of interest as a simplifying assumption in the study of entanglement~\cite{Don:lat} and in the theory of spin networks~\cite{Bae:spi}.  This established approach nevertheless raises the fundamental  issue: to what extent do the various constructions map onto the continuum limit of which lattice theory is meant to be an approximation?

Noncommutative geometry addresses this head on in the following manner. First of all, a graph $X$ is a perfectly good and exact -- but noncommutative -- differential geometry with not only differential 1-forms $\Omega^1$ (spanned by arrows of the graph, and not commuting with functions) but also $\Omega^2$ and higher forms. This  is simply one limit of a more general conception of geometry based on a coordinate algebra $B$, which in another limit $B=C^\infty(M)$ includes classical geometry on a manifold $M$ (and also includes $B$ noncommutative, relevant to quantum spacetime). By embedding everything into a single coherent framework one can more systematically take the continuum limit of an infinite graph, as well as more systematically transfer geometric concepts from the continuum to the graph case. Such a point of view was used in recent works \cite{Ma:squ,ArgMa,BliMa} for baby quantum gravity models on graphs and we refer to \cite{Ma:gra,BegMa} for an introduction to the methods of quantum Riemannian geometry (QRG) used, as well as \cite{Con} for another approach.  Gauge theory on lattices was previously considered using noncommutative geometry in \cite{CasPa:dis, ACI:dis, DMHS:lgt, DMH:fin} with  gauge-fields valued in matrices (but note that the discrete groups in these works refer to $X$ to simplify its noncommutative geometry). 

Secondly,  we now go much further and also consider the gauge group $G$ as itself a noncommutative graph geometry. Here a translation-invariant differential structure corresponds to a Cayley graph structure on $G$ and in the bicovariant case the dual to the space of invariant 1-forms becomes a `finite group Lie algebra' as, for example,  in \cite{MaRie}. For most of the paper we focus on the universal calculus corresponding to the complete graph on $G$, where the finite group Lie algebra is the subspace $\C G^+\subset \C G$ of the group algebra such that the coefficients in the basis of $\C G$ labelled by group elements add up to zero. The theory is a special case of the braided-Lie algebra of any strict quantum group and includes the usual Lie algebra of a Lie group as a special case of that. Accordingly, we take a gauge field to be an element $\alpha\in \C G^+ \tens \Omega^1$, which means to assign an element of $\C G^+$ to every arrow (subject to a relation with the value on the reversed arrow).  This is fundamentally different from usual LGT, where one assigns a group element (the `holonomy' along the arrow) to every arrow,  since we  allow complex linear combinations of group elements at each arrow. This is in the same spirit as boolean states labeling the Hilbert space basis in a quantum computer, which then proceeds with linear combinations. Hence, the big difference in the present work is that $\C G^+$ extends the gauge fields of LGT to a `quantum' version in the quantum computing sense. This will result in a somewhat different theory but has the merit that we have restored a geometric picture in line with the continuum theory where gauge fields need to be  Lie algebra valued. This is also a necessary warm-up  if we want to understand models with Hopf algebra gauge symmetry\cite{BrzMa}, including the Kitaev model where the full gauge symmetry is the quantum double $D(G)$ rather than $G$ itself\cite{Sho,CowMa}.

We study the resulting theory in detail, including a Yang-Mills action ${1\over c}\int (F,F)_2$ built from the curvature of $F(\alpha)$ and a QRG-type metric $(\ ,\ )_2$ on $\Omega^2$ (and with the integral a sum in the graph case). Our first new result is that the gauge theory that emerges based on such $\alpha$ is actually a product of independent noncommutative $U(d_\rho)$-gauge theories, one for each nontrivial irrep $\rho$ of $G$ of dimension $d_\rho$. We introduce the relevant notion of noncommutative $U(d)$-gauge theory in Section~\ref{secUd} as a generalisation of the noncommutative $U(1)$-gauge theory studied in \cite{Ma:cli}\cite{MaRai}. This works on any differential algebra $(B,\Omega,\extd)$ and if $B$ is smooth functions on a manifold, exactly recovers classical $U(d)$-gauge theory. We use the theory in the case of $B$ the algebra of functions on the vertices on a directed graph $X$ and $\Omega^1$ the arrows. Thus, we arrive at a rather surprising picture of finite group gauge theory with the universal calculus on the fibre as a product of continuum-group gauge theories on the same base. Our focus is the base being a directed graph $X$, but the decomposition applies generally, including a continuum base. Thus, for the permutation group $G=S_3$ on three elements, $\C S_3$-gauge theory on the continuum decomposes as a usual $U(2)\times U(1)$ gauge theory, and similarly $\C S_4$-gauge theory as  $U(3)\times U(3)\times U(2)\times U(1)$, i.e. not far from the Standard Model. Section~\ref{secpre} also explains other preliminary background from QRG and extends this to metrics on $\Omega^2$.  For a directed graph there are four canonical choices for $\Omega^2$ which we call $\Omega^2_{max}, \Omega^2_{med},\Omega^2_{med'}$ and $\Omega^2_{min}$, cf.\cite{BegMa}. Section \ref{secdiffG} compares our approach to the metric on $\Omega^2$ with the limited known results \cite{MaRai,Ma:hod}  for Hodge $\star$ on a finite group or quantum group with its natural $\Omega^2_{wor}$. This is typically a further quotient of $\Omega_{min}$. 

Section \ref{sec:GraphUniCalcGaugeGroup} then describes our formulation of gauge theory on graphs and lattices $X$ as it naturally arises in this context. Section~\ref{secgraphgauge} describes gauge transformations and curvature of $\alpha$ as well as the above decomposition into noncommutative $U(d_\rho)$-gauge theories in the graph case. Section~\ref{sechol} makes a shift from $\alpha$ to what we call `extended holonomies' $u$ to better make contact with LGT. Section~\ref{secwilson} shows the Yang-Mills action for each nontrivial irrep $\rho$ and different choices of $\Omega^2$, see \eqref{eq:CasesExteriorAlgebraGraph}, recovering the usual Wilson action~\cite{Wil:con} in LGT plus additional terms. That we see terms similar to the Wilson action is in line with results in  \cite{CasPa:dis, ACI:dis, DMHS:lgt, DMH:fin}, but the context and details of our results are different. In particular, we do not only consider $\Omega^2_{wor}$ as normally used when the base $X$ is itself a group lattice (although we do briefly cover this case, in Secttion~\ref{secXG}). The other major differences are that we allow any metric on $X$ and, as mentioned, our gauge fields are `Lie algebra' valued. In all cases, there are certain terms which one does not see in usual LGT as they are constants in the case where $u$ is group valued (as opposed to a linear combination $u\in \C G$). 

Section~\ref{sec:mod} then proceeds to analyse the moduli space $\CU^\times/\CG$ of regular connections (in extended holonomy form) modulo gauge transformations. In the Abelian case, the theory amounts to two `entangled'  theories, one of which looks like LGT for copies of $U(1)$ in that there is a phase at each arrow which can be mostly gauged away so that the moduli space becomes a phase for  for each fundamental loop and each group generator. The other looks a  bit like a graph metric, namely an element of $\R_{>0}$ at every edge.  We call these `gravity like modes' but the action is typically far from the QRG Ricci scalar. Theorem~\ref{thm:MasslessScalar} says that for $X$ a tree graph, however, the theory is equivalent to a massless positive-valued (or `tropical') scalar field field much as for quantum gravity on $\Z$ in \cite{ArgMa}. Section~\ref{secnona}  looks briefly at the case of $G$ nonAbelian, now with Hermitian positive elements of $M_{d_\rho}(\C)$ as the `gravity-like modes'.  

Having defined the Yang-Mills action, Section~\ref{secQFT} takes a look at the resulting QFT in both the Euclidean and Lorentzian cases (where there is an $\imath$ in front of the action). We take a functional-integral approach and focus on Abelian $G$, and  $X$ the finite chain $A_{n+1}$ graph $\bullet$--$\bullet$--$\cdots$--$\bullet$ with $n+1$ vertices,  the polygon $\Z_n$ and the square $\Z_2\times \Z_2$ (which has different 2-forms compared to $\Z_4$). The $A_{n+1}$ case is simply connected (i.e. a tree graph) so that only the gravity-like $\R_{>0}$ modes on edges remain, and we obtain results not unlike those for baby quantum gravity models~\cite{Ma:squ,ArgMa} with infinities handled by a cut-off $L$ in field strength and nonzero relative uncertainties. For $A_3$ we solve the Euclidean theory exactly, for higher $n$ we illustrate the theory with $A_6,\Z_5$ using numerical Monte-Carlo methods, and for general $n$ we obtain the asymptotic behaviour for both large and weak coupling limits. One of the features visible in Figure~\ref{fig:A6Z5} for $A_6$ is that the expectation values behave differently as we approach the ends of the chain compared to the bulk (this is in line with such effects in the quantum Riemannian geometry in \cite{ArgMa2}) and that this is more pronounced in the Lorentzian case. We also compare with parallel results with $U(1)$ lattice gauge theory with broadly similar behaviour for the Euclidean square and vastly better behaviour for the Lorentzian square (where it is known that $U(1)$ lattice gauge theory is rather wild), see Figure~\ref{fig:Z2Z2Lorexpval}.

Section~\ref{secnonu} briefly considers what happens if we take a different, non-universal, calculus on the structure group $G$ defined by an $\Ad$-stable generating set $\CC\subset G\setminus
\{e\}$. There is is still a finite group Lie algebra, now built on $\C \CC$,  but a long-standing issue here is how to then define gauge transformations that are differentiable and to ensure that these are closed under composition. We propose one solution to this based on a condition $\CC^2\subseteq \CC\cup\{e\}$, which we show can hold but is quite restrictive.  Section~\ref{secrem} ends with some concluding remarks about directions for further work.

\section{Preliminaries from noncommutative differential geometry}\label{secpre}

Gauge theory in noncommutative geometry makes sense on any possibly noncommutative algebra $B$, a quantum group  $H$ as `gauge group' and potentially nontrivial `quantum principal bundle' $P$ over it \cite{BrzMa}, albeit we will be concerned only with trivial bundles $P=B\tens H^*$ and $H=\C G$, where the latter is  the group Hopf algebra of a finite group obtained by taking $G$ as a basis and linearly extending the group product. The simplest case $H=\C \Z_2$ will turn out to be equivalent to the noncommutative $U(1)$-gauge theory in \cite{MaRai,Ma:cli} and a small new result, in Section~\ref{secUd}, is to extend this to noncommutative $U(d)$-gauge theory on any $*$-exterior algebra $(B,\Omega,\extd)$. We then specialise to the case where the base algebra is functions on a graph, recalling first the basics of NCG relevant to graphs such as in \cite{Ma:gra} and used in subsequent works \cite{Ma:squ,ArgMa} for quantum gravity on graphs.  An introduction to the entire formalism can be found in \cite{BegMa}.

\subsection{Noncommutative differentials and $U(d)$-gauge theory}\label{secUd}

Let $B$ be a possibly noncommutative `coordinate algebra'. This is required to be a unital $*$-algebra. A first order calculus on $B$ means a $B$-bimodule $\Omega^1$ and a map $\extd\colon B\to \Omega^1$ obeying the Leibniz rule such that the map $B\tens B\to \Omega^1$ given by $b\tens c\mapsto b\extd c$ is surjective. This implies that every such $\Omega^1$ is a quotient of a certain universal first order calculus. 

We also need $\Omega^1$ to extend to a differential graded algebra $(\Omega,\extd)$, which we will need only to degree 2. Here $\Omega = \bigoplus_{n\geq0} \Omega^n$ with $\Omega^0 = B$, wedge product $\wedge\colon \Omega^n \tens_B \Omega^m \to \Omega^{n+m}$, differential $\extd\colon \Omega^n \to \Omega^{n+1}$ and $*$ now obeying 
\begin{align*}
    &\extd^2 = 0,&
    &\extd(\omega \wedge \eta) = \extd \omega \wedge \eta + (-1)^n \omega \wedge \extd \eta,&
    &\extd \omega^* = (\extd \omega)^*,&
    &(\omega \wedge \eta)^* = (-1)^{nm} \eta^* \wedge \omega^*,
\end{align*}
for all $\omega \in \Omega^n$, $\eta \in \Omega^m$. A $*$-calculus is {\em inner} if there is an anti-Hermitian $\theta\in\Omega^1$ such that $\extd=[\theta, ]$ in degree 0  (graded commutator if we want to include higher degree).

At this level, one already has a `noncommutative $U(1)$-gauge theory', which is defined by taking  connections to be anti-Hermitian one forms $\alpha\in \Omega^1$ with curvature $F=\extd \alpha+\alpha\wedge\alpha$ and gauge transformations 
\[ \alpha \mapsto \alpha^\gamma=\gamma\alpha\gamma^{-1}+ \gamma\extd\gamma^{-1},\quad F(\alpha^\gamma)=\gamma F(\alpha)\gamma^{-1}\]
by unitary elements $\gamma\in B$. We will need the following generalisation, which we call noncommutative $U(d)$-gauge theory. We let $\alpha\in M_d(\C)\tens \Omega^1$ and $\gamma\in M_d(\C)\tens B$ with $\alpha^*=-\alpha$ and $\gamma^*\gamma=\gamma\gamma^*=\id\tens 1$, where we now use the Hermitian conjugate $\dagger$ on $M_d(\C)$. Explicitly the data is now $\alpha_{ij}\in \Omega^1$ and $\gamma_{ij}\in B$ such that
\[ (\alpha_{ji})^*=-\alpha_{ij},\quad \gamma_{ik}(\gamma_{jk})^*=( \gamma_{ki})^*\gamma_{kj}=\delta_{ij}\]
(sum over $k$ understood). This is similar in spirit to $K$-theory of $B$ where one considers Hermitian projections on $M_d(\C)\tens B$. The gauge transformation and curvature (which is now in $M_d(\C)\tens \Omega^2$) take the form
\begin{equation*}   \alpha^\gamma_{ij}:=\gamma_{ik}\alpha_{kl}(\gamma_{jl})^*+ \gamma_{ik}\extd(\gamma_{jk})^*,\quad F(\alpha)_{ij}:=\extd \alpha_{ij}+ \alpha_{ik}\wedge\alpha_{kj}.\end{equation*}
Here, and below,  $(\gamma_{ji})^*$ can be replaced by $(\gamma^{-1})_{ij}$ where $\gamma^{-1}$ is the inverse in $M_d\tens B$ if working without the benefit of a $*$ operation, for example over another field. The focus on the unitary case is for physical applications where this is important for unitarity of the theory.

\begin{lemma} The gauge transform preserves the anti-Hermitian condition, the curvature is anti-Hermitian and transforms as expected:
\[ (\alpha^\gamma_{ji})^*=-\alpha^\gamma_{ij},\quad (F(\alpha)_{ji})^*=-F(\alpha)_{ij},\quad  F(\alpha^\gamma)_{ij}=\gamma_{ik}F(\alpha)_{kl}(\gamma_{jl})^*.\]
\end{lemma}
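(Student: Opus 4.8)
The plan is to verify the three identities by direct computation, using repeatedly: (i) the $*$-antihomomorphism property $(\omega\wedge\eta)^*=(-1)^{nm}\eta^*\wedge\omega^*$ together with $\extd\omega^*=(\extd\omega)^*$; (ii) the anti-Hermitian hypothesis $(\alpha_{ji})^*=-\alpha_{ij}$; and (iii) the unitarity of $\gamma$ in differentiated form. For the last ingredient I would first differentiate $\gamma_{ik}(\gamma_{jk})^*=\delta_{ij}$ by the Leibniz rule to get $(\extd\gamma_{ik})(\gamma_{jk})^*=-\gamma_{ik}\extd(\gamma_{jk})^*$, which is the relation that glues the inhomogeneous term of the gauge transformation to the rest.

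For the first identity I would take $*$ of $\alpha^\gamma_{ji}=\gamma_{jk}\alpha_{kl}(\gamma_{il})^*+\gamma_{jk}\extd(\gamma_{ik})^*$. Since $\gamma_{jk},(\gamma_{il})^*\in B$ sit in degree $0$, reversing the order in the first term and using $(\alpha_{kl})^*=-\alpha_{lk}$ gives $-\gamma_{il}\alpha_{lk}(\gamma_{jk})^*$, which after relabelling is the homogeneous part of $-\alpha^\gamma_{ij}$. In the second term $\extd$ commutes with $*$, producing $(\extd\gamma_{ik})(\gamma_{jk})^*$, and the differentiated unitarity relation above turns this into $-\gamma_{ik}\extd(\gamma_{jk})^*$, the inhomogeneous part of $-\alpha^\gamma_{ij}$; hence $(\alpha^\gamma_{ji})^*=-\alpha^\gamma_{ij}$. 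For the second identity I would apply $*$ to $F(\alpha)_{ji}=\extd\alpha_{ji}+\alpha_{jk}\wedge\alpha_{ki}$: the first term gives $\extd(\alpha_{ji})^*=-\extd\alpha_{ij}$, while in the second the sign $(-1)^{1\cdot1}=-1$ from the wedge rule combines with the two signs from $(\alpha_{ki})^*=-\alpha_{ik}$ and $(\alpha_{jk})^*=-\alpha_{kj}$ to give $-\alpha_{ik}\wedge\alpha_{kj}$, so the sum is $-F(\alpha)_{ij}$.

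The third identity is the substantive one, and here I would pass to matrix notation, writing $u=\gamma$ and $u^{-1}=\gamma^\dagger$ (the conjugate transpose), so that $\alpha^\gamma=u\alpha u^{-1}+u\,\extd u^{-1}$ and $F(\alpha)=\extd\alpha+\alpha\wedge\alpha$ with $uu^{-1}=u^{-1}u=\id$. The computation is then the standard one: expand $\extd\alpha^\gamma$ via the Leibniz rule and $\alpha^\gamma\wedge\alpha^\gamma$ termwise, and simplify using $\extd^2=0$ together with the differentiated invertibility relation $(\extd u^{-1})u=-u^{-1}\extd u$. I expect $u(\extd\alpha+\alpha\wedge\alpha)u^{-1}=uF(\alpha)u^{-1}$ to survive, the four cross terms linear in $\alpha$ to cancel in two pairs, and the two $\extd u\wedge\extd u^{-1}$ terms (one from each of $\extd\alpha^\gamma$ and $\alpha^\gamma\wedge\alpha^\gamma$) to cancel directly. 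Reading off the $(i,j)$ entry then recovers $F(\alpha^\gamma)_{ij}=\gamma_{ik}F(\alpha)_{kl}(\gamma_{jl})^*$.

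The main obstacle is bookkeeping rather than idea: one must simultaneously track the matrix index contractions, the Koszul signs from the Leibniz rule and from the $*$-antihomomorphism on degree-$1$ forms, and the correct placement of the differentiated unitarity and invertibility relations. The cancellations in the third identity are delicate precisely because they hinge on the $-1$ in $\extd(\omega\wedge\eta)=\extd\omega\wedge\eta-\omega\wedge\extd\eta$ for a $1$-form $\omega$ matching the sign produced when $(\extd u^{-1})u$ is rewritten via $\extd(u^{-1}u)=0$; either sign wrong would leave a spurious $\extd u$-dependent term. Finally I would note, as the text anticipates, that $*$ plays no role in the third identity, which holds verbatim with $\gamma^\dagger$ replaced by the algebra inverse $\gamma^{-1}$.
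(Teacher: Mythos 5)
Your proof is correct and follows essentially the same route as the paper's: the first two identities are verified exactly as in the text (order-reversal under $*$, anti-Hermiticity of $\alpha$, and the differentiated unitarity $(\extd\gamma_{ik})(\gamma_{jk})^*=-\gamma_{ik}\extd(\gamma_{jk})^*$), and your matrix-notation treatment of the third identity performs the same Leibniz expansion, uses the same differentiated unitarity/invertibility relation, and exhibits the same cancellation pattern (cross terms in pairs plus the two $\extd u\wedge\extd u^{-1}$ terms) that the paper carries out in index form. The matrix packaging is purely notational rather than a different argument, and your closing remark that $*$ is inessential to the third identity matches the paper's own observation that $(\gamma_{ji})^*$ may be replaced by $(\gamma^{-1})_{ij}$.
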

\begin{proof} This is elementary but serves as an introduction to working with $*$-differential algebras, so we include the proof, as follows:
\begin{align*}(\alpha^\gamma_{ji})^*&=(\gamma_{jk}\alpha_{kl}(\gamma_{il})^*)^*+(\gamma_{ik}\extd(\gamma_{jk})^*)^*=\gamma_{il}(\alpha_{kl})^*(\gamma_{jk})^*+(\extd \gamma_{ik})(\gamma_{jk})^*\\
&=-\gamma_{il}\alpha_{lk}(\gamma_{jk})^*-\gamma_{ik}\extd( \gamma_{jk})^*+\extd(\gamma_{ik}(\gamma_{jk})^*)=-\alpha^\gamma_{ij}
\end{align*}
where we used the Leibniz rule and $\gamma$ unitary along with $\extd 1\delta_{ij}=\delta_{ij}\extd 1=0$. Similarly
\[ (F(\alpha)_{ij})^*=-\extd\alpha_{ij}-\alpha_{ik}\wedge\alpha_{kj}=-F(\alpha)_{ij}\]
where we used that $*$ on a product of 1-forms reverses the order but in the process gets an extra minus sign. Finally,
\begin{align*} F(\alpha^\gamma)&=\extd(\gamma_{ik}\alpha_{kl}(\gamma_{jl})^*) + \gamma_{ik}\extd(\gamma_{jk})^*)+\gamma_{ik}\alpha_{kl}(\gamma_{pl})^*\wedge\gamma_{pm}\alpha_{mn}(\gamma_{jn})^*+\gamma_{ik}\alpha_{kl}(\gamma_{pl})^*\wedge\gamma_{pm}\extd(\gamma_{jm})^*\\
&\quad+ \gamma_{ik}(\extd(\gamma_{pk})^*)\wedge \gamma_{pm}\alpha_{mn}(\gamma_{jn})^*+\gamma_{ik}\extd (\gamma_{mk})^*\wedge\gamma_{mn}\extd(\gamma_{jn})^*\\
&=\extd(\gamma_{ik}\alpha_{kn}(\gamma_{jn})^*)+ \extd\gamma_{ik}\wedge \extd(\gamma_{jk})^*+\gamma_{ik}(\alpha_{kl}\wedge\alpha_{ln})(\gamma_{jn})^*+ \gamma_{ik}\alpha_{kn}\wedge\extd(\gamma_{jn})^*- (\extd\gamma_{ik})\wedge \alpha_{kl}(\gamma_{jl})^*-\extd\gamma_{ik}\wedge\extd(\gamma_{jk})^*
\end{align*}
using the Leibniz rule for $\extd$ to expand as well as (as above) transfer with a minus sign to the other $\gamma$ in a product $\gamma\gamma^*$. We then used the unitarity to simplify. All that remains is to  expand the first term using the graded-Leibniz rule, where  $\extd$ picks up a minus sign when moving past $\alpha_{kl}$ to act on $(\gamma_{jl})^*$. We  then see the two terms of $\gamma_{ik}F(\alpha)_{kn}(\gamma_{jn})^*$ and that all the other terms cancel. \end{proof}

To define the Yang-Mills action, we need some kind of quantum metric or Hodge operation. In quantum Riemannian geometry \cite{BegMa} a quantum metric is $\cg\in \Omega^1 \tens_B \Omega^1$ obeying the reality condition $\mathrm{flip} \circ (* \tens *) \cg = \cg$ (the composition on the left is well-defined on $\Omega^1\tens_B\Omega^1$), some form of symmetry condition, and which is invertible in the sense of a bimodule map (the `inverse metric')  $(\ ,\ )\colon \Omega^1 \tens_B \Omega^1 \to B$ satisfying $((\omega, \cdot )\tens_B \id) \cg = \omega = (\id \tens_B (\cdot ,\omega)) \cg$. In terms of $(\ ,\ )$,  the reality condition is equivalent to $(\omega,\eta)^* = (\eta^*,\omega^*)$ for all $\omega, \eta\in \Omega^1$. At a general level the most natural symmetry notion is $\wedge(g)=0$, but there are others in specific contexts. In a Riemannian context we could also ask for positive definiteness $(\omega^*,\omega)$ a positive element of $B$ (hence its integral via  a positive linear functional $\int: B\to \C$ is positive), but we do not assume this here.  

Hodge theory does not, however, necessarily exist in the context of NCG (for example, there may not even be a top form in the exterior algebra). Hence, for our limited purposes we adopt a different approach: we assume directly that we are given as further input data a degree 2 metric  $\cg_2\in \Omega^2\tens_B\Omega^2$ with bimodule map `inverse metric'  $(\ ,\ )_2\colon \Omega^2\tens_B \Omega^2\to B$ and analogous reality properties 
\[ {\rm flip}\circ(*\tens *)(\cg_2)=\cg_2,\quad (\omega,\eta)_2^* = (\eta^*,\omega^*)_2\]
now for all $\omega,\eta\in \Omega^2$. Classically, this would be built from $\cg$ and the antisymmetric form of the exterior algebra, but in the quantum case we would need to specify $\Omega^2$ further to achieve this. We could also ask for positivity of $\cg_2$ but we do not do this. 

Given this data, we also take an inner product on $M_d(\C)$ given by the trace, so that for any $M_d(\C)$-valued 2-forms $F,G$,
\[ {\rm Tr}(F,G)_2:= (F_{ji},G_{ij})_2.\]
We can now define the Yang-Mills Lagrangian for noncommutative $U(d)$-gauge theory with coupling $c$ as 
\begin{equation*}  
    \CL(\alpha)\coloneqq \frac{1}{c}{\rm Tr}((F(\alpha)^*, F(\alpha))_2) = \frac{1}{c}((F(\alpha)^*)_{ji},  F(\alpha)_{ij})_2= \frac{1}{c}((F(\alpha)_{ij})^*,  F(\alpha)_{ij})_2= - \frac{1}{c} ((F(\alpha)_{ji},  F(\alpha)_{ij})_2,\end{equation*}
where we sum over $i,j$. Under a gauge transformation, we have
\[ (\gamma_{jk}F_{kl}(\gamma_{il})^*, \gamma_{im}F_{mn}(\gamma_{jn})^*)_2=(\gamma_{jk}F_{kl}, (\gamma_{il})^*\gamma_{im}F_{mn}(\gamma_{jn})^*)_2=(\gamma_{jk}F_{kl}, (\gamma_{il})^*\gamma_{im}F_{mn}(\gamma_{jn})^*)_2=\gamma_{jk}(F_{kl}, F_{ln})_2(\gamma_{jn})^*\]
which is invariant if $(\ ,\ )_2$ has its image in the centre of $B$ using $\gamma$ unitary. In general, however, this need not be invariant  but it would be typical to assume a linear map $\int: B\to \C$ which is a {\em trace} in the sense $\int ab=\int ba$ for all $a,b\in B$. More precisely, all we really need is that $\int a b c=\int c a b$ for $b$ in the image of $(\ , \ )_2$ and all $a,c\in B$. Then we see that
\begin{equation*}
    S[\alpha]=\frac{1}{c} \int {\rm Tr}((F(\alpha)^*, F(\alpha)))_2= - \frac{1}{c}\int (F_{ji}, F_{ij})_2\end{equation*}
is gauge-invariant.

\subsection{Differentials on graphs} 

If $B=\C(X)=C(X,\C)$ is the commutative algebra of complex functions on a discrete set $X$, the choices of $\Omega^1$ are precisely digraphs with vertex set $X$, i.e. the specification of at most one arrow between some distinct vertices. We do not allow self-arrows. For a $*$-calculus we need the digraph to be {\em bidirected} in the sense that if $x\to y$ is an arrow then so is $y\to x$. This is the same as an undirected graph with every edge given  2-way arrows. We let $E$ denote the set of arrows. Then 
\[
    \Omega^1 = \vspan_\C\langle e_{x\to y} \vert x\to y \in E \rangle,
    \quad 
    \extd f = \sum_{x\to y} (f(y)-f(x))e_{x\to y},
\]
\[
    f.e_{x\to y} = f(x)e_{x\to y},
    \quad 
    e_{x\to y}.f = f(y)e_{x\to y},
    \quad
    e_{x\to y}^*=-e_{y\to x},
\]
for the corresponding calculus. Here $\Omega^1=\C E$ has $E$ as a vector space basis as indicated, and $f\in \C(X)$. We will often work with the $\delta$-function basis $\{\delta_x\ |\ x\in X\}$ of $\C(X)$, where $\delta_x(y) = \delta_{x,y}$ are the delta functions. If the graph has a finite number of arrows then the calculus is inner with $\theta = \sum_{x\to y}e_{x\to y}$. For infinite graphs one needs additional techniques which we explain in the group case below (e.g. of a square lattice).

 In the case of digraphs the maximal prolongation exterior algebra has 3 natural quotients, giving four of interest. They are all  given by the tensor algebra $T_{\C(X)} \Omega^1$ (this is the path algebra on the graph) with relations
\begin{equation}
    \label{eq:RelationsExtAlgGraph}
    \sum_{y\colon x\to y\to z} e_{x\to y} \wedge e_{y\to z} = 0,
\end{equation}
for fixed $x$, $z$, obeying one of four conditions \cite{BegMa}
\begin{align}
    \label{eq:CasesExteriorAlgebraGraph}
	&\Omega_{max}\colon \,\, x\neq z \text{ and }x \not \to z,&
	&\Omega_{med'}\colon \,\,x \not \to z,&
	&\Omega_{med}\colon \,\, x\neq z,&
    &\Omega_{min}\colon \text{ all } x,z.
\end{align}
The differential in the case of $\Omega_{max}$ and $\Omega_{med'}$ is
\begin{align*}
	\extd e_{x\to y} &= \sum_{s: s\to x\to y} e_{s \to x} \wedge e_{x\to y} + \sum_{ t: x\to y \to t} e_{x\to y} \wedge e_{y\to t} - \sum_{i:x\to i \to y}  e_{x \to i} \wedge e_{i\to y}\\
	&= \{\theta,e_{x\to y}\} - \sum_{i:x\to i \to y}  e_{x \to i} \wedge e_{i\to y}
\end{align*}
using the anticommutator, and we drop the last term for $\Omega_{med}$ and $\Omega_{min}$, which makes these two exterior algebras inner. 

 For a digraph, a generalised quantum metric (i.e. before we impose quantum symmetry of some kind) has the form \cite{Ma:gra,BegMa}
\begin{align*}
    &(e_{x\to y}, e_{y'\to x'}) = \lambda_{x\to y} \delta_{x,x'}\delta_{y,y'} \delta_x,&
    &\cg = \sum_{x\to y} \cg_{x\to y} e_{x\to y} \tens_B e_{y\to x},&
    &\lambda_{x\to y} = \frac{1}{\cg_{y\to x}}
\end{align*}
where the $\cg_{x\to y} \in \R^\times$ are interpreted as the square arrow lengths associated with $x\to y$. Note the arrow reversal in the last formula. In a graph context the most natural notion of quantum symmetry is {\em edge symmetric} meaning $\cg_{x\to y}=\cg_{y\to x}$. In this case a quantum metric is just a real `square length' assigned to every (undirected) edge exactly as one would expect for a lattice embedded in a (pseudo)-Riemannian manifold. For some lattices, however, there is no quantum-Levi-Civita connection unless we depart from edge symmetry in a certain way \cite{ArgMa2}.

\begin{lemma}
\label{lem:InnerProductOmega2X}
A bimodule non-degenerate inner product $(\ , \ )_2\colon \Omega^2 \tens_B \Omega^2 \to B$ satisfying $(\omega,\eta)^* = (\eta^*,\omega^*)$ for all $\omega, \eta\in \Omega^2$ is of the form
\begin{align*}
    (e_{x\to y} \wedge e_{y\to z}, e_{z'\to y'} \wedge e_{y'\to x'})_2
    = \lambda_{x \to y \to z \to y'} \delta_{x,x'}\delta_{z,z'} \delta_x
\end{align*}
where the weights $\lambda_{x \to y \to z \to y'} \in \C^\times$ are required to obey
\[ 
    \sum_{y\colon x\to y\to z \to y' \to x} \lambda_{x \to y \to z \to y' } = 0,\quad \lambda_{x \to y \to z \to y'}^* = \lambda_{x \to y' \to z \to y}\]
for all $x,z$ constrained as in \eqref{eq:CasesExteriorAlgebraGraph} according the choice of exterior algebra, and $y, y'$ any vertices between $x$ and $z$.
\end{lemma}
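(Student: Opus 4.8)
The plan is to determine the inner product on each pair of basis $2$-forms by exploiting the three separate $B$-linearity conditions packaged in the phrase ``bimodule map on $\Omega^2\tens_B\Omega^2$'', and then to feed the defining relations of $\Omega^2$ and the reality axiom into the resulting weights.

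First I would write $c := (e_{x\to y}\wedge e_{y\to z}, e_{p\to q}\wedge e_{q\to r})_2 \in \C(X)$ for a general pair of length-two paths and test it against $\delta$-functions. Left $B$-linearity, $(\delta_a\omega,\eta)_2=\delta_a(\omega,\eta)_2$, together with $\delta_a\cdot(e_{x\to y}\wedge e_{y\to z})=\delta_{a,x}(e_{x\to y}\wedge e_{y\to z})$ (left multiplication reads off the source), forces $c$ to be supported at $x$, i.e. $c=c(x)\delta_x$. Right $B$-linearity in the second slot then reads off the target $r$ of $\eta$ and forces $r=x$, while the balancing relation $(\omega f,\eta)_2=(\omega,f\eta)_2$ over $B$ compares the target $z$ of $\omega$ with the source $p$ of $\eta$ and forces $p=z$. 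Hence the only surviving pairings are of a path $x\to y\to z$ against a ``return'' path $z\to y'\to x$, with output a multiple of $\delta_x$; this is precisely the asserted form, and it defines the scalars $\lambda_{x\to y\to z\to y'}$. Because the graph is bidirected, the admissible $y$ and $y'$ both range over the common neighbours of $x$ and $z$, so for fixed endpoints the weights form a square array.

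Next I would impose the remaining structure. Pairing the defining relation $\sum_{y\colon x\to y\to z} e_{x\to y}\wedge e_{y\to z}=0$ (valid for the $(x,z)$ allowed by the chosen $\Omega^2$ in \eqref{eq:CasesExteriorAlgebraGraph}) in the first slot against $e_{z\to y'}\wedge e_{y'\to x}$ and using the form just found yields $\sum_{y}\lambda_{x\to y\to z\to y'}=0$, the first stated constraint. For the reality condition I would compute the $*$ of a wedge of two arrows: since $e_{x\to y}^*=-e_{y\to x}$ and $(\omega\wedge\eta)^*=-\eta^*\wedge\omega^*$ in degree $(1,1)$, one gets $(e_{x\to y}\wedge e_{y\to z})^*=-e_{z\to y}\wedge e_{y\to x}$. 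Substituting into $(\omega,\eta)_2^*=(\eta^*,\omega^*)_2$ with $\omega=e_{x\to y}\wedge e_{y\to z}$, $\eta=e_{z\to y'}\wedge e_{y'\to x}$ and using $\delta_x^*=\delta_x$, the two minus signs cancel and I obtain $\lambda_{x\to y\to z\to y'}^*=\lambda_{x\to y'\to z\to y}$, the second constraint. I would also note that the analogous relation in the second slot is not an extra condition: combining the conjugation-swap with the first-slot relation, and using that the admissibility condition on $(x,z)$ is symmetric under $x\leftrightarrow z$ because the graph is bidirected, reproduces $\sum_{y'}\lambda_{x\to y\to z\to y'}=0$ automatically.

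Finally I would address non-degeneracy and the converse. For fixed endpoints $(x,z)$ the weights assemble, by the conjugation-swap, into a single Hermitian array indexed by the common neighbours of $x$ and $z$, whose rows and columns sum to zero by the relation, so that it annihilates the all-ones vector --- exactly the vector the relation kills in the quotient $\Omega^2$. Non-degeneracy of $(\ ,\ )_2$ on this sector is then invertibility of the array on the complement of that vector. For the converse one checks that any assignment $\lambda_{x\to y\to z\to y'}\in\C^\times$ obeying the two constraints extends $B$-bilinearly to a well-defined bimodule map on $\Omega^2\tens_B\Omega^2$, well-definedness on the quotient being precisely the sum-zero relation, with the stated reality property. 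I expect the main obstacle to be reconciling the two senses of non-degeneracy: invertibility of the per-sector Hermitian array is in principle weaker than the literal requirement $\lambda\in\C^\times$ (a path-Laplacian-type array can be invertible on the quotient yet have a vanishing off-diagonal entry), so one must either read ``non-degenerate'' in the stronger entrywise sense suggested by the arrow-length interpretation of the degree-one metric, or isolate the extra hypothesis that upgrades invertibility to nonvanishing of every weight.
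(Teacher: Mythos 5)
Your proposal is correct and follows essentially the same route as the paper: the paper obtains the stated form by citing the bimodule-descent argument of \cite[Proposition~1.28]{BegMa} (which you instead work out explicitly via the three $B$-linearity/balancing conditions reading off source, target and matching of endpoints), derives the sum-zero constraint from the relations \eqref{eq:RelationsExtAlgGraph} applied in each tensor factor, and reads off the conjugation-swap condition from the reality axiom exactly as you do, sign cancellation included. Your closing caveat is also well taken: the paper's proof simply asserts that non-degeneracy forces every $\lambda_{x\to y\to z\to y'}\neq 0$, which, as you observe, is loose --- invertibility of the per-sector Hermitian array on the complement of the all-ones vector is in principle weaker than entrywise nonvanishing (a row-sum-zero array can be invertible on that complement yet have a vanishing off-diagonal entry), so the $\C^{\times}$ requirement is best read as part of the definition of the class of inner products being considered rather than a consequence of non-degeneracy.
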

\begin{proof}
The general form of $(\ ,\ )_2$ can be computed as in \cite[Proposition 1.28]{BegMa} in order to descent to the tensor product over $B$. The further conditions on $\lambda_{x\to y \to z \to y' }$ follow directly from the relations \eqref{eq:RelationsExtAlgGraph} applied on the first and second factors. Furthermore we need $\lambda_{x \to y \to z \to y' }$ to be non-zero for $(\ ,\ )_2$ to be non-degenerate. The reality condition  is equivalent to $ \lambda_{x \to y \to z \to y'}^* = \lambda_{x \to y' \to z \to y}$. \end{proof}

In a geometric context the coefficients $\lambda_{x\to y\to z\to y'}$ would be defined quadratically in terms of the inverse metric coefficients $\lambda_{x\to y}$. Although it is not necessarily a unique way to do this, a class of solutions inspired by the case of $X$ a discrete group discussed in the next section is
\begin{proposition}\label{eq:SolLambda(,)2}
The coefficients $\lambda_{x\to y\to z\to y'}$ given by
\begin{equation*}
\lambda_{x\to y \to z \to y'} = A_{x,z}  \lambda_{x\to y} \left( \lambda_{x\to y'} -  \delta_{y,y'} \sum_{w:x\to w} \lambda_{x\to w}\right)
\end{equation*}
with real dimensionless weights $A_{x,z}$ chosen arbitrarily for every pair of vertices $x$, $z$ as in \eqref{eq:CasesExteriorAlgebraGraph} satisfy the conditions of Lemma \ref{lem:InnerProductOmega2X}.
\end{proposition}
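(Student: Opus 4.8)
The plan is to check, one at a time, the two requirements that Lemma~\ref{lem:InnerProductOmega2X} places on the weights: the conjugation-symmetry $\lambda_{x\to y\to z\to y'}^*=\lambda_{x\to y'\to z\to y}$, and the linear relation $\sum_{y:\,x\to y\to z}\lambda_{x\to y\to z\to y'}=0$ for each admissible pair $(x,z)$ as in \eqref{eq:CasesExteriorAlgebraGraph} and each intermediate $y'$. Throughout I would treat $x,z,y'$ as fixed and regard the defining formula as a function of the summation variable $y$.

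The reality condition I expect to be immediate. Every ingredient of the defining formula --- the weights $A_{x,z}$ and the inverse-metric coefficients $\lambda_{x\to y}\in\R^\times$ recorded in the generalised metric above --- is real, so each $\lambda_{x\to y\to z\to y'}$ is real and the starred identity reduces to plain symmetry under $y\leftrightarrow y'$. This symmetry is visible term by term: the leading product $\lambda_{x\to y}\lambda_{x\to y'}$ is symmetric, and the correction is supported on the diagonal $y=y'$ where $\lambda_{x\to y}=\lambda_{x\to y'}$, so it is unchanged by the swap.

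For the linear relation I would substitute the formula into the sum over intermediate vertices $y$ with $x\to y\to z$, extract the common factor $A_{x,z}$, and separate the product term from the diagonal correction. The product term contributes $\lambda_{x\to y'}\sum_{y:\,x\to y\to z}\lambda_{x\to y}$. In the correction term the factor $\delta_{y,y'}$ collapses the sum to its $y=y'$ value --- legitimate since the admissible second factor $e_{z\to y'}\wedge e_{y'\to x}$ forces $x\to y'\to z$ by bidirectedness, so $y'$ genuinely occurs in the range --- leaving $\lambda_{x\to y'}\sum_{w:\,x\to w}\lambda_{x\to w}$. Hence the whole expression collapses to $A_{x,z}\,\lambda_{x\to y'}\big(\sum_{y:\,x\to y\to z}\lambda_{x\to y}-\sum_{w:\,x\to w}\lambda_{x\to w}\big)$, and everything hinges on the bracket.

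The step I expect to be the real obstacle is therefore establishing that this bracket vanishes, i.e.\ that the two-step intermediate vertices $\{y:\,x\to y\to z\}$ reproduce the full out-neighbour sum $\sum_{w:\,x\to w}\lambda_{x\to w}$. I would attack this by comparing the two index sets under the constraint \eqref{eq:CasesExteriorAlgebraGraph} and using bidirectedness $x\to y\Leftrightarrow y\to x$, which in the relevant cases --- most transparently when $z=x$, where $y\to z=x$ is automatic for every out-neighbour $y$ --- makes the intermediate set coincide with the out-neighbour set. This is the delicate point: for a generic $(x,z)$ the requirement $y\to z$ is strictly stronger than $x\to y$, so one must verify that precisely the pairs singled out by \eqref{eq:CasesExteriorAlgebraGraph}, and in the motivating complete-graph/group setting the reduction to pairs with $z$ reachable through every out-neighbour of $x$, are the ones on which the two sums agree. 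Once the bracket is shown to vanish the relation follows, and I would close with the remark that non-degeneracy is automatic off the diagonal from $A_{x,z},\lambda_{x\to y}\neq0$, while on the diagonal a vanishing value simply matches the collapse of the corresponding $2$-form under \eqref{eq:RelationsExtAlgGraph}.
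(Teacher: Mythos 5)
Your two reductions are correct, and they are exactly the ``straightforward computation'' that the paper's one-line proof appeals to: reality holds term by term because every ingredient is real and the formula is symmetric under $y\leftrightarrow y'$, and the linear relation collapses (using, as you rightly justify, that $y'$ lies in the range of the $y$-sum) to
\begin{equation*}
\sum_{y:\,x\to y\to z}\lambda_{x\to y\to z\to y'}
= A_{x,z}\,\lambda_{x\to y'}\Bigl(\,\sum_{y:\,x\to y\to z}\lambda_{x\to y}-\sum_{w:\,x\to w}\lambda_{x\to w}\Bigr).
\end{equation*}
The gap is the bracket, and the way you propose to close it --- showing that the admissibility constraints \eqref{eq:CasesExteriorAlgebraGraph} force the two index sets to agree --- would fail, because they do not. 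Take $X=A_4$, the chain $1-2-3-4$, and $(x,z)=(2,4)$: this pair is admissible for all four exterior algebras ($x\neq z$ and $x\not\to z$), the only intermediate vertex is $y=y'=3$, so Lemma~\ref{lem:InnerProductOmega2X} demands $\lambda_{2\to 3\to 4\to 3}=0$; but the stated formula gives $\lambda_{2\to 3\to 4\to 3}=-A_{2,4}\,\lambda_{2\to 3}\,\lambda_{2\to 1}\neq 0$ whenever $A_{2,4}\neq 0$. The same happens for $(i,i+2)$ in $\Z_N$ with $N\geq 4$, and for the triangle under $\Omega_{min}$ or $\Omega_{med}$. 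For $z\neq x$ the bracket vanishes precisely when $\sum_{w:\,x\to w,\,w\not\to z}\lambda_{x\to w}=0$ (generically: every out-neighbour of $x$ points to $z$), and nothing in \eqref{eq:CasesExteriorAlgebraGraph} guarantees this.

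So the proposition with $A_{x,z}$ literally ``chosen arbitrarily'' is false, and no completion of your argument can rescue that reading; what your computation has actually uncovered is the needed correction. Either the diagonal correction must be read as $\sum_{w:\,x\to w\to z}\lambda_{x\to w}$, i.e.\ restricted to two-step intermediates --- then your bracket is identically zero for every pair $(x,z)$ and the whole proof is three lines, which is presumably the computation the paper intends, consistent with its remark that for $y=y'$ the weights ``are determined via the conditions of Lemma~\ref{lem:InnerProductOmega2X}''; or else $A_{x,z}$ must be set to zero for every admissible pair with $z\neq x$ for which some out-neighbour of $x$ fails to reach $z$, which is mirrored in the paper's own group-case table where $A_{x,xab}=0$ is forced for $a=b$. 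Note also that your correct observation for $z=x$ (bidirectedness makes the two index sets coincide, so $A_{x,x}$ really is arbitrary) already covers every instance in which the paper actually applies the proposition, such as $\lambda_{i\to i+1\to i\to i+1}$ in Example~\ref{ex:ActionAn1}.
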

\begin{proof}
This is a straight forward computation to verify that this obeys the conditions of Lemma \ref{lem:InnerProductOmega2X}. \end{proof}

One can start with a more general ansatz with weights $A_{x,y,z,y'}$, but applying the conditions then leads to $A_{x,y,z,y'} = A_{x,\tilde y,z,\tilde y'}$ for any $y,y',\tilde y, \tilde y'$ between $x$ and $z$, and we find that the weights only depend on $x, z$. Note that the only exterior algebra where this solution specifies all coefficients $\lambda_{x\to y \to z \to y'}$ is the minimal one $\Omega_{min}$. The content of this solution is that the coeffiient $\lambda_{x\to y \to z \to y' }$ with $y\neq y'$ associated with the quadrilateral
\[
\begin{tikzcd}
{}	& y'  \arrow[dl]					& {} \\
x \arrow[rd] 	& {}	& z    \arrow[ul] \\
{}	& y	\arrow[ur]			& \\
\end{tikzcd}
\]
is the product of the inverse metric coefficients along the two arrows $x\to y$ and $x\to y'$ outgoing from the starting point $x$. For $y=y'$, the weights are then determined via the conditions of Lemma \ref{lem:InnerProductOmega2X}.

\subsection{Differentials on discrete groups} \label{secdiffG}

Just as Lie groups provide nice examples of differential manifolds, discrete groups $G$ provide nice examples of graphs and hence of differential calculi on the $G$ as sets of vertices. This will not only be relevant for the fibres but also when $X$ is an $M$-dimensional lattice $X=\Z^M$ or is another discrete group. Note that the group in this context, while denoted $G$, is not necessarily the gauge group but can also be the base space with $B=\C(G)$.  Here a {\em Cayley graph} has vertices the group $G$ and arrows of the form $x\to xa$ where $a\in \CC\subseteq G\setminus\{e\}$ a set of generators. In the finite group case case 
\[ e_a=\sum_{x\in G}e_{x\to xa}\in \Omega^1 \]
are the associated left-invariant differential forms and these form an basis {\em over the algebra $\C(G)$} (so all $\Omega^1$ is obtained by acting on the left, say, by $\C(G)$). The reason these are called left-invariant is that on a Cayley graph the left translation $G$-action $\la$ on $\C(G)$ given by $x \la \delta_{y} = \delta_{xy}$ extends to  arrows of the allowed form $e_{y \to ya}$ in a way that commutes with the differential
\[
x\la e_{y \to y a} = x \la (\delta_{y}\extd \delta_{ya}) = (x\la \delta_{y}) \extd (x\la \delta_{ya}) = e_{x y\to x y a},\]
so that  $e_a$, $a\in \CC$ are indeed invariant under this left action. If $\CC$ is stable under conjugation then the right translation action $\delta_{y}\ra x = \delta_{y x}$ also extends to one forms. This is because we need
\[ (\delta_x\extd \delta_{xa})\ra y=\delta_{xy}\extd\delta_{xay}=\delta_{xy}\extd \delta_{xy (y^{-1}ay)}\]
so the arrows remain given by right translation by elements of $\CC$ precisely of $\CC$ is preserved under conjugation. We say in this case that $\Omega^1$ is {\em bicovariant}, and in this case it follows that $e_{a} \ra x = e_{x^{-1} a x}$ for the right action. The bimodule relations for product by functions from the right, $\extd$, $*$ and inner structure  are 
\[
e_a.f = (R_a f) e_a,\quad \extd f = \sum_{a\in \CC} (R_a f - f) e_a, \quad   e^*_a = -e_{a^{-1}},\quad \theta = \sum_{a\in \CC} e_a,
\]
for $f\in \C(G)$ and $(R_a f)(g) = f(ga)$ the right translation by $a\in \CC$. In this form, the calculus makes sense even if $G$ is infinite, as long as $\CC$ is finite (i.e. $\Omega^1$ finite-dimensional over the algebra $B=\C(G)$). 

The vector space spanned by  $\{e_a\}$ is denoted $\Lambda^1$, the left-invariant differential forms with respect to multiplication in the group. As in Lie theory, $\Lambda^1{}^*$ is a `discrete group Lie algebra' of some kind (in this case, a rack) with the dual basis $f^a$ labelled again by $a\in \CC$ and acting by the `left invariant vector fields' $\del_a=R_a-\id$ as in $\extd$. One can also think of $\Lambda^1$ as a quotient of $\C(G)^+$ (the latter being the functions that vanish at $e$) and hence $\Lambda^1{}^*$ a subspace of $\C G$, namely with basis $f^a=a-e$. 

In the bicovariant case there is a canonical choice of exterior algebra $\Omega_{wor}$ which is a quotient of $\Omega_{min}$, see \cite{BegMa} (or could equal it in some cases) defined by a braiding map
\begin{equation*}
    \Psi\colon \Lambda^1 \tens \Lambda^1 \to \Lambda^1\tens \Lambda^1,\quad
    \Psi(e_a \tens e_b) = e_{aba^{-1}} \tens e_{a}.\end{equation*}
We skew symmetrize the tensor algebra on $\Lambda^1$ in a certain way, which in degree 2 amounts to setting $\ker(\Psi-\id)$ to zero, giving an algebra $\Lambda$ of left-invariant forms and $\Omega=\C(G)\Lambda$ is spanned by these over the function algebra. For $G$ an Abelian group with a Cayley graph $\CC$, the adjoint action is trivial and we have $\Psi(e_a \tens e_b) = e_b \tens e_a$, so that the kernel of $\Psi-\id$ is spanned by $e_a\tens e_a$ and $e_a\tens e_b+e_b\tens e_a$. We set these to be zero after the wedge product, making the $\Lambda$ generated by the left-invariant 1-forms a Grassmannian algebra as in classical differential geometry. Note that general 1-forms, however, need not anticommute.
 
Also in the case a Cayley graph a generalised quantum metric takes the form \cite[Proposition 1.58]{BegMa}
\[
\cg = \sum_{a \in \CC} c_a e_a \tens_{\C(G)} e_{a^{-1}}, \quad (e_a,e_b) = \frac{\delta_{a,b^{-1}}}{R_{a}(c_{a^{-1}})}
\]
with real nonvanishing coefficients $c_a \in \C(G)$. For a quantum metric one should impose quantum symmetry $\wedge \cg = 0$, and here the metric is edge-symmetric if and only if $R_a(c_{a^{-1}})=c_a$.

A natural Hodge operator has been proposed  in \cite{Ma:hod}  on any quantum group with bicovariant calculus and the additional data of an bi-invariant quantum metric and unique central bi-invariant top form, but is quite involved. On the other hand, there is also an earlier ad-hoc framework \cite{MaRai, BegMa} which merely mimics classical formulae under similar assumptions. For this, one assumes that $\Omega^n = \C(G)\Vol$, i.e. is one dimensional over the algebra with basis a central volume form $\Vol$. From this, one define the `anti-symmetric tensor' $\epsilon_{a_1 \dots a_n}$ via
\[
e_{a_1 }\wedge \dots \wedge e_{a_n} = \epsilon_{a_1 \dots a_n} \Vol
\]
which vanishes whenever $a_1 \cdots a_n \neq e$ as $\Vol$ is central. The Hodge star is then defined for the Euclidean metric $\cg^{a,b} = \delta_{a,b^{-1}} = (e_a,e_b)$ as the bimodule map $\star\colon \Omega^{k} \to \Omega^{n-k}$ which acts as
\begin{align*}
\star(e_{a_1}\wedge \dots \wedge e_{a_k} ) 
&= \sum_{bc} d_k^{-1} \epsilon_{a_1\cdots a_k b_{k+1} \cdots b_n}  \cg^{b_{k+1}c_{k+1}} \cdots \cg^{b_n c_n} e_{c_n} \wedge \cdots \wedge e_{c_{k+1}}
= \sum_{\{b_i\}} d_k^{-1} \epsilon_{a_1\cdots a_k b_{k+1} \cdots b_n} e_{b^{-1}_n} \wedge \cdots \wedge e_{b^{-1}_{k+1}}
\end{align*}
for some normalisation constants $d_k$ such that $\star^2 = \pm 1$. We then have $(e_a \wedge e_b , e_c \wedge e_d)^\star_2$ defined by
\[
	\star(e_a\wedge e_b)\wedge e_c\wedge e_d
	= \sum_{\{b_i\}} d_2^{-1} \epsilon_{a b b_{3} \cdots b_n}  e_{b^{-1}_n} \wedge \cdots \wedge e_{b^{-1}_{3}}\wedge e_c\wedge e_d
	= \sum_{\{b_i\}} d_2^{-1}\epsilon_{a b b_{3} \cdots b_n} \epsilon_{b^{-1}_n\cdots b^{-1}_{3} cd}\Vol
\]
which we equate to $(e_a\wedge e_b,e_c\wedge e_d)^\star_2\Vol$, giving
\begin{equation*}
(e_a\wedge e_b,e_c\wedge e_d)^\star_2=\sum_{\{b_i\}} d_2^{-1} \epsilon_{a b b_{3} \cdots b_n} \epsilon_{b^{-1}_n\cdots b^{-1}_{3} cd}.
\end{equation*}

An alternative way of defining $(\ , \ )_2$ on a discrete group $G$ with Euclidean metric is via the braiding $\Psi$ and its inverse $\Psi^{-1}(e_a\tens e_b)= e_{b} \tens e_{b^{-1}ab}$
\begin{align}
\label{eq:(,)2G}
(e_a \wedge e_b, e_c\wedge e_d)^\psi_2 &= ((\Psi + \Psi^{-1} - 2\id) e_a \tens e_b, e_c\tens e_d)_\tens = (e_a \tens e_b, (\Psi + \Psi^{-1} - 2\id)  e_c\tens e_d)_\tens \nonumber\\
&= \cg^{a,c} \cg^{b,cdc^{-1}} + \cg^{a,d^{-1}cd} \cg^{b,d} - 2\cg^{a,d} \cg^{b,c}.
\end{align}
Here $(\ ,\ )_\tens$ denotes the extension of the metric to $\Lambda^1\tens\Lambda^1$ in a nested manner, giving the result as shown, and we have used $((\Psi+\Psi^{-1})\omega, \eta)_\tens = (\omega, (\Psi+\Psi^{-1}) \eta)_\tens $ which can be checked directly. This is  manifestly well-defined as rewriting $(\Psi + \Psi^{-1} - 2\id) = (\id - \Psi^{-1})(\Psi-\id)$ makes it explicit that $(\ ,\ )_2$ vanishes on $\ker \Psi-\id$. Furthermore, it satisfies the reality condition, which it would not if we excluded the $\Psi^{-1}-\id$ term.

\begin{proposition}
\label{prop:(,)psi=(,)star}
For $G$ a discrete Abelian group with $\CC = \{c_1,\dots,c_n\}$ or $G= S_3$ with $\CC = \{u=(12),v=(23),w=(13)\}$, both with the Euclidean metric and a central volume form, we have that $(\ , \ )^\psi_2$ and $(\ , \ )^\star_2$ coincide up to an overall normalisation.
\end{proposition}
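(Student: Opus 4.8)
The plan is to exploit that both $(\ ,\ )^\psi_2$ and $(\ ,\ )^\star_2$ are $\C(G)$-bimodule maps $\Omega^2\tens_B\Omega^2\to B$ of the form described in Lemma~\ref{lem:InnerProductOmega2X}, hence each is completely determined by its values on the left-invariant generators $e_a\wedge e_b$. Since these values are (constant, by bi-invariance) scalars, it suffices to show that the two arrays $(e_a\wedge e_b,e_c\wedge e_d)^\psi_2$ and $(e_a\wedge e_b,e_c\wedge e_d)^\star_2$ are proportional by a single global constant. I will also use throughout that the $*$-calculus condition $e_a^*=-e_{a^{-1}}$ forces $\CC=\CC^{-1}$.

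First I would dispatch the Abelian case, where the conjugation action is trivial and $\Psi=\Psi^{-1}=\mathrm{flip}$. Then $\Psi+\Psi^{-1}-2\id=2(\mathrm{flip}-\id)$ and \eqref{eq:(,)2G} collapses to the antisymmetrised Euclidean product
\[ (e_a\wedge e_b,e_c\wedge e_d)^\psi_2=2\big(\delta_{a,c^{-1}}\delta_{b,d^{-1}}-\delta_{a,d^{-1}}\delta_{b,c^{-1}}\big).\]
For the Hodge side I would evaluate the double sum $\sum_{\{b_i\}}d_2^{-1}\epsilon_{ab b_3\cdots b_n}\epsilon_{b_n^{-1}\cdots b_3^{-1}cd}$. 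In the Grassmann (Abelian) algebra $\epsilon$ is the ordinary Levi-Civita symbol on $\CC$, so the first factor forces $\{b_3,\dots,b_n\}=\CC\setminus\{a,b\}$ and the second forces $\{b_3,\dots,b_n\}=\CC\setminus\{c^{-1},d^{-1}\}$ (using $\CC=\CC^{-1}$); these are compatible precisely when $\{a,b\}=\{c^{-1},d^{-1}\}$. Summing over the $(n-2)!$ orderings of the common complement and using the standard contraction identity for two Levi-Civita symbols — while tracking the reversal sign $(-1)^{(n-2)(n-3)/2}$ and the inversion of the summed indices — recovers the same antisymmetriser $\delta_{a,c^{-1}}\delta_{b,d^{-1}}-\delta_{a,d^{-1}}\delta_{b,c^{-1}}$, with the combinatorial prefactor absorbed into the overall normalisation.

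For $G=S_3$ with $\CC=\{u,v,w\}$ I would proceed by direct finite computation, since $n=3$ and there are only $3^4$ coefficients to compare, most of which vanish. On the $\psi$ side I would specialise \eqref{eq:(,)2G} using $a^{-1}=a$ for transpositions (so $\cg^{a,b}=\delta_{a,b}$) and the explicit conjugation permutation of $\{u,v,w\}$, obtaining $\delta_{a,c}\delta_{b,cdc}+\delta_{a,dcd}\delta_{b,d}-2\delta_{a,d}\delta_{b,c}$. On the $\star$ side I would fix the central volume form (whose existence is part of the hypothesis), read off the degree-$3$ symbol $\epsilon_{abc}$ — nonzero exactly on the ordered triples with $abc=e$ — and evaluate the Hodge contraction formula defining $(\ ,\ )^\star_2$. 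Comparing the two arrays entry by entry then verifies proportionality.

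The main obstacle is the nonabelian $S_3$ case: there is no structural reason a priori why the braiding-built conjugation terms $\cg^{b,cdc^{-1}}$ and $\cg^{a,d^{-1}cd}$ in \eqref{eq:(,)2G} should reproduce the Hodge contraction, so the agreement is a genuine coincidence special to this calculus and the check cannot be shortcut. Within this, the delicate points are confirming that the cross term $-2\cg^{a,d}\cg^{b,c}$ is matched (rather than merely the ``diagonal'' terms), and pinning down a single normalisation constant that works simultaneously for all components, including the degenerate cases $a=b$ or $c=d$ where both sides must vanish.
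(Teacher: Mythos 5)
Your Abelian argument is sound and is essentially the paper's own proof: both reduce to showing that the double Levi--Civita contraction with all indices forced into $\{a,b\}=\{c^{-1},d^{-1}\}$ reproduces the antisymmetriser $\cg^{a,c}\cg^{b,d}-\cg^{a,d}\cg^{b,c}$, which equals \eqref{eq:(,)2G} in the Abelian case up to the overall factor $-2$.

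The $S_3$ case, however, rests on a false premise that makes your proposed computation collapse. You take the exterior algebra on $S_3$ with $\CC=\{u,v,w\}$ to have a classical-looking top form in degree $3$, with a three-index symbol $\epsilon_{abc}$ ``nonzero exactly on the ordered triples with $abc=e$''. Neither claim holds for $\Omega_{wor}$: the Woronowicz exterior algebra here has dimensions $1,3,4,3,1$ in degrees $0$ through $4$, so the volume form sits in degree $4$, the Hodge star maps $\Omega^2\to\Omega^2$ (not $\Omega^2\to\Omega^1$), and the relevant symbol is the four-index $\epsilon_{abcd}$ with $\epsilon_{uvuw}=\epsilon_{vuvw}=-\epsilon_{wuvu}=-\epsilon_{wvuv}=1$ and cyclic rotations, as in \cite{MaRai}. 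Worse, the object you propose to read off is identically zero: a product of three transpositions is an odd permutation, so no ordered triple in $\CC^3$ satisfies $abc=e$. Your $\star$-side array would therefore vanish everywhere, while your $\psi$-side array $\delta_{a,c}\delta_{b,cdc}+\delta_{a,dcd}\delta_{b,d}-2\delta_{a,d}\delta_{b,c}$ (which is computed correctly) does not, so no proportionality could be established. The viable route, which is the paper's, is to use the degree-$4$ structure and write $\star(e_a\wedge e_b)\wedge e_c\wedge e_d=\sum_{f_3,f_4\in\CC}d_2^{-1}\,\epsilon_{abf_3f_4}\,\epsilon_{f_4f_3cd}\,\Vol$, expand via $\epsilon_{abcd}=\delta_{a,c}\delta_{d,aba}-\delta_{b,d}\delta_{a,bcb}$, and simplify each of the four resulting terms using $a^2=e$ and $aba=bab$ until they match the three terms of \eqref{eq:(,)2G} up to the normalisation $d_2^{-1}$. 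Your concluding worry about matching the cross term $-2\cg^{a,d}\cg^{b,c}$ is legitimate, but it is resolved by exactly this simplification (it arises from the terms $-\delta_{a,d}\delta_{aba,cdc}-\delta_{b,c}\delta_{d,a}$), not by any degree-$3$ contraction.
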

\begin{proof}
As noted before the exterior algebra $\Omega_{wor}$ of $G$ a discrete Abelian group is given by the Grassmann algebra generated by $e_{c_i}$. We set $\Vol = e_{c_1} \wedge  \dots \wedge e_{c_n}$ and $\epsilon_{c_1,\dots,c_n} = 1$, with even permutations of the indices giving $+1$ and odd permutations $-1$. In this case the Hodge operator is similar to the classical case, with normalisations $d_k = (n-k)!$. These can be ignored though, as we can use the explicit formula for the antisymmetric tensor to write 
\[
\star (e_{c_{i_1}} \wedge \dots \wedge e_{c_{i_k}}) 
= \epsilon_{c_{i_1},  \dots, c_{i_k},c_{1}, \dots,\widehat{c}_{i_j},\dots,c_n} e_{c^{-1}_n} \wedge \dots \wedge \widehat{e_{c^{-1}_{i_j}}} \wedge \dots \wedge e_{c^{-1}_1},
\]
where hat notation means that we are omitting this index or term. The simplification happens as the permutation of $b_{k+1},\cdots,b_{n}$ in $\epsilon$ to get the antisymmetric to the form above will be of the same parity as the permutation used to bring $e_{b^{-1}_n}\wedge \dots \wedge e_{b^{-1}_1}$ to what is shown above. The sum has a total of $d_k$ terms, which cancels the $d^{-1}_k$. This simplifies $(\ , \ )^\star_2$ to
\[
\star(e_{c_i} \wedge e_{c_j}) \wedge e_{c_k} \wedge e_{c_l}
=  \epsilon_{c_i, c_j, c_{1},\dots,\widehat{c}_i, \widehat{c}_j,\dots,c_n} \epsilon_{c^{-1}_{n},\dots,\widehat{c^{-1}_i}, \widehat{c^{-1}_j},\dots,c^{-1}_1,c_k, c_l}\Vol.
\]
As the only non-zero components of $\epsilon$ are the ones that have all indices different from each other, we see that this expression is only non-zero for either $c_k = c^{-1}_i $ and $c_l = c^{-1}_j$ or $c_k = c^{-1}_j$ and $c_l = c^{-1}_i$. One finds 
\[
\epsilon_{c_i, c_j, c_1, \dots,\widehat{c_i}, \widehat{c_j},\dots,c_n} \epsilon_{c^{-1}_{n},\dots,\widehat{c^{-1}_i}, \widehat{c^{-1}_j},\dots,c^{-1}_1,c_k, c_l}  
= \delta_{c_i,c^{-1}_l}\delta_{c_j,c^{-1}_k} -\delta_{c_i,c^{-1}_k}\delta_{c_j,c^{-1}_l} = (\cg^{c_i,c_l}\cg^{c_j,c_k} - \cg^{c_i,c_k}\cg^{c_j,c_l})
\]
As such $(e_{c_i} \wedge e_{c_j}, \wedge e_{c_k} \wedge e_{c_l})^\star_2
=  (\cg^{c_i,c_l}\cg^{c_j,c_k} - \cg^{c_i,c_k}\cg^{c_j,c_l})$ which is the same as \eqref{eq:(,)2G} in the Abelian case up to an overall normalisation of $-2$.

For $G=S_3$ the permutation group on three elements  and $\CC = \{u,v,w\}$, the `antisymmetric' tensor is given by $\epsilon_{uvuw} = \epsilon_{vuvw} = -\epsilon_{wuvu} = -\epsilon_{wvuv} = 1$ and their cyclic rotations under $u\to v \to w \to u$, and $d_2 = \sqrt{3}$, see \cite{MaRai}. We see that $\epsilon_{abcd}$ is $1$ when $a=c$, $d = abc = aba$ and $-1$ when $b=d$, $a = bcd = bcb$, which can be concisely written as $\epsilon_{abcd} = \delta_{a,c} \delta_{d,aba} - \delta_{b,d} \delta_{a,bcb}$.
As $\Omega$ has volume dimension 4 and we have $a=a^{-1}$ for $a\in\CC$, then 
\[
	\star (e_a\wedge e_b) \wedge e_c\wedge e_d = \sum_{f_3,f_4\in \CC} d_2^{-1} \epsilon_{ab f_{3} f_4} \epsilon_{f_4 f_3 c d} \Vol.
\]
Using the above formula for the antisymmetric tensor in terms of kronecker deltas, expanding and summing over $f_3$, $f_4$ leads to 
\[
	\star (e_a\wedge e_b) \wedge e_c\wedge e_d = d_2^{-1} (\delta_{bab,d} \delta_{b,cdc} + \delta_{aba,c} \delta_{d,cac}  - \delta_{a,d} \delta_{aba,cdc} - \delta_{b,c} \delta_{d,a} 
	)
\]
Focusing on the first term, we have $\delta_{bab,d} \delta_{b,cdc} = \delta_{a,bdb} \delta_{b,cdc}  = \delta_{a,cdcdcdc} \delta_{b,cdc} =  \delta_{a,c} \delta_{b,cdc} = \cg^{a,c} \cg^{b,c d c}$
where the relations $a^2=e$, $aba = bab$ for any $a,b \in \CC$ were used. This corresponds to the first term of \eqref{eq:(,)2G}. The rest of the terms can be simplified in the same manner to recover the full expression, up to an overall normalisation of $d_2^{-1}$. \end{proof}

Motivated by Proposition \ref{prop:(,)psi=(,)star}, we will from now use $(\ , \ )_2=(\ , \ )^\psi_2$ as the `inverse metric' on 2-forms for finite groups, as it makes sense for any metric, not only the Euclidean one, and does not rely on a central volume form. We now see how this approach via equation  \eqref{eq:(,)2G} looks in terms of the general description of $(\ , \ )_2$ for any graph as in Lemma \ref{lem:InnerProductOmega2X}. Decomposing the left-invariant 1-forms and evaluating the LHS of equation \eqref{eq:(,)2G} on $x\in G$ gives
\begin{align*}
(e_a \wedge e_b,e_c \wedge e_d)_2(x) = \sum_{\tilde x, \tilde x, y, y'} (e_{\tilde x \to \tilde x a} \wedge e_{\tilde x a \to \tilde x ab }, e_{y\to yc} \wedge e_{yc \to ycd})_2(x)
= \lambda_{x\to xa \to xab \to xabc} \delta_{e,abcd}.
\end{align*}

To compute the RHS of equation \eqref{eq:(,)2G} note that $\cg^{a,b}(x) = \frac{\delta_{a,b^{-1}}}{c_{a^{-1}(xa)}} = \lambda_{x\to xa} \delta_{a,b^{-1}}$. Equating both sides and setting $d = (abc)^{-1}$ (otherwise equation \eqref{eq:(,)2G} is trivial) and simplifying the expressions for the Kronecker deltas we have
\begin{equation}
\label{eq:weightsGroupCase}
\lambda_{x\to xa \to xab \to xabc} = \lambda_{x\to xa} \lambda_{x\to xb} (\delta_{a,c^{-1}} + \delta_{c,b^{-1}a^{-1} b} - 2 \delta_{b,c^{-1}}).
\end{equation}
Consider now the diagram
\[
\begin{tikzcd}
{} & {xb}\arrow[dd,leftrightarrow] & {} &{xab} \arrow[ll,leftrightarrow] \arrow[ld,leftrightarrow]  & {} \\
{} & {} & {xaba^{-1}} \arrow[ld,leftrightarrow]  &{} & {} \\
{x a^{-1}}  \arrow[r,leftrightarrow] & {x}  \arrow[rr,leftrightarrow] & {} & {x a} \arrow[r,leftrightarrow] \arrow[uu,,leftrightarrow]   & {xa^2}
\end{tikzcd}
\]
and assume an Ad-invariant metric with $\lambda_{x\to xa} = \lambda_{x\to xgag^{-1}} $. Going through all the possibilities for $a$, $b$, $c$ we have the following cases\\
\\
$\underline{ab \neq ba}$:
\begin{itemize}
	\item[1)] $c=a^{-1}$, $c\neq b^{-1}a^{-1}b$, $c\neq b^{-1},\quad \lambda_{x\to xa \to xab \to xaba^{-1}} = \lambda_{x \to xa} \lambda_{x \to xaba^{-1}}, \quad \begin{tikzcd}
{} & {} & {xab}\arrow[ld] \\
{} &  {xaba^{-1}}\arrow[ld] & {} \\
{x}  \arrow[rr] & {} & {xa} \arrow[uu]
\end{tikzcd}$
	\item[2)] $c\neq a^{-1}$, $c= b^{-1}a^{-1}b$, $c\neq b^{-1}, \quad \lambda_{x\to xa \to xab \to xb} = \lambda_{x \to xa}\lambda_{x \to xb}, \quad \quad \quad  \quad
	\begin{tikzcd}
{xb} \arrow[dd]  & {} & {xab} \arrow[ll] \\
{} & {} & {} \\
{x} \arrow[rr] & {} & {xa}  \arrow[uu]
\end{tikzcd}
$
	\item[3)] $c\neq a^{-1}$, $c\neq b^{-1}a^{-1}b$, $c = b^{-1}, \quad \lambda_{x\to xa \to xab \to xa} = -2 \lambda_{x \to xa}\lambda_{x \to xb}, \quad \quad \quad  
	\begin{tikzcd}
{}  & {} & {xab} \arrow[dd,shift left] \\
{} & {} & {} \\
{x} \arrow[rr,shift left] & {} & {xa}  \arrow[uu,shift left] \arrow[ll,shift left]
\end{tikzcd}
$
\end{itemize}
$\underline{ab = ba}$:
\begin{itemize}
	\item[4)] $c=a^{-1}$, $c=b^{-1}a^{-1}b$, $c\neq b^{-1},\quad \lambda_{x\to xa \to xab \to xaba^{-1}} = 2 \lambda_{x \to xa} \lambda_{x \to xaba^{-1}},  
\begin{tikzcd}
{xb = xaba^{-1}} \arrow[dd]  & {} & {xab} \arrow[ll] \\
{} & {} & {} \\
{x} \arrow[rr] & {} & {xa}  \arrow[uu]
\end{tikzcd}
$
\end{itemize}
$\underline{a = b}$:
\begin{itemize}
	\item[5)] $c=a^{-1}, \quad \quad \quad \quad \quad \quad \quad \quad \quad \quad  \lambda_{x\to xa \to xa^2 \to xa} = 0, \quad \quad \quad \quad\quad    \quad \quad \quad \quad  
\begin{tikzcd}
{x} \arrow[r,shift left] & {xa} \arrow[r,shift left]  \arrow[l,shift left] & {x a^2}  \arrow[l, shift left]
\end{tikzcd}
$
\end{itemize}
$\underline{a = b^{-1}}$:
\begin{itemize}
	\item[6)] $c = a, \quad \quad\quad \quad \quad \quad \quad \quad \quad \quad \quad 
	\lambda_{x\to xa \to x \to xa} = -2 \lambda_{x \to xa}\lambda_{x \to xa^{-1}}, 
	\quad \quad \quad \quad  
\begin{tikzcd}
{x} \arrow[r,shift right]\arrow[r,shift left=3] & {xa} \arrow[l,shift right]\arrow[l,shift left=3]  
\end{tikzcd}
$
\item[7)] $c=a^{-1}, \quad \quad \quad \quad \quad \quad \quad \quad \quad \quad 
	\lambda_{x\to xa \to x \to xa} = 2 \lambda_{x \to xa}\lambda_{x \to xa^{-1}}, 
	\quad \quad \quad \quad  \quad
\begin{tikzcd}
{xa^{-1}} \arrow[r,shift right] & {x} \arrow[r,shift right]\arrow[l,shift right] & {xa} \arrow[l,shift right]
\end{tikzcd}
$
\end{itemize}

 Cases 1) and 2) are what we expect from the solution in Proposition~\ref{eq:SolLambda(,)2} due to $\lambda_{x\to xb} = \lambda_{x\to xaba^{-1}}$ with $A_{x,xab} = 1$, and 3) has factor -2 such that the conditions on $\lambda_{x\to y\to z\to y' }$ are satisfied. Case 4) has a 2 as this accounts for the loops 1) and 2) in the Abelian case. Cases 5) can be seen as the collapse of both 3) and 4) when $a=b$, hence why it vanishes.  Cases 6) and 7) can be seen as the collapse of 3) and 4) respectively, hence the factor -2 and 2. All in all, we have
\begin{equation*}
  A_{x,xab} =
    \begin{cases}
      1 \quad  \text{ for } ab \neq ba,\\
      2 \quad  \text{ for } ab = ba, a\neq b,\\
      0 \quad   \text{ for } a=b.
    \end{cases}       
\end{equation*} 

\section{Graph gauge theory with universal calculus on the gauge group}
\label{sec:GraphUniCalcGaugeGroup}

Gauge theory on tensor product bundles with any differential algebra $B$ as base and any quantum group fibre with universal calculus looks similar in structure to the noncommutative $U(1)$-gauge theory case and is covered, for example, in \cite[Chap 5]{BegMa}. We specialise to the case of a finite group Hopf algebra $\C G$ as `gauge quantum group', defined a having basis $G$ with their group product and the coalgebra structure $\Delta g=g\tens g, \epsilon g=1, S g=g^{-1}$ for all basis elements $g\in G$, all then extended linearly. We also have $g^*=g^{-1}$ extended anti-linearly. This passage from elements of $G$ to linear combinations is parallel to the passage from classical to quantum computing. 
We define $\C G^+=\ker\epsilon$ the subalgebra of counit 0. This is $|G|-1$-dimensional with basis $\{g-e\}$ for $g\in G\setminus\{e\}$. In these terms, the general set up reduces to 
\begin{align*}
    &\CA = \{\alpha \in \C G^+ \tens \Omega^1\ | \  \alpha^* = -\alpha \},&
	&\CG = \{\gamma\in  \C G \tens B\ |\ (\epsilon \tens \id) \gamma = 1,\  \gamma \gamma^* = \gamma^* \gamma = 1\}.
\end{align*}
The latter forms a group and acts on the set of the former by transforming a connection $\alpha$ to
\[ \alpha^\gamma=\gamma\alpha\gamma^* + \gamma (\id\tens \extd) \gamma^*.\]
The curvature of a connection $\alpha$ is likewise a `Lie-algebra'-valued two form defined as 
\[ F(\alpha) \in \C G^+ \tens \Omega^2,\quad F(\alpha) = (\id \tens \extd) \alpha + \alpha \wedge \alpha\]
and as such depends on the given exterior algebra $\Omega$ of $B$ to degree 2.  Here, the product $\alpha \wedge \alpha$ is inherited from the usual product on $\C G^+$ and the given wedge product on $\Omega$ and $F(\alpha)$ inherits the property $F(\alpha)^*=-F(\alpha)$ and transforms to $F(\alpha^\gamma)=\gamma F(\alpha)\gamma^*$. 

As a remark, there are several definitions for gauge transformations that one could take. For example in \cite{HanLan} the authors suggest that one should impose the extra condition that $\gamma \in \Hom(\C(G),B) \simeq \C G \tens B$ is an algebra map. We do not consider this restriction in the case of the universal calculus on $G$, but will need to restrict $\gamma$ to be a differential algebra map to make sense of gauge theory with a non-universal calculus on $G$ in Section \ref{secnonu}.  Finally, we will  be interested in a block decomposition of the above $\C G$-gauge theory as follows. 

\begin{proposition}\label{propdec} $\C G$-valued gauge theory for the universal calculus as above can be decomposed as a product of $U(d_\rho)$-gauge theories over base $B$ in the sense of Section~\ref{secUd} for the nontrivial unitary irreps $\rho\ne 1$ of $G$ with dimension $d_\rho$. 
\end{proposition}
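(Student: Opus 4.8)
The plan is to exploit the Peter--Weyl/Wedderburn decomposition of the group algebra of a finite group. Since $G$ is finite and every irrep may be taken unitary, one has a $*$-algebra isomorphism
\[ \C G \cong \bigoplus_\rho M_{d_\rho}(\C),\quad g\mapsto (\rho(g))_\rho, \]
where the $*$ on the left (with $g^*=g^{-1}$) corresponds on the right to the Hermitian conjugate $\dagger$ in each block, precisely because $\rho(g^{-1})=\rho(g)^\dagger$ for unitary $\rho$. This is exactly the $*$-structure used on $M_d(\C)$ in Section~\ref{secUd}. Under this isomorphism the counit $\epsilon$ is the projection onto the trivial block $M_1(\C)=\C$, so that $\C G^+=\ker\epsilon$ becomes the complementary ideal $\bigoplus_{\rho\ne 1}M_{d_\rho}(\C)$. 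The crucial structural fact I would emphasise is that this is a decomposition into two-sided ideals, so the product of $\C G$, and hence every algebraic operation built from it, is computed block by block.

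First I would decompose the space of connections. Tensoring the isomorphism with $\Omega^1$ gives $\C G^+\tens\Omega^1\cong\bigoplus_{\rho\ne1}\big(M_{d_\rho}(\C)\tens\Omega^1\big)$, so a connection $\alpha$ writes uniquely as a sum of blocks $\alpha^{(\rho)}\in M_{d_\rho}(\C)\tens\Omega^1$. Because the isomorphism is a $*$-map, the condition $\alpha^*=-\alpha$ transfers to $(\alpha^{(\rho)})^*=-\alpha^{(\rho)}$ in each block, i.e. the anti-Hermitian condition $(\alpha^{(\rho)}_{ji})^*=-\alpha^{(\rho)}_{ij}$ defining a noncommutative $U(d_\rho)$ gauge field. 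Similarly I would decompose $\gamma\in\CG$ into blocks $\gamma^{(\rho)}\in M_{d_\rho}(\C)\tens B$ via $\C G\tens B\cong\bigoplus_\rho M_{d_\rho}(\C)\tens B$; the constraint $(\epsilon\tens\id)\gamma=1$ freezes the trivial block to $\gamma^{(1)}=1_B$, while $\gamma\gamma^*=\gamma^*\gamma=1$ becomes $\gamma^{(\rho)}(\gamma^{(\rho)})^*=(\gamma^{(\rho)})^*\gamma^{(\rho)}=\id\tens1$ in each block---exactly the unitarity conditions of Section~\ref{secUd}. Hence $\CG$ is the direct product, over $\rho\ne1$, of the $U(d_\rho)$ gauge groups, with no gauge freedom in the trivial sector.

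Next I would check that the dynamical data respect the block decomposition. Since the blocks are ideals with $M_{d_\rho}(\C)\cdot M_{d_{\rho'}}(\C)=0$ for $\rho\ne\rho'$, the wedge-square $\alpha\wedge\alpha$ and the products $\gamma\alpha\gamma^*$ split as $\bigoplus_\rho$ of the corresponding block expressions, and $(\id\tens\extd)$ acts within each block. Thus the curvature decomposes as $F(\alpha)=\bigoplus_{\rho\ne1}F(\alpha^{(\rho)})$ with $F(\alpha^{(\rho)})_{ij}=\extd\alpha^{(\rho)}_{ij}+\alpha^{(\rho)}_{ik}\wedge\alpha^{(\rho)}_{kj}$, and the gauge action becomes $(\alpha^\gamma)^{(\rho)}=\gamma^{(\rho)}\alpha^{(\rho)}(\gamma^{(\rho)})^*+\gamma^{(\rho)}(\id\tens\extd)(\gamma^{(\rho)})^*$, matching the formulas of Section~\ref{secUd} verbatim. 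The frozen trivial block is consistent: there $\alpha^{(1)}=0$ and $\gamma^{(1)}=1$, so both the homogeneous and the inhomogeneous Maurer--Cartan terms vanish using $\extd1=0$, confirming $\alpha^\gamma\in\C G^+$.

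I expect the only real subtlety---and hence the step to state carefully---to be the compatibility of the two $*$-structures together with the bookkeeping of the trivial representation: one must verify that $g^*=g^{-1}$ corresponds to $\dagger$ under the unitary Wedderburn isomorphism, so that anti-Hermiticity and unitarity transfer faithfully, and that the counit is exactly the projection onto the trivial block, so that passing to $\C G^+$ removes precisely the $\rho=1$ factor and no gauge modes survive there. Once these identifications are in place, the remaining equalities are a direct term-by-term comparison, since every operation is built from the block-diagonal algebra structure.
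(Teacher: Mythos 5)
Your proposal is correct and takes essentially the same route as the paper: both rest on the Peter--Weyl/Wedderburn block decomposition of $\C G$, the compatibility of $g^*=g^{-1}$ with the blockwise Hermitian conjugate via unitarity of the irreps, and the identification of the counit with projection onto the trivial block so that $\alpha$ has no $\rho=1$ component while $\gamma$ is frozen to $1$ there. The only cosmetic difference is that the paper realises the decomposition internally through the central projections $\pi_\rho$ (with $\epsilon(\pi_\rho)=\delta_{\rho,1}$, $\pi_\rho^*=\pi_\rho$), whereas you invoke the isomorphism $g\mapsto(\rho(g))_\rho$ directly and argue via the two-sided ideal structure; these are the same decomposition and the verifications match step for step.
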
 
\proof We use the Peter-Weyl decomposition whereby 
\[  \C G\simeq\oplus_{\rho\in \hat G}M_{d_\rho}(\C)\] 
provided by an orthogonal set of central projections that add up to 1, 
\[ \pi_\rho={d_\rho\over |G|}\sum_{g\in G} {\rm Tr}(\rho(g^{-1}))g\in \C G;\quad \sum_{\rho\in\hat G}\pi_\rho=1,\quad \pi_\rho\pi_{\rho'}=\delta_{\rho,\rho'}\pi_\rho.\]
 The linearly extended $\rho:\C G\to M_{d_\rho}(\C)$ then restricts to an isomorphism $(\C G)\pi_\rho\simeq M_{d_\rho}(\C)$ and to zero on other components $(\C G)\pi_{\rho'}$. Going the other way, a matrix $a_{ij}$ maps back to the element 
\[ {d_\rho\over|G|}\sum_{g\in G}\rho(g^{-1})_{ji} a_{ij}\, g\]
 due to the orthogonality of matrix elements of irreps. Next, from the orthogonality of characters and that our irreps are unitary, one has respectively that
 \[ \epsilon(\pi_\rho)=\delta_{\rho, 1},\quad \pi_\rho^*=\pi_\rho\]
 with respect to the counit and $*$ operation on $\C G$. We now expand a
 general $\alpha\in \CA$ and $\gamma\in \CG$ as
\[ \alpha=\sum_{\rho\ne 1}\alpha_\rho, \quad \gamma=\pi_1+\sum_{\rho\ne 1} \gamma_\rho ,\]
where $\alpha_\rho=\alpha(\pi_\rho\tens 1)$ (i.e., inserting $\pi_\rho$ in the left factor) and $\gamma_\rho= \gamma(\pi_\rho \tens 1)$. There is no $\pi_1$ term for $\alpha$ so that $(\epsilon\tens\id)\alpha=0$, and the coefficient of $\pi_1$ in $\gamma$ is fixed by $(\epsilon\tens\id)\gamma=1$.  By the above properties of the $\pi_\rho$ under $*$, and using that they are a complete central set of orthogonal projections (so that the above decompositions are unique) it is easy to see that $\alpha^*=-\alpha$ and $\gamma\gamma^*=\gamma^*\gamma=1$ then translate to  $(\alpha_\rho)^*=-\alpha_\rho$ and $\gamma_\rho(\gamma_\rho)^*=(\gamma_\rho)^*\gamma_\rho=1$. Finally, each component maps over under the block decomposition to elements of $M_{d_\rho}(\C)\tens\Omega^1$ and $M_{d_\rho}(\C)\tens B$ respectively with corresponding properties under $*$, i.e. to the $U(d_\rho)$-gauge theory as at the start of Section~\ref{secUd}. \endproof 

This analysis works for any differential algebra $B$, but in the case where $B$ is functions on a set, the   $\gamma_\rho$ appear as $U(d_\rho)$-valued functions since we can analyse the unitarity requirement at each point. These can then be viewed together as valued in the product 
\[  \C G^u\simeq\prod_{\rho\ne 1} U(d_\rho).\]
In the continuum case, the $\alpha_\rho$ appear as usual 1-forms valued in the anti-hermitian elements of $M_{d_\rho}(\C)$, i.e. usual gauge fields for the relevant $U(d_\rho)$ factor.   In what follows, we will consider both descriptions of our fields for $\C G$-valued gauge theory, as labelled by $G\setminus\{e\}$ or by $\hat G\setminus \{1\}$, where $\hat G$ denotes the set of unitary irreps and 1 denotes the trivial irrep.

\subsection{Specialisation to the graph case}\label{secgraphgauge}

In our case $B=\C(X)$ is the space of functions on the vertices of a graph with basis $\delta_x$ and $\Omega^1=\C E$ spanned  by the arrows. Taking the coefficients in this basis, we can think of a 1-form as a function on the set $E$ of arrows.  So then
\[ \CA=\{\alpha\in C(E, \C G)\ |\  \epsilon(\alpha_{x\to y})=0,\quad \alpha_{x\to y}^*=\alpha_{y\to x}\}=\{\alpha\in C(E, \C G^+)\ |\  \alpha_{x\to y}^*=\alpha_{y\to x}\},\]
\[\CG=\{\gamma\in C(X,\C G)\ |\ \epsilon(\gamma_x)=1,\quad \gamma_x^*\gamma_x=\gamma_x\gamma_x^*=1\}\]
for all $x\in X$ and $x\to y\in E$. This is equivalent to our previous description if we write $\alpha=\sum_{x\to y}\alpha_{x\to y}\tens e_{x\to y}$ and $\gamma=\sum_x \gamma_x\tens \delta_x$. We see that in our function description,   
$\alpha$  at each arrow is valued in Hermitian elements of $\C G^+$, and from the point of view of NCG the latter is a basis of the finite-Lie algebra dual to the universal calculus on $G$. Similarly $\gamma_x$ at each vertex $x$ is valued in the group of unitary elements of $\C G$, which is the group
\[\C G^u \coloneqq \{h\in \C G\ |\ \epsilon(h)=1,\ h^* h=h h^*=e\}.\]
Then we can say more compactly that
\[ \CA=\{\alpha\in C(E, \C G^+)\ |\  \alpha_{x\to y}^*=\alpha_{y\to x}\},\quad \CG=C(X, \C G^u).\]
We can also expand the values of our fields as, in each case, $|G|-1$ independent coefficients
\[ \alpha_{x\to y}=\sum_{g\in G\setminus\{e\}} \alpha_{x\to y}^g (g-e),\quad \gamma_x=e+\sum_{g\in G\setminus\{e\}} \gamma_x^g (g-e); \quad \gamma_x^e:=1-\sum_{g\in G\setminus\{e\}}\gamma_x^g\]
subject to the Hermitian and unitarity requirements
\[ (\alpha_{x\to y}^g)^*=\alpha^{g^{-1}}_{y\to x},\quad \sum_{h\in G} \gamma^h_x (\gamma^{g^{-1}h}_x)^* = \sum_{h\in G} (\gamma^h_x)^* \gamma^{hg}_x = \delta_{e,g}, \quad \gamma_x^*=e+\sum_{g\in G\setminus\{e\}}(\gamma_x^{g^{-1}})^*(g-e). \]
For the curvature, what constitutes a basis of $\Omega^2$ depends on which exterior algebra we take. However, we can still write it as
\[ F(\alpha)=\sum_{x\to y\to z}F(\alpha)_{x\to y\to z}\tens e_{x\to y} \wedge e_{y\to z}\]
for a certain expression for $F(\alpha)_{x\to y\to z}$ for every 2-step path, even if this is not uniquely determined due to the set of 2-steps not being a basis.
 
\begin{proposition}\label{alphaguage} The  gauge transformation and curvature in component form can be written as
\[ \alpha^\gamma_{x\to y}=\gamma_{x} \alpha_{x\to y}  \gamma^*_{y} + \gamma_x  \gamma^*_y - e,\]
\[ F(\alpha)_{x\to y\to z}=\alpha_{x\to y}+\alpha_{y\to z}-\mu \alpha_{x\to z} + \alpha_{x\to y}\alpha_{y\to z}+(1-\mu)\delta_{x\ne z}e; \quad F(\alpha^\gamma)_{x\to y\to z}=\gamma_x F(\alpha)_{x\to y\to z}\gamma_z^*\]
where  $\mu = 1$ for $\Omega_{max}$, $\Omega_{med'}$ and zero for $\Omega_{med}$  and $\Omega_{min}$ and $\delta_{x\ne z}$ means 1 if $x\ne z$ and zero otherwise.   \end{proposition}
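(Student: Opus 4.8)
The plan is to obtain all three formulas by expanding the abstract expressions $\alpha^\gamma=\gamma\alpha\gamma^*+\gamma(\id\tens\extd)\gamma^*$ and $F(\alpha)=(\id\tens\extd)\alpha+\alpha\wedge\alpha$ in the bases $\alpha=\sum_{x\to y}\alpha_{x\to y}\tens e_{x\to y}$ and $\gamma=\sum_x\gamma_x\tens\delta_x$, and then reading off the coefficient of each arrow or $2$-step path. The only ingredients needed are the graph bimodule rules $\delta_w\,e_{u\to v}=\delta_{w,u}e_{u\to v}$ and $e_{u\to v}\,\delta_w=\delta_{w,v}e_{u\to v}$, the differentials $\extd\delta_y=\sum_{u\to v}(\delta_{y,v}-\delta_{y,u})e_{u\to v}$ and the exterior-algebra-dependent $\extd e_{u\to v}$ recalled above, composability of the wedge, and the unitarity $\gamma_x\gamma_x^*=\gamma_x^*\gamma_x=e$.

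For the gauge transformation I would compute the two summands separately. In $\gamma\alpha\gamma^*$ the left factor selects $\gamma_x$ through $\delta_w e_{x\to y}=\delta_{w,x}e_{x\to y}$ and the right factor selects $\gamma_y^*$ through $e_{x\to y}\delta_{w'}=\delta_{w',y}e_{x\to y}$, giving coefficient $\gamma_x\alpha_{x\to y}\gamma_y^*$. In $\gamma(\id\tens\extd)\gamma^*$, substituting $\extd\delta_{w'}$ and collecting the coefficient of $e_{x\to y}$ gives $\gamma_x(\gamma_y^*-\gamma_x^*)=\gamma_x\gamma_y^*-e$ after unitarity. Adding the two yields the first formula. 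I would record the equivalent statement $e+\alpha^\gamma_{x\to y}=\gamma_x(e+\alpha_{x\to y})\gamma_y^*$, since this covariance of $u_{x\to y}:=e+\alpha_{x\to y}$ under $\gamma_x(\,\cdot\,)\gamma_y^*$ will be the organising principle for the curvature covariance.

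For the curvature, $\alpha\wedge\alpha$ is immediate: composability ($e_{u\to v}\wedge e_{p\to q}\ne0$ needs $v=p$) gives coefficient $\alpha_{x\to y}\alpha_{y\to z}$ of $e_{x\to y}\wedge e_{y\to z}$. Substituting $\extd e_{u\to v}$ into $(\id\tens\extd)\alpha$, the inner part $\{\theta,e_{u\to v}\}$ produces exactly $\alpha_{x\to y}$ (from the piece $e_{u\to v}\wedge e_{v\to t}$ with $u\to v=x\to y$) and $\alpha_{y\to z}$ (from $e_{s\to u}\wedge e_{u\to v}$ with $u\to v=y\to z$); for $\Omega_{max},\Omega_{med'}$ the additional term $-\sum_{i:u\to i\to v}e_{u\to i}\wedge e_{i\to v}$ contributes $-\alpha_{x\to z}$ exactly when $x\to z\in E$, and is absent for $\Omega_{med},\Omega_{min}$. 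Writing $\alpha_{x\to z}=0$ whenever $x\to z\notin E$ packages this as $-\mu\alpha_{x\to z}$, reproducing everything except the $(1-\mu)\delta_{x\ne z}e$ term.

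That remaining term, together with the covariance statement, is where I expect the real work. The key point is that the $e_{x\to y}\wedge e_{y\to z}$ are only a spanning set modulo the relations $\sum_{y:x\to y\to z}e_{x\to y}\wedge e_{y\to z}=0$ of \eqref{eq:RelationsExtAlgGraph}, so $F(\alpha)_{x\to y\to z}$ is defined only up to a $y$-independent multiple of $e$ over any admissible $(x,z)$-family; for $\Omega_{med},\Omega_{min}$ this relation holds for all $x\ne z$, so adding $(1-\mu)\delta_{x\ne z}e$ is a legitimate change of representative, and the same relation shows $(\epsilon\tens\id)F=0$, i.e. $F(\alpha)\in\C G^+\tens\Omega^2$. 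I would then establish covariance by substituting the first formula into $F(\alpha)_{x\to y\to z}$ and simplifying with unitarity: the natural combination is $u_{x\to y}u_{y\to z}=e+\alpha_{x\to y}+\alpha_{y\to z}+\alpha_{x\to y}\alpha_{y\to z}$, which transforms cleanly by $\gamma_x(\,\cdot\,)\gamma_z^*$, and the stated representative is this adjusted by $-\mu\alpha_{x\to z}$ together with the $-e$ needed so that the $x=z$ case respects $\gamma_x e\gamma_x^*=e$. The final verification is that the leftover discrepancy between $F(\alpha^\gamma)_{x\to y\to z}$ and $\gamma_x F(\alpha)_{x\to y\to z}\gamma_z^*$ is always $y$-independent and proportional to $\gamma_x\gamma_z^*-e$, hence vanishes either because $x=z$ forces $\gamma_x\gamma_x^*=e$, or because, when $x\ne z$ with $x\not\to z$, it is annihilated on assembling against the relation present for exactly those $(x,z)$. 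As a shortcut one could instead quote the abstract identity $F(\alpha^\gamma)=\gamma F(\alpha)\gamma^*$ and translate it into components, where $e_{x\to y}\wedge e_{y\to z}$ starting at $x$ and ending at $z$ forces the flanking $\gamma_x$ and $\gamma_z^*$; but this only yields covariance modulo relations, so the direct computation is what pins down the chosen representative.
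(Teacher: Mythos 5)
Your proposal is correct and follows essentially the same route as the paper's proof: expand $\alpha^\gamma$ and $F(\alpha)$ in arrow/path components via the graph bimodule rules, obtain the bare curvature coefficients, observe that these are only determined modulo the degree-2 relations, and add the $(1-\mu)\delta_{x\ne z}\,e$ term as the covariantising adjustment. If anything you are more careful than the paper on one point: the paper's intermediate identity $F_{bare}(\alpha^\gamma)_{x\to y\to z}=\gamma_x F_{bare}(\alpha)_{x\to y\to z}\gamma_z^*+(1-\mu)(\gamma_x\gamma_z^*-e)$ silently fails when $\mu=1$ and $x\not\to z$ with $x\ne z$ (the discrepancy there is $\gamma_x\gamma_z^*-e$, not zero), and your observation that this residual $y$-independent term is annihilated on assembling against the relation $\sum_{y}e_{x\to y}\wedge e_{y\to z}=0$, which holds precisely for those $(x,z)$ in $\Omega_{max}$ and $\Omega_{med'}$, is the correct way to close that case.
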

\begin{proof} This is a special case of quantum group gauge theory. To compute the coefficients $\alpha^\gamma_{x\to y}$ recall that $\delta_x \extd \delta_y = e_{x\to y}$ for $x\to y$ and 
\[
    \delta_x \extd \delta_x 
    = \delta_x \sum_{w\to z}\left( \delta_x(z) - \delta_x(w)\right)e_{w\to z}
    = -\sum_{y:x\to y} e_{x\to y}.
\]
Writing the action of $\gamma$ on $\alpha$ in components then gives
\[
   \alpha^\gamma
    = \gamma \alpha \gamma^* + \gamma (\id\tens \extd) \gamma^*
    = \sum_{x\to y} \gamma_{x} \alpha_{x\to y} \gamma^*_{y} \tens  e_{x\to y}
    + \sum_{x\to y} \gamma_x \gamma^*_y \tens  \delta_x \extd \delta_y = \sum_{x\to y} (\gamma_{x} \alpha_{x\to y}  \gamma^*_{y} + \gamma_x  \gamma^*_y - e) \tens e_{x\to y}.
\]
which translates to the stated formula for the coeffcients of $e_{x\to y}$. We separated the $\gamma(\id\tens\extd)\gamma^*$ term into a sum over arrows $x\to y$ or  $x=y$ or pairs of vertices ${x,y}$ with no arrows between them. The latter vanishe as $\delta_x \extd \delta_y = 0$ for $x\neq y$, $x \not \to y$. We then use the formulas for $\delta_x\extd \delta_y$, $\gamma_x \gamma^*_x = e$,  keeping in mind that $\sum_x \sum_{y:x\to y} = \sum_{x\to y}$. 

For the curvature, there is  no problem to write down expressions from the previous description concretely via
\begin{align*}
    \alpha \wedge \alpha &= \sum_{x\to y \to z} \alpha_{x\to y}\alpha_{y\to z} \tens  e_{x\to y}\wedge e_{y\to z},\\
     (\id\tens \extd)\alpha&=\sum_{x\to y\to z} (\alpha_{x\to y}+\alpha_{y\to z})\tens e_{x\to y}\wedge\ e_{y\to z}-\mu \sum_{\substack{x\to y\to z: \\ x\to z }} \alpha_{x\to z} \tens e_{x\to y} \wedge e_{y\to z},
\end{align*}
where $\sum_{\substack{x\to y\to z: \\ x\to z }} $ is a sum over all paths $x\to y\to z$ with $x\to z$.  This leads us to define 
\begin{equation}\label{Fbare} F_{bare}(\alpha)_{x\to y\to z}=\alpha_{x\to y}+\alpha_{y\to z}-\mu \alpha_{x\to z} + \alpha_{x\to y}\alpha_{y\to z}\end{equation}
but remember that these coefficients are not uniquely determined and we can add something that does not change $F(\alpha)$ for any of the four calculi.  We now compute this using the formula for $\alpha^\gamma_{x\to y}$ and find that $F_{bare}(\alpha^\gamma)_{x\to y\to z}=\gamma_xF_{bare}(\alpha)_{x\to y\to z}\gamma_z^*+ (1-\mu)(\gamma_x\gamma_z^*-e)$. (This extra term does not affect the 2-form and we still have $F(\alpha^\gamma)=\gamma F(\alpha)\gamma^*$ as we must do by general theory). In light of this, we add a compensating `covariantising term' as in the statement of the proposition, which does not change $F(\alpha)$ for any of the calculi but ensures that $F(\alpha)_{x\to y\to z}$ transforms as expected under a gauge transformation. \end{proof}

It is easy to see that 
\[ F(\alpha)_{x\to y\to z}^*= F(\alpha)_{z\to y\to x},\quad \epsilon(F_{bare}(\alpha)_{x\to y\to z})=0,\quad \epsilon(F(\alpha)_{x\to y\to z})=(1-\mu)\delta_{x\ne z}, \]
where the first implies $F(\alpha)^*=-F(\alpha)$ as it must (remembering the extra minus sign for $*$ on a product of 1-forms). We also see that the bare coefficents are better behaved from the point of $\epsilon$ for $\Omega_{med},\Omega_{min}$ and that either way $F(\alpha)\in \C G^+\tens\Omega^2$ as it must (as a `Lie algebra' valued 2-form for the universal calculus on $G$). 

For our other description of the fields,  Proposition~\ref{propdec} tells us that  connections and gauge transformations have an expansion by $\rho$ according to 
\begin{equation}\label{rhodec} \alpha_{x\to y}=\sum_{\rho\ne 1}\alpha_{x\to y}\pi_\rho,\quad \gamma_x=\pi_1+\sum_{\rho\ne 1} \gamma_x\pi_\rho \end{equation}
using the projections $\pi_\rho$.  We view the $\rho$ components  $\alpha_{x\to y}\pi_\rho$ and $\gamma_x\pi_\rho$ for each $\rho\ne 1$ as matrices $\alpha_{x
 \to y}^\rho$ and $\gamma_x^\rho$ respectively by the block decomposition. The properties with respect to $*$ the become
 \[ (\alpha_{x\to y}^\rho)^* = \alpha^\rho_{y\to x},\quad \gamma_x^\rho(\gamma_x^\rho)^* =(\gamma_x^\rho)^*\gamma_x^\rho=1\]
 as a noncommutative $U(d_\rho)$-gauge theory for each $\rho\ne 1$ (where the calculus is noncommutative).  The case $G=\Z_2$ is simple enough for us to see how this works more explicitly. 

\begin{example} For $G=\Z_2=\{e,g\}$, where $g^2=e$, we have only one nontrivial irrep $\rho(g)=-1$. This is of dimension 1, so $\C \Z_2^u=U(1)$ and $\C\Z_2$-valued gauge theory can be identified with noncommutative $U(1)$-gauge theory specialised to a graph. To see more explicitly how the identification  of the two gauge theories works, we start with the $\C \Z_2$ point of view where 
\[ \gamma_x=e+ \gamma_x^g (g-e),\quad \alpha_{x\to y}=\alpha_{x\to y}^g(g-e)\]
since $|G|=2$ so $\C G^+$ is 1-dimensional and spanned by $g-e$. Then using $g^*=g^{-1}=g$, the requirement   $\gamma_x^*\gamma_x=\gamma_x\gamma_x^*=e$  corresponds to the condition  
\[ |\gamma_x^g|^2= {\rm Re}(\gamma_x^g)\]
One can solve this in terms of a phase at each $x$, because in the case of an Abelian group as here, we have $\C \Z_2\simeq \C(\Z_2)$ as Hopf $*$-algebras by Fourier transform and, working in $\C(\Z_2)$, unitary elements are a pair of phases; the counit   condition fixes one of these to be 1, leaving the other phase as a free angle. For $\alpha$,  the requirement is  $(\alpha^g_{x\to y})^*=\alpha^g_{y\to x}$. By contrast, these same elements can be expanded differently with one non-trivial representation $\rho$. This and the relation between the two sets of coefficients are 
\[ \alpha_{x\to y}=\alpha^\rho_{x\to y}\pi_{-1},\quad \gamma_x=\pi_1+\gamma_x^\rho\pi_{-1};\quad \alpha_{x\to y}^\rho=-2\alpha_{x\to y}^g,\quad \gamma_x^\rho=1-2\gamma^g_x.\]
One can check that the condition on the real part of $\gamma^g_x$ translates to $|\gamma_x^\rho|=1$, i.e. $\gamma_x^\rho=e^{\imath\theta_x}$ for some phase angle at each point. We also have $(\alpha_{x\to y}^\rho)^*=\alpha_{y\to x}^\rho$ of the same form as for $\alpha_{x\to y}^g$. If we let $\tilde\alpha=\sum_{x\to y}\alpha^\rho_{x\to y}e_{x\to y}\in \Omega^1$ be the corresponding 1-form from the point of view of noncommutative $U(1)$-gauge theory, then this corresponds to $\tilde\alpha^*=-\tilde\alpha$ as expected there. Finally, under a gauge transformation of $\alpha$ in the $\C\Z_2$-gauge theory,  
\[ \alpha_{x\to y}^\gamma= (\pi_1+e^{\imath\theta_x} \pi_{-1})\alpha^\rho_{x\to y}\pi_{-1}(\pi_1+e^{-\imath\theta_y} \pi_{-1})  + (\pi_1+e^{\imath\theta_x} \pi_{-1})(\pi_1+e^{-\imath\theta_y} \pi_{-1})-e=  (  e^{\imath(\theta_x-\theta_y)}\alpha^\rho_{x\to y}+ e^{\imath(\theta_x-\theta_y)}-1)\pi_{-1},\]
which tells us how the coefficient $\alpha_{x\to y}^\rho$ transforms.  This is exactly what we would have from noncommutative $U(1)$-gauge theory for the function $\tilde\gamma=e^{\imath\theta}$ as the corresponding gauge transform, where $\tilde\alpha^{\tilde\gamma}=\tilde\gamma\tilde\alpha{\tilde\gamma}^*+\tilde\gamma\extd {\tilde\gamma}^*$ implies that the coefficient of $e_{x\to y}$ transforms to  
\[ \tilde\alpha^{\tilde\gamma}_{x\to y}= e^{\imath(\theta_x-\theta_y)}\tilde\alpha_{x\to y}+ e^{\imath(\theta_x-\theta_y)}-1.\]
\end{example}

\subsection{Extended holonomy reformulation}
\label{sechol}

Motivated by LGT, we now adapt our description of $\CA$ in a manner related to the holonomy along each arrow as equivalent data. Thus, the \emph{extended holonomy} equivalent to a connection $\alpha$ is defined from the tensor and the `function' point of view respectively as
\begin{align*}
    u \coloneqq e\tens \theta + \alpha,\quad u_{x\to y}:=e+ \alpha_{x\to y}
\end{align*}
with $\theta = \sum_{x\to y\in E} e_{x\to y}$ is the inner element of $\Omega^1$. In classical geometry the holonomy along a path is given by the path ordered integral of the exponential of the integral of the connection along that path. Here, we see that the components $u_{x\to y}$ are that expression up to first order along the arrow $x\to y$. Since $\alpha_{x\to y}\tens  e_{x\to y}$ is already a 1-form, we do not need a metric or length of an arrow to parallel transport along it. The space of holonomies is then
\[     \CU = \{u\in \C G \tens \Omega^1\  \vert\  u^* = -u,\  (\epsilon\tens \id) u = \theta\}=\{u\in C(E,\C G)\ |\ \epsilon(u_{x\to y})=1,\  u_{x\to y}^*=u_{y\to x}\}\] 
for all $x\to y$, and is bijective to $\CA$. Here, the counit and reality conditions are derived from those of $\alpha$. 

The gauge transformation of the $u$ is derived from that of $\alpha$ and comes out as
\[ u^\gamma =  e\tens \theta + \alpha^\gamma =  \gamma u \gamma^* , \quad u^\gamma_{x\to y}=  \gamma_x u_{x\to y}\gamma^*_y\] 
 which is as one would expect for holonomy, and simpler (but equivalent) to working with $\alpha$ directly. 
 In the first version we use that $\extd \gamma^*=[\theta, \gamma^*]$ as the calculus is inner to write $\alpha^\gamma = \gamma (e\tens \theta + \alpha)\gamma^* - e\tens \theta$ from which the above follows, while for the coordinate form one can also use Proposition~\ref{alphaguage}.  We can also expand
  \[ u_{x\to y}=e+\sum_{g\in G\setminus\{e\}} u^g_{x\to y}(g-e),\quad u^e:=1-\sum_{g\in G\setminus\{e\}}u_{x
 \to y}^g,\quad (u^g_{x\to y})^*=u^{g^{-1}}_{y\to x}\]
 parallel to our description of $\gamma$, but now with the Hermitian requirement as stated. A special case is $u_{x\to y}\in G$, which corresponds to just one of the $u_{x\to y}^g=1$ (including the case $u_{x\to y}^e=1$) and all the rest zero. This special case corresponds to LGT and the difference more generally is that we allow linear combinations. We can also expand with respect to the Peter-Weyl decomposition as 
 \[ u_{x\to y}=\pi_1+ \sum_{\rho\ne 1}u^\rho_{x\to y}\pi_\rho,\quad u^\rho_{x\to y}\in M_{d_\rho}(\C),\quad (u^\rho_{x\to y})^\dagger=u^\rho_{y\to x}.\]
 
\begin{proposition}\label{prop:uFu}
The curvature of a connection $\alpha \in \CA$ with extended holonomy $u \in \CU$ is
\[F(\alpha)= u\wedge u - e\tens \theta \wedge \theta - \mu \sum_{\substack{x\to y\to z: \\ x\to z}} (u_{x\to y}-e)\tens  e_{x\to y}\wedge e_{y\to z}
\]
with the further simplification that $\theta\wedge\theta=0$ in $\Omega_{min}$. In component form $F(\alpha)=\sum_{x\to y\to z} F^u_{x\to y\to z}\tens e_{x\to y}\wedge e_{y\to z}$ with
\[ F^u_{x\to y\to z}:= u_{x\to y}u_{y\to z} -\mu u_{x\to z}-\delta_{x,z}e;\quad F^{u^\gamma}_{x\to y\to z}=\gamma_x F^u_{x\to y\to z}\gamma_z^*,\]
where we omit $u_{x\to z}$ if there is no arrow $x\to z$.  
\end{proposition}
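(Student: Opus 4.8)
The plan is to derive both the tensor and the component expressions from the single substitution $\alpha = u - e\tens\theta$ (equivalently $\alpha_{x\to y}=u_{x\to y}-e$) into the curvature $F(\alpha)=(\id\tens\extd)\alpha+\alpha\wedge\alpha$. Throughout I would keep in mind that the two-forms $e_{x\to y}\wedge e_{y\to z}$ are not linearly independent but subject to the relations \eqref{eq:RelationsExtAlgGraph}, so that any coefficient is only determined up to a term that is constant in the intermediate vertex; this freedom is exactly what reconciles the naive computation with the clean formula in the statement.

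First I would obtain the tensor formula. Writing $\Theta=e\tens\theta$, so $\alpha=u-\Theta$, and splitting the graph differential as $\extd e_{a\to b}=\{\theta,e_{a\to b}\}-\mu\sum_{i:a\to i\to b}e_{a\to i}\wedge e_{i\to b}$ into its inner and non-inner parts, the crux is the identity $\sum_{a\to b}u_{a\to b}\tens\{\theta,e_{a\to b}\}=\Theta\wedge u+u\wedge\Theta$, which says the inner part of $(\id\tens\extd)u$ is precisely the two cross terms in $\alpha\wedge\alpha$; these therefore cancel. The inner part of $-(\id\tens\extd)\Theta=-e\tens\extd\theta$ gives $-2\,\Theta\wedge\Theta$, which combines with the $+\Theta\wedge\Theta$ left over from $\alpha\wedge\alpha$ to produce the background term $-e\tens\theta\wedge\theta$, while $u\wedge u=\sum_{x\to y\to z}u_{x\to y}u_{y\to z}\tens e_{x\to y}\wedge e_{y\to z}$ using that $e_{x\to y}\wedge e_{y'\to z}=0$ unless $y=y'$. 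The non-inner parts assemble into the $\mu$-correction, whose coefficient sits on the direct arrow, namely $-\mu\sum_{x\to y\to z:\,x\to z}(u_{x\to z}-e)\tens e_{x\to y}\wedge e_{y\to z}$, consistent with the $\alpha_{x\to z}$ term of Proposition~\ref{alphaguage}. For $\Omega_{min}$ the relation \eqref{eq:RelationsExtAlgGraph} holds for all $x,z$, whence $\theta\wedge\theta=\sum_{x\to y\to z}e_{x\to y}\wedge e_{y\to z}=0$ and the background term drops.

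For the component formula I would instead substitute $\alpha_{x\to y}=u_{x\to y}-e$ directly into the component curvature of Proposition~\ref{alphaguage}. Expanding $\alpha_{x\to y}\alpha_{y\to z}=u_{x\to y}u_{y\to z}-u_{x\to y}-u_{y\to z}+e$, the linear terms cancel against $\alpha_{x\to y}+\alpha_{y\to z}$ and one is left with $u_{x\to y}u_{y\to z}-\mu u_{x\to z}$ plus a residual $e$-multiple: this residual is $0$ when the arrow $x\to z$ is present, is $-e$ on a return path $x\to y\to x$, and is $-\mu e$ when $x\neq z$ with $x\not\to z$. Comparing with the claimed $-\delta_{x,z}e$, the first two cases match directly and the only discrepancy is an extra $-e$ in the single case $\mu=1$ with $x\neq z,\ x\not\to z$ (i.e. $\Omega_{max}$ or $\Omega_{med'}$); since this $-e$ is constant in $y$ and the relation \eqref{eq:RelationsExtAlgGraph} holds for exactly those $(x,z)$, it may be discarded without altering the two-form. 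This reconciliation through the relations is the one step that needs genuine care.

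Finally, gauge covariance is a short telescoping argument: from $u^\gamma_{x\to y}=\gamma_x u_{x\to y}\gamma_y^*$ and $\gamma_y^*\gamma_y=e$ one gets $u^\gamma_{x\to y}u^\gamma_{y\to z}=\gamma_x u_{x\to y}u_{y\to z}\gamma_z^*$, while $\mu u^\gamma_{x\to z}=\mu\gamma_x u_{x\to z}\gamma_z^*$ and, for $x=z$, $\delta_{x,z}e=\gamma_x(\delta_{x,z}e)\gamma_z^*$ using $\gamma_x\gamma_x^*=e$. Factoring out $\gamma_x(\,\cdot\,)\gamma_z^*$ then yields $F^{u^\gamma}_{x\to y\to z}=\gamma_x F^u_{x\to y\to z}\gamma_z^*$.
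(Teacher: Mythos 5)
Your proof is correct and takes essentially the same route as the paper's: the tensor formula via $\alpha=u-e\tens\theta$ and the inner/non-inner split of $\extd$, and the component formula by substitution into Proposition~\ref{alphaguage}, with covariance of $F^u$ established by the same telescoping cancellation (the paper organises this slightly differently, starting from the bare coefficients \eqref{Fbare} and constructing the covariantising term from their gauge transform, whereas you start from the already-covariant coefficients and discard a relation-killed term, but the content is identical). One remark: your tensor-level $\mu$-correction carries $(u_{x\to z}-e)$ on the direct arrow, whereas the proposition as printed has $(u_{x\to y}-e)$; yours is the correct version, since it is what the non-inner part of $\extd$ actually produces and it is the only one consistent with the component formula $F^u_{x\to y\to z}=u_{x\to y}u_{y\to z}-\mu u_{x\to z}-\delta_{x,z}e$ (for $\mu=1$ the relations \eqref{eq:RelationsExtAlgGraph} do not apply to pairs with $x\to z$, so the two expressions genuinely differ as 2-forms). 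The printed index is evidently a typo, which your derivation implicitly corrects.
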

\begin{proof} The tensor level proof uses an anticommutator $\extd\omega=\{\theta,\omega\}$ for any 1-form $\theta$ and the relation between $u$ and $\alpha$. For the component level formula we the bare formula (\ref{Fbare}) in the  Proposition~\ref{alphaguage} and the change from $\alpha_{x\to y}$ to $u_{x\to y}$ to obtain
\begin{equation*} F_{bare}^u{}_{x\to y\to z}=u_{x\to y}u_{y\to z}-e-\mu(u_{x\to z}-\delta_{x\to z}e),    \end{equation*}
where $\delta_{x\to z}$ is 1 if $x\to z$ and zero otherwise and we likewise omit $u_{x\to z}$ if there is no arrow $x\to z$.  This immediately agrees with the tensor calculation but is not covariant. Rather, we find $F^{u^\gamma}_{bare}{}_{x\to y\to z}=\gamma_x F^u_{bare}{}_{x\to y\to z}\gamma_z^*+ \gamma_x\gamma_z^*-e-\mu ( \gamma_x\gamma_z^*-e)\delta_{x\to z}$ and we compensate for this by adding a term to give
\begin{align*} F^u_{x\to y\to z}&\coloneqq F^u_{bare}{}_{x\to y\to z}+ \delta_{x\ne z} (1-\mu\delta_{x\to z})e\\
&= u_{x\to y}u_{y\to z}-e-\mu u_{x\to z} + \mu \delta_{x\to z}e+\delta_{x\ne z} (1-\mu\delta_{x\to z})e\\
&=u_{x\to y}u_{y\to z}-\mu u_{x\to z} -(1-\delta_{x\ne z})(1-\mu\delta_{x\to z}) e\\
&=u_{x\to y}u_{y\to z}-\mu u_{x\to z} -\delta_{x,z} (1-\delta_{x\to z}) e\end{align*}
which simplifies as stated since if $x=z$, there are no arrows $x\to z$. This then transforms as expected and gives the same $F(\alpha)$ as our bare formula. 
\end{proof}

It is also easy to see that
\[ F^u_{x\to y\to z}{}^*=F^u_{z\to y\to x},\quad \epsilon(F^u_{bare}{}_{x\to y\to z})=0,\quad \epsilon(F^u_{x\to y\to z})=\delta_{x\ne z}-\mu\delta_{x\to z}\]
so that the bare coefficients are again better from the point of view of $\epsilon$. The last expression is not the same as we got for $\epsilon(F(\alpha)_{x\to y\to z})$ because the additional covariantising terms are different in the two cases. Remember that these functions are not uniquely determined as 2-steps are not a basis of $\Omega^2$. 

Let us now better understand what the curvature measures in the different choices of exterior algebra. Starting with the simplest case, the component expression for $F_{min}$ is\begin{align*}
    F_{min} = \sum_{x\to y \to z} u_{x\to y}u_{y\to z} \tens e_{x\to y}\wedge e_{y\to z},
\end{align*}
but the 2 forms $e_{x\to y}\wedge e_{y\to z}$ are not independent due to the relations \eqref{eq:RelationsExtAlgGraph}, which need to be taken into account. For any two vertices $x$, $z$, suppose there are $n$ possible vertices $y_i$ making a two-step path $x\to y_i \to z$, with $n=0$ if there are none. The relations \eqref{eq:RelationsExtAlgGraph} for $\Omega_{min}$ are valid for any $x$, $z$, and we can use them to write
\begin{align*}
    e_{x\to y_n} \wedge e_{y_n\to z} = - \sum^{n-1}_{i=1} e_{x\to y_i} \wedge e_{y_i\to z},
\end{align*}
leading to
\begin{align}
\label{eq:FminComponents}
	F_{min} = \sum_{x, z} \sum^{n-1}_{i=1} (u_{x\to y_i}u_{y_i\to z} - u_{x\to y_n} u_{y_n\to z}) \tens e_{x\to y_i}\wedge e_{y_i\to z}.
\end{align}
For $n=1$ we simply have $e_{x\to y_1} \wedge e_{y_1\to z} = 0$. Given the following diagram
\[
\begin{tikzcd}
 & y_1 \arrow[rd,shift left]\arrow[ld,shift left] \arrow[d,phantom,"\vdots"] & \\
x \arrow[rd,shift left]\arrow[ru,shift left ]\arrow[r,shift left] & y_i \arrow[l,shift left] \arrow[r,shift left]  \arrow[d,phantom,"\vdots"]  & z \arrow[l,shift left]\arrow[ld,shift left] \arrow[lu,shift left]\\
  & y_n \arrow[ru,shift left] \arrow[lu,shift left] & \\
\end{tikzcd}
\]
we see that the components of the curvature $u_{x\to y_i}u_{y_i\to z} - u_{x\to y_n} u_{y_n\to z}$ measure the difference of the holonomy along the path  $x\to y_i \to z$ with the reference path $x\to y_n \to z$. Of course, we can also compare the holonomies along any two different $x\to y_i \to z$, $x\to y_j \to z$ via 
\[
(u_{x\to y_i}u_{y_i\to z} - u_{x\to y_n} u_{y_n\to z}) - (u_{x\to y_j}u_{y_j\to z} - u_{x\to y_n} u_{y_n\to z}) = u_{x\to y_i}u_{y_i\to z} - u_{x\to y_j} u_{y_j\to z}.
\]
As the relations in $\Omega_{min}$ also include the possibility $x=z$, the curvature in this case will also measure how the holonomies along $x\to y_i \to x$ between $x$ and all the points $y_i$ it is connected to differ, i.e. of the star shaped sub-graph
\[
\begin{tikzcd}
y_1 \arrow[rd,phantom,"\ddots"] \arrow[d,shift left] & \\
x \arrow[r,shift left] \arrow[d,shift left] \arrow[u,shift left]& y_i\arrow[l,shift left]\\
y_n \arrow[u,shift left]\arrow[ru,phantom,"\reflectbox{$\ddots$}"] & \\
\end{tikzcd}
\]
$F_{min}$ therefore measures the difference of holonomies along the sides of squares $x\to y_i \to z \to y_j \to x$, and the difference of holonomies along the edges of star shaped sub-graphs.

The case of $\Omega_{med}$ is similar. The argument goes as above, but with the difference that the relations \eqref{eq:RelationsExtAlgGraph} are only valid for $x\neq z$, and we have the extra term $-e\tens \theta \wedge \theta$ in the curvature. In components, we can write
\begin{align*}
	F_{med} 
    &= \sum_{x\to y\to z} (u_{x\to y}u_{y\to z} - e) \tens e_{x\to y}\wedge e_{y\to z}\\
    &= \sum_{\substack{x, z \\x\neq z}} \sum^{n-1}_{i=1} (u_{x\to y_i}u_{y_i\to z} - u_{x\to y_n} u_{y_n\to z}) \tens e_{x\to i}\wedge e_{i\to z}
    + \sum_{x\to y} (u_{x\to y}u_{y\to x} - e) \tens e_{x\to y}\wedge e_{y\to x}.
\end{align*}
As with $\Omega_{min}$ this expression measures the difference of the holonomy taken along the two different sides of a square, but in the case of a star shaped sub-graph it simply measures how the holonomy along $x\to y\to x$ differ from the identity $e$, instead of comparing the holonomies back and forth along different legs of the star shape subgraph.

For $\Omega_{med'}$ we can write
\begin{align*}
    F_{med'} = \sum_{\substack{x\to y \to z: \\ x\not \to z}} (u_{x\to y}u_{y\to z}-e)\tens e_{x\to y}\wedge e_{y\to z}
    + \sum_{\substack{x\to y \to z: \\ x \to z}} (u_{x\to y}u_{y\to z}-u_{x\to z})\tens e_{x\to y}\wedge e_{y\to z}
\end{align*}
and the relations \eqref{eq:RelationsExtAlgGraph} are taken over $x\not \to z$. Following the previous reasoning, the first term will compare the holonomies along the sides of a square between $x$, $z$, as long as the square in question does not have a diagonal arrow $x\to z$. Furthermore, it will compare the holonomies along the legs of a star shaped subgraph to one another, as the relations include the case $x=z$. The last term of $F_{med'}$ will clearly compare the holonomies along two different sides of a triangle $x\to y\to z$ and $x\to z$
\[
\begin{tikzcd}
 & y \arrow[rd,shift left] \arrow[ld,shift left] & \\
x \arrow[ru,shift left] \arrow[rr,shift left] & & z \arrow[lu,shift left] \arrow[ll,shift left]
\end{tikzcd}
\]

The case of $F_{max}$ is similar, but now as the condition for the relations are $x\not\to z$, $x\neq z$, it will again measure the difference of holonomies along squares and triangles, but will compare the holonomies along legs of a star shaped subgraph to the identity, instead of to each other. A summary of what the curvature measures in the different choices of exterior algebra can be found in the following table:
\begin{table}[h]
    \centering
    \begin{tabular}{c|c|c|c}
         &  Starshaped subgraphs & Squares & Triangles\\
          \hline
         $F_{min}$& compares legs to each other & yes & no\\
          \hline
         $F_{med}$& compares legs to $e$ & yes & no\\
          \hline
         $F_{med'}$& compares legs to each other & yes & yes\\
          \hline
         $F_{max}$& compares legs to $e$ & yes & yes\\
    \end{tabular}
\end{table}

In \cite{Gho}, the calculus $\Omega_{med'}$ is used to define the Hopf algebroid of differential operators on a graph and to show how this can be used to recover a certain `fundamental group' of a graph as defined via equivalence under square and triangle moves in \cite{GJM}.

\subsection{Yang-Mills action}  \label{secwilson}

Given a connection with extended holonomy $u$, a path $x_0 \to x_1 \to \dots \to x_{n-1} \to x_n$ on the graph $X$ and a representation $\rho$ of the group $G$, we define the \emph{Wilson line} of $u$ along $x_0 \to x_1 \to \dots \to x_{n-1} \to x_n$ in the $\rho$-representation to be
\begin{align*}
    W_{\rho,x_0 \to x_1 \to \dots \to x_{n-1} \to x_n}(u) \coloneqq \chi_\rho(u_{x_0\to x_1}u _{x_1\to x_2}\cdots u_{x_{n-1} \to x_n}),
\end{align*}
where $\chi_\rho:=d_\rho^{-1}{\rm Tr}\circ\rho$  is the normalised character of $\rho$ extended linearly to $\C G$, where our products are now valued. Of particular interest are the \emph{Wilson loops}, where we take the path to be closed with $x_0 = x_n$, as these objects are gauge invariant. Note that $\C G$ has a canonical trace in its left regular representation which can be written as $\sum_\rho d_\rho ^2\chi_\rho$ and $W_\rho$ just picks off the part of this relevant to the noncommutative $U(d_\rho)$-gauge theory in the $\rho$-decomposition (\ref{rhodec}). Only $\rho\ne 1$ are relevant as $\rho_1=\epsilon$ has value 1 on the $u_{x\to y}$. 

To construct the Yang-Mills action functional $S[u]$ we also need an `integration' and here we take
\[ \int_X f=\sum_{x\in X}\mu_x f(x)\]
for $f\in \C(X)$ with respect to a positive measure $\mu_x$, which ideally should come from the graph QRG (so that it is compatible with the noncommutative geometric divergence as used in the geometric Laplacian $\Delta = \delta \extd$  in \cite{BegMa}, where $\delta = (\, ,\, ) \circ \nabla$ for $\nabla$ a connection on $\Omega^1$). However, we will mostly be taking a quantum metric $(\ ,\ )_2$ directly on $\Omega^2$ and such a measure can be absorbed into that, so without loss of generality for our purposes here, we  just take counting measure $\mu_x=1$. We then define
\[ S[u]\coloneqq \int_X \sum_{\rho\ne 1} \frac{1}{c_\rho} \chi_\rho(  (F(\alpha)^*,F(\alpha))_2)=-\sum_{\rho\ne 1}\frac{1}{c_\rho} \sum_{\substack{ x\to y\to z \\ z'\to y'\to x'}}\chi_\rho(F^u_{x\to y\to z} F^u_{z'\to y'\to x'})\int_X (e_{x\to y}\wedge e_{y\to z},e_{z'\to y'}\wedge e_{y'\to x'})_2\]
where $(\ ,\ )_2$ is the assumed inner product on 2-forms and we then apply $\chi_\rho$ to the product of the $\C G^+$ values. 
The $\rho=1$ case means to apply $\epsilon$ and is excluded since the result does not depend on $u$ for the same reason as for $W_\rho$. We also used that $F(\alpha)$ is anti-Hermitian.  We have put weights $c_\rho$ as this is customary in LGT, but in our point of view these are just independent coupling constants that we are free to choose for each component noncommutative $U(d_\rho)$-gauge theory.

\begin{theorem}\label{thm:Wact}
The Yang-Mills action functional is gauge invariant $S[u^\gamma] = S[u]$ and has the following forms in terms of Wilson loops for the different exterior algebras on $X$: 
\begin{align*}
    S^{min}[u] &= - \sum_{\substack{ \rho\ne 1, \\ x\to y \to z \to y' \to x}} \frac{1}{c_\rho}  \lambda_{x\to y \to z \to y'}W_{\rho, x\to y \to z \to y' \to x}(u),\\
  S^{med}[u] &=S^{min}[u]+A[u],\quad  S^{med'}[u]= S^{min}[u]+ B[u],\quad S^{max}[u] = S^{min}[u]+A[u]+ B[u],
  \end{align*}
  where
  \begin{align*}
    A[u]&= \sum_{\substack{ \rho\ne 1, \\ x\to y \to x \to y' \to x}} \frac{1}{c_\rho}  \lambda_{x\to y \to x \to y'}(W_{\rho,x\to y \to x}(u) +W_{\rho,x \to y' \to x} (u)-1),\\
    B[u]&= - \sum_{\substack{ \rho\ne 1, \\ x\to y \to z \to y' \to x: x\to z}} \frac{1}{c_\rho}  \lambda_{x\to y \to z \to y' }(  W_{\rho,x\to y \to z\to x} (u)+W_{\rho,x \to z \to y' \to x}(u) - W_{\rho,x \to z \to x}(u)).\end{align*}
\end{theorem}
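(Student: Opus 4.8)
The plan is to prove the two assertions in turn. Gauge invariance $S[u^\gamma]=S[u]$ is the quick part. Working with the component action, the factors $\delta_{x,x'}\delta_{z,z'}$ in the inner product of Lemma~\ref{lem:InnerProductOmega2X} force the two curvature factors to share their endpoints, so that the generic summand is $\chi_\rho(F^u_{x\to y\to z}\,F^u_{z\to y'\to x})$. Inserting the covariance law $F^{u^\gamma}_{x\to y\to z}=\gamma_x F^u_{x\to y\to z}\gamma_z^*$ of Proposition~\ref{prop:uFu}, the adjacent pair of unitaries collapses by $\gamma_z^*\gamma_z=e$, and cyclicity of the normalised character together with $\gamma_x^*\gamma_x=e$ removes the outer pair; here one uses that $\chi_\rho(ab)=\chi_\rho(ba)$ on $\C G$ since $\rho$ is an algebra map and the matrix trace is cyclic. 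Invariance therefore holds summand by summand. Abstractly this is also guaranteed by Proposition~\ref{propdec} together with the gauge invariance of the $U(d_\rho)$ Lagrangian from Section~\ref{secUd}, with $\chi_\rho$ supplying the required trace.

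For the explicit forms I would substitute the curvature coefficient $F^u_{x\to y\to z}=u_{x\to y}u_{y\to z}-\mu u_{x\to z}-\delta_{x,z}e$ into both factors and expand the product, applying $\chi_\rho$ to each resulting monomial and reading it as a Wilson line closing into a loop. The leading term $u_{x\to y}u_{y\to z}u_{z\to y'}u_{y'\to x}$ gives the quadrilateral $W_{\rho,x\to y\to z\to y'\to x}(u)$; the cross terms linear in $\mu$ give the two triangles $W_{\rho,x\to y\to z\to x}$ and $W_{\rho,x\to z\to y'\to x}$; the $\mu^2$ term, using $\mu^2=\mu$, gives the bigon $W_{\rho,x\to z\to x}$; and the $\delta_{x,z}$ terms give the back-and-forth loops $W_{\rho,x\to y\to x}$, $W_{\rho,x\to y'\to x}$ and the constant $\chi_\rho(e)=1$. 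The two mixed $\mu\,\delta_{x,z}$ monomials drop because the graph has no self-arrows. After multiplying by $-\tfrac1{c_\rho}\lambda_{x\to y\to z\to y'}$ and summing, the quadrilateral piece is exactly $S^{min}[u]$, so it remains to see which of the remaining pieces survive.

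The decisive observation is that the quadrilateral loop is the unique monomial depending on both summation vertices $y$ and $y'$, whereas every correction term is independent of at least one of them. Hence each correction can be annihilated by the appropriate relation of Lemma~\ref{lem:InnerProductOmega2X}, namely $\sum_y\lambda_{x\to y\to z\to y'}=0$ when the term is independent of $y$ and $\sum_{y'}\lambda_{x\to y\to z\to y'}=0$ when it is independent of $y'$, provided that relation is valid for the pair $(x,z)$ at hand. Since these relations hold on precisely the range prescribed by \eqref{eq:CasesExteriorAlgebraGraph}, the bookkeeping splits cleanly: for $\Omega_{min}$ the relations hold for all $x,z$ and every correction vanishes, leaving $S^{min}[u]$; for $\Omega_{med}$ they fail only at $x=z$, so the $\delta_{x,z}$ loops survive and assemble into $A[u]$; for $\Omega_{med'}$ (where $\mu=1$) they fail only at $x\to z$, so the triangle and bigon loops survive and assemble into $B[u]$; and for $\Omega_{max}$ both sets fail, producing $A[u]+B[u]$.

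I expect the main difficulty to be bookkeeping rather than anything conceptual: one must match every monomial to its closed loop, record which of $y,y'$ it omits so as to invoke the correct relation, and combine the overall action sign with the signs $-\mu$ and $-\delta_{x,z}$ in $F^u$ and the collapse $\mu^2=\mu$. The triangle and bigon block feeding $B[u]$ is where the signs are most delicate, and I would pin it down by checking against an explicit Abelian case such as the single plaquette $\Z_2\times\Z_2$, where all loops and weights can be written out by hand.
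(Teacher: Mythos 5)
Your proposal follows essentially the same route as the paper's proof: gauge invariance via the covariance of $F^u$ from Proposition~\ref{prop:uFu} together with unitarity of $\gamma$ and cyclicity of $\chi_\rho$, then expansion of $\chi_\rho(F^u_{x\to y\to z}F^u_{z\to y'\to x})$ into the quadrilateral, triangle, bigon and back-and-forth Wilson loops, with each correction term surviving or cancelling according to whether the relations of Lemma~\ref{lem:InnerProductOmega2X} are in force for the pair $(x,z)$ in the given calculus. The key observation you isolate --- that only the quadrilateral loop depends on both $y$ and $y'$, so every other monomial can be annihilated by the appropriate $\lambda$-sum relation precisely on the range where that relation holds --- is exactly the mechanism the paper uses to sort the four cases into $S^{min}$, $A[u]$ and $B[u]$.
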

\begin{proof} We take a generic form of $(\  ,\ )_2$ defined as in Lemma~\ref{lem:InnerProductOmega2X} so that
\begin{align*}
    S[u]&=-\sum_{\rho\ne 1}\frac{1}{c_\rho} \sum_{x,z}\sum_{y,y': {x\to y\to z\atop z\to y'\to x}}\chi_\rho(F^u_{x\to y\to z} F^u_{z\to y'\to x})\lambda_{x\to y\to z\to y'}\int\delta_x \\ 
    &=-\sum_{\substack{ \rho\ne 1, \\ x\to y \to z \to y' \to x}} \frac{1}{c_\rho} \chi_\rho(F^u_{x\to y\to z} F^u_{z\to y'\to x})\lambda_{x\to y\to z\to y'}
\end{align*}
given our counting measure (otherwise there is a factor $\mu_x$ which we can absorb in $\lambda_{x\to y\to z\to y'}$) and using that our graphs are bidirectional. 

Next, using the covariant form of $F^u$ in Proposition~\ref{prop:uFu}, it is apparent that $S[u^\gamma]=S[u]$ since  
\[ \chi_\rho(\gamma_xF^u_{x\to y\to z} \gamma_z^*\gamma_z F^u_{z\to y'\to x}\gamma_x^*)=\chi_\rho(F^u_{x\to y\to z} F^u_{z\to y'\to x})\]
 by the unitarity of $\gamma$ and cyclicity of the trace in $\chi_\rho$. Moreover, we compute
\begin{align*} \chi_\rho&(F^u_{x\to y\to z} F^u_{z\to y'\to x})=\chi_\rho((u_{x\to y}u_{y\to z} -\mu u_{x\to z}-\delta_{x,z}e)(u_{z\to y'}u_{y'\to x} -\mu u_{z\to x}-\delta_{x,z}e))\\
&= W_{\rho, x\to y\to z\to y'\to x}-\mu(W_{\rho, x\to y\to z\to x}+W_{\rho, x\to z\to y'\to x}- \mu W_{\rho, x\to z\to x})+\delta_{x,z}(1-W_{\rho, x\to y\to x}-W_{\rho, x\to y'\to x})  \end{align*}
For $\Omega_{min},\Omega_{med}$ we have $\mu=0$, and for $\Omega_{min}$ the $\delta_{x,y}$ part does not contribute due to each term not depending on $y$ or not depending on $y'$, and the relations in Lemma~\ref{lem:InnerProductOmega2X} when we do the sums of the $\lambda_{x\to y\to z\to y'}$, giving the result stated for this case. For $\Omega_{med}$ we have to add this contribution with $z=x$ in the sum over paths, which gives the result stated.

Finally, for $\Omega_{med'},\Omega_{max}$ we have $\mu=1$, and for $\Omega_{med'}$ the $\delta_{x,z}$ part does not contribute due to each term not depending on $y$ or not depending on $y'$. Here $x=z$ implies $x\not \to x$ so the relations of this calculus apply. This gives the stated result for the action for this case where we just add the extra terms for $\mu=1$. For $\Omega_{max}$ we have both types of extra terms since the relations do not allow to cancel the $\delta_{x,z}$ part either.  \end{proof}

We see that  as the relations of $\Omega^2$ become weaker, we have to add extra terms to the main action of the minimal calculus. In the latter case, we sum over the Wilson loops along length 4 loops $x\to y \to z \to y' \to x$, weighted by the factors $\lambda_{x\to y\to z\to y'}$. Here, there are several cases, as we can either have $x = z$ or $x\neq z$, and similarly for $y$, $y'$. This results in the following loops appearing in the action $S^{min}[u]$:
\[
\begin{tikzcd}
	x  \arrow[shift left=3]{r}\arrow[shift right]{r} 
		& y  \arrow[shift left=3]{l}\arrow[shift right]{l}
\end{tikzcd},
\quad\quad \quad \quad \quad
\begin{tikzcd}
x \arrow[shift left,r] \arrow[shift left,d] & y\arrow[shift left,l]\\
y' \arrow[shift left,u] &
\end{tikzcd},
\quad \quad \quad \quad
\begin{tikzcd}
x \arrow[r,shift left] & y\arrow[d,shift left] \arrow[l,shift left]\\
& z \arrow [u,shift left]
\end{tikzcd},
\quad \quad \quad \quad
\begin{tikzcd}
x \arrow[r] & y\arrow[d]\\
y' \arrow[u] & \arrow[l] z
\end{tikzcd},
\]
the first three being Wilson loops along `trivial loops', which contribute non-trivially as we do not have $u_{y\to x} = u^{-1}_{x\to y}$, and the last is the Wilson loop around a square. Note that these loops are not summed over in $S^{min}[u]$ in an independent manner, as the coefficients $\lambda_{x\to y\to z\to y'}$ are related to one another via Proposition \ref{lem:InnerProductOmega2X}. Turning now to $S^{med}[u]$, the action has the extra $A[u]$ contribution where $x = z$, with the following terms
\[
\quad \quad \quad \quad
 2
\quad \quad
\begin{tikzcd}
	x  \arrow[r,shift left] 
		& y  \arrow[l,shift left]
\end{tikzcd}
\quad \quad 
-
\quad \quad
1
\]
for $y = y'$ and 
\[
\quad \quad 
\begin{tikzcd}
x \arrow[shift left,r]& y\arrow[shift left,l]
\end{tikzcd}
\quad \quad 
+
\quad \quad
\begin{tikzcd}
x \arrow[shift left,d] \\
y' \arrow[shift left,u] 
\end{tikzcd}
\quad \quad
-
\quad \quad
1
\]
for $y\neq y'$. In the case of $S^{med'}[u]$ we have the additional $B[u]$ contribution, with terms of the form
\[
\quad\quad
-
\begin{tikzcd}
x \arrow[r] & y\arrow[d]\\
 & \arrow[lu] z
\end{tikzcd}
\quad\quad
-
\quad \quad
\begin{tikzcd}
x \arrow[rd] & y \arrow[l]\\
& z \arrow[u]
\end{tikzcd}
\quad\quad
+
\quad \quad
\begin{tikzcd}
x \arrow[rd,shift left] & y\\
& \arrow[lu,shift left] z
\end{tikzcd}
\]
for $y = y'$ and
\[
\quad\quad
-
\begin{tikzcd}
x \arrow[r] & y\arrow[d]\\
y' & \arrow[lu] z
\end{tikzcd}
\quad\quad
-
\quad\quad
\begin{tikzcd}
x \arrow[rd] & y\\
y' \arrow[u] & \arrow[l] z
\end{tikzcd}
\quad\quad
+
\quad \quad
\begin{tikzcd}
x \arrow[rd,shift left] & y\\
y' & \arrow[lu,shift left] z
\end{tikzcd}
\]
for $y \neq y'$. The case of $S^{max}[u]$ has both types of extra terms. 
\begin{example}
\label{ex:ActionAn1}
As an example consider the finite line $A_{n+1}$. The simplest way to compute the minimal action is via \eqref{eq:FminComponents} as this already incorporates the $\Omega^2$ relations. We have several cases for $x,z$: $x=i$, $z = i\pm 2$ for $i=3,\dots,n-1$ only have one vertex between them and hence the curvature vanishes; similarly for $x=z=1,n$. The non-trivial contributions come from $x=z=i$ , $i=2,\dots n$, with $y_1 = i+1, y_2 = i-1$, resulting in the curvature
\[
F_{min} = \sum^{n}_{i=2} (u_{i\to i+1}u_{i+1\to i}-u_{i\to i-1}u_{i-1\to i} ) \tens e_{i\to i+1} \wedge e_{i+1 \to i} 
\] 
and the action 
\begin{align*}
S^{min}_{A_{n+1}}[u] &= - \sum_{\rho\neq 1} \frac{1}{c_\rho} \sum^{n}_{i=2} \lambda_{i\to i+1 \to i \to i+1} \chi_\rho((u_{i\to i+1}u_{i+1\to i} - u_{i\to i-1}u_{i-1\to i} )^2 ) \nonumber\\
&= -\sum_{\rho\neq 1}\frac{1}{c_\rho} \sum^{n}_{i=2} \lambda_{i\to i+1 \to i \to i+1} (W_{\rho,i\to i+1\to i\to i+1}(u) + W_{\rho,i\to i-1\to i\to i-1}(u) - 2W_{\rho,i\to i+1\to i\to i-1}(u)),
\end{align*}
with $\lambda_{i\to i+1 \to i \to i+1} = - A_{i,i} \lambda_{i\to i+1} \lambda_{i\to i-1}$ using the solution from Proposition \ref{eq:SolLambda(,)2}. This is a stark contrast to conventional LGT and continuum Yang-Mills theory, which are both trivial in one dimension. 
\end{example}

\subsection{Yang-Mills theory on $X$ a discrete group}\label{secXG}

Let us now restrict to $X$ itself a discrete group with calculus given by $\CC$, a  bicovariant exterior algebra $\Omega$, a general metric $\cg^{a,b} = \frac{\delta_{a,b^{-1}}}{R_a c_{a^{-1}}}$ and $(\ , \ )_2$ as in \eqref{eq:(,)2G}, with fibre $G$ a generic finite group. We can write $\C G \tens \Omega^1 \simeq \C G \tens \C(X) \tens \Lambda^1$ with $\Lambda^1$ the space of left invariant one forms on the base graph and therefore decompose the holonomies as $u = \sum_{a\in \CC} u_a \tens e_a$ with $u_a = \sum_{x\in X} u^x_a \tens \delta_x \in \C G \tens \C(X)$. To see how this relates to the presentation in terms of the arrows, write $e_a = \sum_{x\in X} e_{x\to xa}$ which leads to
\begin{align*}
    u = \sum_{a\in \CC} u_a \tens e_a = \sum_{a\in \CC, x \in X} u^x_a \tens e_{x\to xa}
\end{align*}
showing that $u^x_a = u_a(x) = u_{x\to xa}$, thus recovering the value of the holonomy associated to an arrow $x\to xa$ by evaluating $u_a$ at $x$. Furthermore, right translating $u_b$ by $R_a$ results in $(R_au_b)(x) = u_b(xa) = u_{xa \to xab}$. We can then use these to represent holonomies along paths and loops, for example $u_a R_a u_b$ evaluated at $x$ will give the holonomy along $x\to xa \to xab$.

As we work with $\Omega$ bicovariant, the exterior algebra is inner with $\theta \wedge \theta = 0$, leading to
\[
F = u \wedge u = \sum_{a,b\in \CC} (u_a \tens e_a) \wedge (u_b \tens e_b)
= \sum_{a,b\in \CC} u_a (R_a u_b) \tens e_a \wedge e_b
\]
and the Yang-Mills term 
\begin{align*}
	&(F, F )_2
	=  \sum_{a,b,c,d \in \CC} u_a (R_a u_b) (R_{ab} u_c) (R_{abc} u_d) (e_a \wedge e_b, e_c \wedge e_d)_2 \nonumber \\
	&=\sum_{x\in X} \sum_{a,b,c,d \in \CC} 
	u_a (R_a u_b) (R_{ab} u_c) (R_{abc} u_d) \lambda_{x\to xa}\lambda_{x\to xb} \delta_{e,abcd} (\delta_{a,c^{-1}} + \delta_{c, b^{-1}a^{-1} b } - 2 \delta_{b,c^{-1}} )\delta_x  \nonumber\\
	&= \sum_{x\in X} \sum_{a\neq b \in \CC} 
	 \lambda_{x\to xa}\lambda_{x\to xb} u_a (R_a u_b)[ (R_{ab} u_{a^{-1}}) (R_{aba^{-1}} u_{ab^{-1}a^{-1}})
	+  (R_{ab} u_{b^{-1}a^{-1}b} ) (R_{b} u_{b^{-1}})
	- 2 (R_{ab} u_{b^{-1}}) (R_{a} u_{a^{-1}})] \delta_x\nonumber \\
	&= \sum_{x\in X} \sum_{a\neq b \in \CC} 
	 \lambda_{x\to xa}\lambda_{x\to xb} u_{x\to xa} u_{xa\to xab} [ u_{xab \to xaba^{-1}}u_{xaba^{-1} \to x}
	+  u_{xab \to x b} u_{xb \to x}
	- 2 u_{x ab \to xa } u_{xa \to x}] \delta_x.
\end{align*}
We find the loops according to the discussion below Equation \eqref{eq:weightsGroupCase}, and omit the $a=b$ case since $\lambda_{x\to xa\to xa^2 \to xa} = 0$. In the case that $X$ is Abelian, the Yang-Mills term further reduces to
\begin{equation}\label{SYMab}
(F,F)_2 = 2 \sum_{x\in X} \sum_{a\neq b \in \CC}  \lambda_{x\to xa}\lambda_{x\to xb}
	u_{x\to xa} u_{xa\to xab}( u_{xab \to xb}u_{xb \to x} 
	- u_{xab \to xa} u_{xa \to x}) \delta_x, 
\end{equation}
which diagrammatically can be represented as
\[
\begin{tikzcd}
x \arrow[r] & xa \arrow[d]\\
xb \arrow[u] & xab \arrow[l]
\end{tikzcd}
\quad - \quad
\begin{tikzcd}
x \arrow[r,shift left] & xa \arrow[d,shift left] \arrow[l,shift left]\\
xb  & xab \arrow[u,shift left]
\end{tikzcd}.
\]
This is relevant when considering for example the lattice $\Z^M$ or $\Z^M_N$ with calculus $\CC = \{\pm e_i\vert i = 1,\dots, M\}$, with $e_i$ the $M$-tuple with $1$ in the $i$'th place and zero elsewhere. In the case of $\Z_2^M$, we do not have the $\pm$ signs. One can also insert a measure $\mu(x)$ in the integration for the action, but for simplicity we do not do so here.

\begin{example}
\label{ex:ActionZ}
In the case of the integer line $X=\Z$, we have $e_1=1\in \Z$ and 
\begin{align*}
\label{eq:ActionZ}
S_{\Z}[u] = - 4 \sum_{\rho\neq 1} \frac{1}{c_\rho} \sum_{i\in \Z} \lambda_{i \to i+1}\lambda_{i \to i-1}  \chi_\rho(u_{i \to i+1} u_{i+1\to i} (u_{i\to i-1} u_{i-1\to i} - u_{i\to i+1} u_{i+1\to i})),
\end{align*}
\begin{equation*}
    = - 4 \sum_{\rho\neq 1} \frac{1}{c_\rho}\sum_{i\in \Z} \lambda_{i \to i+1}\lambda_{i \to i-1}\chi_\rho
    \left(
        \begin{tikzcd}
            i-1  \arrow[shift left]{r} 
            & i \arrow[shift left]{l} \arrow[shift left]{r}
            & i+1 \arrow[shift left]{l} 
        \end{tikzcd}
        -
        \begin{tikzcd}
        i \arrow[shift left=3]{r}\arrow[shift right]{r} 
        & i+1  \arrow[shift left=3]{l}\arrow[shift right]{l}
        \end{tikzcd}
    \right),
\end{equation*}
where we also give the shorthand diagrammatic representation. The factor 2 appears since the $a=+1,b=-1$ and  $a=-1,b=+1$ terms are the same due to the cyclic properties of $\chi_\rho$. Similarly for $\Z_N$. 
\end{example}

\begin{example}\label{Z2Z2act} In the case of $X = \Z_2\times \Z_2$, we use an abbreviated notation $ij$ for the vertex $(i,j)$ unless required for clarity, and have $e_1 = (1,0), e_2 = (0,1)\in \CC$. This makes $X$ into a square graph
\[    
\begin{tikzcd}
    01 \arrow{r}\arrow{d} & 11 \arrow{d}\arrow{l}\\
    00 \arrow{u}\arrow{r} & 10 \arrow{l}\arrow{u} \\
    \end{tikzcd},
\]
on which we take the Euclidean metric where $\lambda_{x\to y}=1$. There are four arrows and hence 8 holonomies $u_{(i,j)\to (i+1,j)}$ and $u_{(i,j)\to (i,j+1)}$, where addition is mod 2, with half of these adjoint to the other half under $*$. Adding up the values of $(F,F)_2$ at the four vertices, the action is then
\begin{align*} S_{\Z_2\times\Z_2}[u]=-\sum_{\rho\ne 1}{2\over c_\rho}\big(& 4 W_{\rho,00\to 01\to 11\to 10\to 00}(u) + 4 W_{\rho,00\to 10\to  11\to 01\to 00}(u) - 2(W_{\rho,00\to 10\to 11\to 10\to 00}(u)\\ &-2 W_{\rho,00\to 01\to 11\to 01\to 00}(u) -
        2 W_{\rho,00\to 01\to 00\to 10\to 00}(u) -2W_{\rho,10\to 11\to 01\to 11\to 10}(u) \big).
        \end{align*}
\[ 
    = - \sum_{\rho \neq 1}\frac{2}{c_\rho}
   { \chi_\rho\left[ 4 \left(
    \begin{tikzcd}
    \bullet \arrow{r} & \bullet \arrow{d}\\
    \bullet \arrow{u} & \bullet \arrow{l}
    \end{tikzcd}
    +
    \begin{tikzcd}
    \bullet \arrow{d} & \bullet \arrow{l}\\
    \bullet \arrow{r} & \bullet \arrow{u}
    \end{tikzcd}
    \right)
-2 \left(
\begin{tikzcd}
\bullet \arrow[shift left]{d}\arrow[shift left]{r} & \bullet \arrow[shift left]{l}\\
\bullet \arrow[shift left]{u} & \bullet 
\end{tikzcd}
+
\begin{tikzcd}
\bullet \arrow[shift left]{r} & \bullet \arrow[shift left]{l} \arrow[shift left]{d}\\
\bullet  & \bullet \arrow[shift left]{u}
\end{tikzcd}
+
\begin{tikzcd}
\bullet  \arrow[shift left]{d} & \bullet\\
\bullet \arrow[shift left]{r} \arrow[shift left]{u} & \bullet \arrow[shift left]{l}
\end{tikzcd}
+
\begin{tikzcd}
\bullet  & \bullet \arrow[shift left]{d}\\
\bullet \arrow[shift left]{r} & \bullet \arrow[shift left]{u}\arrow[shift left]{l}
\end{tikzcd}
\right)
\right]}
\]
using cyclicity of $\chi_\rho$. 
\end{example}

\begin{example} 
In the case of the 2D lattice $X=\Z\times\Z$, the calculus is 4-dimensional with $\CC=\{\pm e_1,\pm e_2\}$ for independent signs. We find for the Euclidean metric ($\lambda_{x\to y} = 1$),
\begin{equation*}
    S_{\Z\times \Z}[u] 
    = -\sum_{* \in \Z\times \Z} \sum_{\rho\ne 1} \frac{2}{c_\rho} \chi_\rho
    \left[ 4 \left(
    \begin{tikzcd}
    \bullet \arrow{r} & \bullet \arrow{d}\\
    * \arrow{u} & \bullet \arrow{l}
    \end{tikzcd}
    +
    \begin{tikzcd}
    \bullet \arrow{d} & \bullet \arrow{l}\\
    * \arrow{r} & \bullet \arrow{u}
    \end{tikzcd}
    \right)\right.
\end{equation*}
\begin{equation*}
-2 \left(
\begin{tikzcd}
\bullet \arrow[shift left]{d}\arrow[shift left]{r} & \bullet \arrow[shift left]{l}\\
* \arrow[shift left]{u} & \bullet 
\end{tikzcd}
+
\begin{tikzcd}
\bullet \arrow[shift left]{r} & \bullet \arrow[shift left]{l} \arrow[shift left]{d}\\
\bullet  & * \arrow[shift left]{u}
\end{tikzcd}
+
\begin{tikzcd}
\bullet  \arrow[shift left]{d} & \bullet\\
\bullet \arrow[shift left]{r} \arrow[shift left]{u} & * \arrow[shift left]{l}
\end{tikzcd}
+
\begin{tikzcd}
\bullet  & \bullet \arrow[shift left]{d}\\
* \arrow[shift left]{r} & \bullet \arrow[shift left]{u}\arrow[shift left]{l}
\end{tikzcd}
\right)
\end{equation*}
\begin{equation*}
-2 \left(\left.
\begin{tikzcd}
* \arrow[shift right]{d} \arrow[shift left=3]{d} \\
\bullet \arrow[shift right]{u} \arrow[shift left=3]{u}
\end{tikzcd}
-
\begin{tikzcd}
\bullet \arrow[shift left]{d}\\
* \arrow[shift left]{u} \arrow[shift left]{d} \\
\bullet \arrow[shift left]{u}
\end{tikzcd}
+
\begin{tikzcd}
*  \arrow[shift left=3]{r}\arrow[shift right]{r} 
& \bullet  \arrow[shift left=3]{l}\arrow[shift right]{l}
\end{tikzcd}
-
\begin{tikzcd}
    \bullet  \arrow[shift left]{r} 
& * \arrow[shift left]{l} \arrow[shift left]{r}
& \bullet \arrow[shift left]{l} 
\end{tikzcd}
\right)\right].
\end{equation*}
The Wilson action \cite{Wil:con} for $\Z^2$ is recovered but now with additional terms. The latter correspond to trivial `back and forth' loops and become constant if we restrict the extended holonomies $u$ to be in the group $G$ as one would do in LGT. For then, as $u_{x\to y}= u^*_{y\to x} = u^{-1}_{y\to x}$, the holonomy along such a loop is the identity and these terms will be inconsequential to the dynamics of the theory, as they are independent of $u$.
\end{example}

A further interesting case is that of a 2D triangular lattice $\Z^2$ with calculus $\CC = \{(\pm 1,0),(0,\pm 1), (\pm 1, \pm 1)\}$. Here if we work with $\Omega_{min}$, we would again find trivial loop terms and \emph{square} terms in the Yang-Mills action, not loops around the triangular plaquettes as one would expect. The squares that appear in the action include additional square loops terms as follows
\[
\begin{tikzcd}
    \bullet \arrow[dash]{r} \arrow[dash]{d} & \bullet \arrow[dash]{d} \arrow{r} & \bullet \arrow[dash]{d} \arrow{ld}\\
    \bullet \arrow{ur} & \bullet \arrow{l} & \bullet\arrow[dash]{l}
\end{tikzcd}
\]
and similar loops. If one considers $\Omega_{med'}$ or $\Omega_{max}$ instead, then triangular loops appear directly in the action. Note that for a hexagonal lattice, the Yang-Mills action would only consist of trivial loops, as the loops appearing in it are at most of length 4.

\begin{example}
A nonAbelian example is  to take the base to be $S_3$ with $\CC = \{u,v,w\}$, which results in
 \[
	S_{S_3}
	= - \sum_{\rho\neq 1}\frac{1}{c_\rho}\int_{S_3} \sum_{a\neq b \in \CC} 
	 \lambda_{a}\lambda_{b} \chi_\rho \left(u_a (R_a u_b)[ (R_{ab} u_{a}) (R_{aba} u_{aba})
	+  (R_{ab} u_{bab} ) (R_{b} u_{b})
	- 2 (R_{ab} u_{b}) (R_{a} u_{a})]\right), 
\]
where we use $\lambda_a(x) = \lambda_{x\to xa}$ and $a = a^{-1}$. Integration means to sum the functions shown over $S_3$, possibly with a measure. The first and second terms of this action correspond to loops around different plaquettes, for example starting at $u\in S_3$ we have $u \to ua \to uab \to uaba \to uabaaba = u$ for the first term and $u\to ua \to uab \to uababa = ub \to u$, while the last term corresponds to trivial loops $u \to ua \to uab \to uabb = ua \to u$. The Cayley graph describing the base geometry here is
\[
\begin{tikzcd}
{} & {u \arrow[r,leftrightarrow]} & {uv \arrow[rd,leftrightarrow] } & {} \\
{e \arrow[ur,leftrightarrow]\arrow[rrr,leftrightarrow] } & {} & {} & {w\arrow[ld,leftrightarrow] } \\
{} & {v \arrow[lu,leftrightarrow] \arrow[ruu,leftrightarrow]  } & {vu \arrow[l,leftrightarrow]\arrow[luu,leftrightarrow]  } & {} \\
\end{tikzcd}.
\]
\end{example}

\subsection{Circle limit of $\Z_N$}

To take the circle limit of $\Z_N$, i.e. $N\to \infty$, we follow the approach in \cite[Section 3.2]{ArgMa} by Fourier transforming $\C(\Z_N)$ to the variable $s\in \C(\Z_N)$ with $s(i) = q^i, q= e^{\frac{2\pi \imath }{N}}$. This is such that the base algebra can be reinterpreted as $A = \C(\Z_N) \simeq \C \Z_N = \C[s,s^{-1}]/\langle s^N-1 \rangle$, with relations $s^N=1$. In the circle limit, we can then think of the variable as $s = e^{\imath \varphi}$ for $\varphi$ the angle around the circle. In this form, we work with the basis 1-forms $f_\pm$ of $\Omega^1$, with module structure, $*$-structure, inner element and differential
\begin{align*}
    &f_+ = s^{-1}\extd s = (q-1)e_+ + (q^{-1}-1)e_-,&
    &f_+ s = s(f_- + (q+q^{-1}) f_+),& 
    &{f_\pm}^* = - f_\pm,\\
    &f_- = s\extd s^{-1} = (q^{-1}-1)e_+ + (q-1)e_-,&
    &f_- s = -s f_+,&
    &\theta = \frac{q}{(q-1)^2} \theta_0,\\
    &\extd s^m = \frac{q[m]_q s^m}{q+1}(q[-1-m]_q f_+ + [m-1]_q f_-)
\end{align*}
with $[m]_q = \frac{1-q^m}{1-q}$ the $q$-integers and $\theta_0 = f_++f_-$. Since $e_{\pm}$ are Grassmanian, the $f_{\pm}$ generators will also be. In this form, one can easily take the $N\to \infty$ limit, resulting in the stated $q$-deformed 2-dimensional calculus on classical $S^1$. We then recover the usual classical geometry of $S^1$ as a quotient by taking $q\to 1$ {\em and} setting $\theta_0 = 0$. 

It is simpler to consider the above calculus in the $f_+$, $\theta_0 = f_+ + f_-$ basis, as $\theta_0$ is central. In particular we have
\[
\extd s^m = \frac{q}{1+q} \left((1+q^{-m})s\del_q s^m f_+ + q^{1-m} s^2 \del^2_q s^m \theta_0 \right),
\quad \quad
f_+ s^m = \frac{1}{1+q}\left((1+q^{-1})q^{-m} s^m f_+ + q(1+q^{-m})s\del_q s^m \theta_0\right)
\]
where $\del_q s^m = [m]_q s^{m-1}$ is the $q$-derivative wrt $s$. 

This algebra is quite complicated, and therefore we focus on its $q\to 1$ limit where results are easier to interpret. In this setting one can easily compute $\extd f$ and $[f_+,f]$ for any function $f\in \C[s,s^{-1}]$ as 
\[
\extd f = -\imath (\del_\varphi f) f_+ + \frac{1}{2} (\imath \del_\varphi f - \del_\varphi^2 f) \theta_0,
\quad \quad
[f_+,f] = -\imath (\del_\varphi f) \theta_0,
\]
where we work with an angle $\varphi = -\imath \ln(s)$. In this form we can now allow $f$ to be any smooth function on $S^1$. A downside of this limit is that the calculus is no longer inner, meaning that the extended holonomy formulation is not applicable, and we have to work directly with the NCG point of view where connections are 1-forms $\alpha = \alpha_+ \tens f_+ + \alpha_0 \tens \theta_0 \in (\C G^+ \tens \C[s,s^{-1}]) \tens \Lambda^1$, with $\alpha_{+,0} \in \C G \tens \C[s,s^{-1}]$ which can also be taken to be smooth functions on $S^1$ with values in $\C G^+$. The reality condition  $\alpha^* = -\alpha$ on connection $1$-forms then translates to $\text{Im}(\alpha_+) = 0$ and $\text{Im}(\alpha_0) = - \frac{1}{2} \del_\varphi \alpha_+$, meaning that the connection consists of two independent real fields $\alpha_+$ and $\mathrm{Re}(\alpha_0)$. The curvature $F = \extd \alpha +\alpha \wedge \alpha$ is
\[
F =  \left(\imath (\alpha_+ - 1) \del_\varphi \mathrm{Re}(\alpha_0) - \frac{\imath}{2} (\alpha_+ + 1) \del_\varphi \alpha_+ +[\alpha_+,\mathrm{Re}(\alpha_0)-\frac{\imath}{2} \del_\varphi \alpha_+] \right) f_+ \wedge \theta_0,
\]
where the last term vanishes when $G$ is Abelian. In this setting, the most general central metric and its inverse obeying the reality condition are
\[
\cg = \imath a (f_+ \tens \theta_0 - \theta_0 \tens f_+) + b \theta_0 \tens \theta_0,
\quad 
(f_+,f_+) = - \frac{b}{a^2}, \quad
(\theta_0,f_+) =-  \frac{\imath}{a}, \quad
(f_+,\theta_0) = \frac{\imath}{a}, \quad
(\theta_0,\theta_0) = 0,
\]
with $a, b$ real functions on the circle and $a\ne 0$. Note that the quantum symmetry condition $\wedge \cg = 0$ implies $a = 0$, making the metric degenerate (we do not impose this).  The 2-metric then reads $(f_+\wedge \theta_0,f_+\wedge \theta_0)_2 = - \frac{2}{a^2}$ and the action is of the form
\[
S[\alpha] = -\sum_{\rho} \frac{1}{c_\rho} \int^{2\pi}_{0} \frac{2}{a^2} \chi_\rho\left( (\alpha_+ - 1) \del_\varphi \mathrm{Re}(\alpha_0) - \frac{1}{2} (\alpha_+ + 1) \del_\varphi \alpha_+ -\imath [\alpha_+,\mathrm{Re}(\alpha_0)-\frac{\imath}{2} \del_\varphi \alpha_+] \right)^2 \extd \varphi.
\]
Expanding this expression in the Abelian $G$ case and dropping the $\mathrm{Re}$ for simplicity gives
\begin{align*}
S[\alpha] = - \sum_{\rho} \frac{1}{c_\rho} \int^{2\pi}_{0} \frac{2}{a^2} \chi_\rho 
&\left((\del_\varphi \alpha_0)^2 + 2 \del_\varphi \alpha_0 \del_\varphi \alpha_+ + \frac{1}{4}(\del_\varphi \alpha_+)^2 + \alpha_+ \left(\frac{1}{2}(\del_\varphi \alpha_+)^2-2 (\del_\varphi \alpha_0)^2 \right)\right. \\ 
& \left. + \alpha^2_+ \left((\del_\varphi \alpha_0)^2
- 2 \del_\varphi \alpha_0 \del_\varphi \alpha_+ 
+\frac{1}{4}(\del_\varphi \alpha_+)^2\right)
\right) \extd \varphi.
\end{align*}
We see that noncommutative Abelian gauge theory on $S^1$ results in a rich action,  with 2, 3 and 4 point (derivative) interactions, which one does not expect from Abelian gauge theory. We can also Fourier transform the action, 
\begin{align*}
    S[\alpha] =- 2 \sum_{\rho} \frac{1}{c_\rho} \chi_\rho 
    &\left(
    \sum_{k\in \Z}    
    k^2 (\tilde \alpha_{0,k} \tilde \alpha_{0,-k} + 2 \tilde \alpha_{0,k} \tilde \alpha_{+,-k} + \frac{1}{4}\tilde \alpha_{+,k}\tilde \alpha_{+,-k})\right. \\ 
    & + \sum_{k_1,k_2,k_3 \in \Z} \delta(k_1+k_2+k_3) k_2 k_3 \tilde \alpha_{+,k_1} \left(\frac{1}{2} \tilde \alpha_{+,k_2}\tilde \alpha_{+,k_3} - 2 \tilde \alpha_{0,k_2}\tilde \alpha_{0,k_3} \right)\\ 
    & + \left.\sum_{k_1,k_2,k_3,k_4 \in \Z} \delta(k_1+k_2+k_3+k_4)  k_3 k_4 \tilde \alpha_{+,k_1} \tilde\alpha_{+,k_2} \left(\tilde \alpha_{0,k_3}\tilde\alpha_{0,k_4} 
    - 2 \tilde\alpha_{0,k_3}\tilde\alpha_{+,k_4} 
    +\frac{1}{4}\tilde\alpha_{+,k_3}\tilde\alpha_{+,k_4}\right)\right),
\end{align*}
shown here for $a=1$, as a starting point for Feynman rules or for numerical calculations for bounded momenta $-N\le k\le N$. 

\section{Moduli space $\CU^\times / \CG$}
\label{sec:mod}

In our NCG approach, the  moduli spaces $\CA / \CG \simeq \CU / \CG$  for Abelian gauge theories can be computed explicitly, as we will demonstrate in this section. It will, however, be convenient to restrict the set of connections to `regular' ones in the sense of the extended holonomy invertible on each arrow,
\[     \CU^\times = \{u\in \C G^\times \tens \Omega^1\  \vert\  u^* = -u,\  (\epsilon\tens \id) u = \theta\}=\{u\in C(E,\C G^\times_1)\ |\  u_{x\to y}^*=u_{y\to x}\},\] 
where $\C G^\times$ denotes the group of invertible elements of the group algebra and $\C G^\times_1$ denotes the subgroup of counit 1. This is dense the set of all elements of $\C G$ of counit 1, hence $\CU^\times \subset \CU$ is dense as well at least if $E$ is finite. Consequentially, this restriction does not have an impact on the physics from the perspective of the path integral, as we are ignoring a subset of extended holonomies with vanishing measure. 

Secondly, the key to our analysis will be our Peter-Weyl decomposition into a product of noncommutative $U(d_\rho)$-gauge theories allowing analysis for each $\rho\in \hat G\setminus\{1\}$ at a time. Here $\hat G$ denotes the set of irreducible representations up to equivalence. These can be taken to be unitary up to unitary equivalence, which we do. 

Note that in conventional LGT, both holonomies and gauge transformations are $G$-valued, while in the NCG approach we have taken them both essentially $\C G$-valued. An intermediate case of  $\C G$-holonomies and $G$-valued gauge transformations is also of interest, see Section~\ref{secnonu}.

\subsection{Moduli spaces for $G$ Abelian}

We first analyse the case of $G$ Abelian as this has some significant simplifications. In this case $\pi_\rho={1\over |G|}\sum_{h\in G}\rho(h^{-1})h$ sends $g\in G$ to  \[ \pi_\rho g={1\over |G|}\sum_{h\in G}\rho(h^{-1})hg={1\over |G|}\sum_{hg\in G}\rho(g(hg)^{-1})hg=\rho(g)\pi_\rho\]
so the decomposition just associates to a group element its different values in the irreps. This therefore just becomes the tautological isomorphism
\[  \C G\simeq \C(\hat G)\]
as Hopf $*$-algebras that sends $g\in G$ to the function  $g(\rho)=\rho(g)$. This extends linearly to send $\sum_{g\in G}f_g g\in \C G$ to $\tilde f(\rho)=\sum_{g\in G} f_g \rho(g)$, i.e. the Fourier transform of the coefficients. Going the other way is therefore the inverse Fourier transform.  We are interested in unitary and/or invertible elements of  counit 1, 
\[ \C G^u\simeq C(\hat G\setminus \{1\}, U(1))=U(1)^{|G|-1},\quad \C G^\times_1\simeq C(\hat G\setminus \{1\}, \C^\times)=(\C^\times)^{|G|-1}\]
where $|\hat G|=|G|$ and counit 1 in the group algebra translates to a fixed value of $1$ at $\rho=1$, so this is excluded from the data (but should be remembered when inverse Fourier transforming back to the group algebra).  Consequently, we see that we can equivalently work with our function spaces in the form
\[ \CU^\times=C(E_0\times(\hat G\setminus\{1\}), \C^\times),\quad \CG=C(X\times (\hat G\setminus\{1\}), U(1))\]
where $E_0$ denotes the set of undirected edges. This depends on choosing an {\em orientation} of the graph (a choice of arrow on every edge) on which $u_{x\to y}$ is specified, with $u_{y\to x}=u_{x\to y}^*$. Meanwhile, 
 the functional dependence on $\rho\ne 1$ just gets carried along in all computations. 

\subsubsection{Tree graph $X$ case}\label{secsimply}

In this section we will use that every element of $\C^\times$ has a unique polar decomposition as $re^{\imath\theta}$ where $\theta\in [0, 2\pi)$ and $r>0$, and show that for a  connected simply connected (i.e., tree) graph the phase degree of freedom in $\CU^\times$ can thereby always be gauged away. This leaves a residual quantum-gravity like moduli space 
\[  \CU^\times/\CG=C(E_0\times(\hat G\setminus\{1\}), \R_{>0})= \R_{>0}^{|E_0|(|G|-1)}\]
where, for each irrep $\rho\ne 1$, we associate a residual real positive number on every (undirected) edge as in the specification of a quantum metric on a graph \cite{Ma:squ, BegMa}. We will see in later sections that in some cases the YM action also reduces to the action for quantum gravity on graphs. 

We start by computing $\CU^\times/\CG$ for the Dynkin graph $A_2$ with two vertices and two arrows
\[
\begin{tikzcd}
x \arrow[shift left,r]& y\arrow[shift left,l]
\end{tikzcd}.
\]
Clearly $\CU^\times=(\C^\times)^{|G|-1}$ as the value of $u_{x\to y}$ (say) at each $\rho\ne 1$, which we denote $u_{x\to y}^\rho\in \C^\times$. Similarly, the group $\CG=(U(1)\times U(1))^{|G|-1}$ as the values of $(\gamma_x,\gamma_y)$ at each $\rho\ne 1$ and we denote these as $\gamma_{x}^\rho$ and $\gamma_{y}^\rho$. Now, under a gauge transformation $u_{x\to y}$ maps to the product $\gamma_x u_{x\to y} \gamma_y^*\in \C G^\times_1$ which is the pointwise product as functions on $\hat G\setminus\{1\}$, i.e.
\[ (u^\gamma)_{x\to y}^\rho= \gamma_{x}^\rho u_{x\to y}^\rho\gamma_{y}^\rho{}^*.\]
All that remains is to make a polar decomposition of $u_{x\to y}^\rho=r_{x\to y}^\rho e^{\imath \theta_{x\to y}^\rho}$ and (for example) set $\gamma_x=e$ and $\gamma_{y}^\rho=e^{\imath \theta_{x\to y}^\rho}$ so that $(u^\gamma)_{x\to y}^\rho=r_{x\to y}^\rho$ a positive real number. This is the best we can do by gauge transformations as (in the Abelian case) only the ratio $\gamma_x\gamma_y^*$ effectively  acts and this is just a phase at each $\rho\ne 1$. Hence, for this graph,
\[ \CU^\times/\CG\simeq C(\hat G\setminus\{1\},\R_{>0})= \R_{>0}^{|G|-1}.\]
Given such a function $\tilde f(\rho)\in \R_{>0}$ the corresponding value in the group algebra is 
\[ u_{x\to y}=\sum_g f_g g,\quad f_g={1\over |G|}\sum_{\rho\in \hat G}\tilde f(\rho)\rho(g)^{-1};\quad \tilde f(1)=1.\]
This argument generalises to any tree graph $X$:

\begin{proposition}\label{propsimply}
Let $G$ be a finite Abelian group and $X$ a  bidirected tree graph. Then the moduli space $\CU^\times/\CG$ is isomorphic to $\R^{|E_0|(|G|-1)}_{>0}$.
\end{proposition}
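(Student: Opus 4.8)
The plan is to reduce, exactly as in the $A_2$ case above, to an independent analysis for each nontrivial irrep $\rho$ and then to exploit the fact that a tree has no nontrivial fundamental loops. Under the identifications $\CU^\times = C(E_0 \times (\hat G\setminus\{1\}), \C^\times)$ and $\CG = C(X\times (\hat G\setminus\{1\}), U(1))$ established above, the gauge action is $(u^\gamma)_{x\to y}^\rho = \gamma_x^\rho u_{x\to y}^\rho (\gamma_y^\rho)^*$, which for fixed $\rho$ is a pointwise $\C^\times$-valued problem. Writing the polar decompositions $u_{x\to y}^\rho = r_{x\to y}^\rho e^{\imath\theta_{x\to y}^\rho}$ and $\gamma_x^\rho = e^{\imath\phi_x^\rho}$, the action leaves each modulus $r_{x\to y}^\rho$ untouched and shifts the phase by the discrete coboundary $\theta_{x\to y}^\rho \mapsto \theta_{x\to y}^\rho + \phi_x^\rho - \phi_y^\rho$. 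Thus the claim amounts to showing that on a tree every edge-phase assignment is a coboundary, and that the residual positive moduli give a faithful set of orbit representatives.

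For existence of a canonical representative I would fix a root vertex $x_0$ (here $X$ is connected, so one root suffices), set $\phi_{x_0}^\rho = 0$, and propagate outward along the unique tree paths. Concretely, orienting each edge from parent $p$ to child $c$ in a spanning traversal, the equation $\phi_c^\rho = \phi_p^\rho + \theta_{p\to c}^\rho$ determines $\phi_c^\rho$ uniquely from its already-fixed parent value, because in a tree every non-root vertex has exactly one parent and there are no cycles to create consistency obstructions. The resulting $\gamma^\rho\in C(X, U(1))$ is a genuine gauge transformation (it is valued in phases as required) sending $u_{x\to y}^\rho$ to the positive real number $r_{x\to y}^\rho$ on every edge. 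Hence every orbit meets the subset $C(E_0\times(\hat G\setminus\{1\}), \R_{>0})$.

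For uniqueness of this representative I would argue that if two positive-real configurations $r$ and $r'$ lie in the same orbit, say $r_{x\to y}^\rho = \gamma_x^\rho\, r'^\rho_{x\to y}\,(\gamma_y^\rho)^*$, then taking moduli and using $|\gamma_x^\rho| = 1$ forces $r_{x\to y}^\rho = r'^\rho_{x\to y}$ edge by edge, so $r = r'$. As a by-product, $\gamma_x^\rho(\gamma_y^\rho)^* = 1$ on every edge, so on the connected graph $X$ the stabiliser of a positive configuration is just the global constant phase at each $\rho$, which acts trivially. Combining existence and uniqueness, the orbit map $\CU^\times/\CG \to C(E_0\times(\hat G\setminus\{1\}), \R_{>0})$ is a bijection, and the latter is $\R_{>0}^{|E_0|(|G|-1)}$ since $|\hat G\setminus\{1\}| = |G|-1$.

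The only real obstacle is the absence of loops: on a tree the coboundary equation for the phases is always solvable with no holonomy constraint, which is exactly where simple-connectedness enters. On a graph with independent cycles the same propagation would generically fail to close up consistently around each fundamental loop, leaving a residual $U(1)$ phase per loop and per $\rho$ — the situation treated in the non-simply-connected case later. Everything else is the elementary bookkeeping of polar decomposition already illustrated for $A_2$.
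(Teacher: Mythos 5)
Your proposal is correct and follows essentially the same route as the paper: reduce to one irrep $\rho\ne 1$ at a time, polar-decompose, and propagate a phase-cancelling gauge transformation outward from a root vertex, with the absence of cycles guaranteeing no consistency obstruction. The one point where you go slightly beyond the paper's write-up is your explicit uniqueness step (taking moduli to show two positive-real configurations in the same orbit coincide, with stabiliser the global phase); the paper leaves this implicit in the Abelian case and only spells out the analogous argument later for nonAbelian $G$.
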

\proof  This can be proven by induction, but it is more informative to describe the process. Also, we suppress a fixed $\rho\ne 1$ as an argument to all the fields. We start at any vertex $x_0$ and consider a path $x_0\to x_1\cdots\to x_n$ with no intersections along the part. We choose an initial $\gamma_{x_0}$ randomly and adjust $\gamma_{x_1}$ so that $\gamma_{x_0}\gamma_{x_1}^*$ kills the phase of $u_{x_0\to x_1}$ as we did for the $A_2$ case. We then take this value of $\gamma_{x_1}$ and adjust $\gamma_{x_2}$ so that $\gamma_{x_1}\gamma_{x_2}^*$ kills the phase of $u_{x_1\to x_2}$, etc, to the end path. We then repeat the process for all other such paths starting from $x_0$ and using the previously assigned initial $\gamma_{x_0}$. Likewise, for each path such as $x_0\to x_1\cdots \to x_n$ we repeat the process for all paths stating from $x_n$ using the previous assigned initial value $\gamma_{x_n}$. We repeat the process at all the branch ends of these paths, and so on. As the underlying undirected graph has no loops, we will not encounter in this process another previously assigned value of $\gamma$.  At the end of this process, we will have set all the $u_{x\to y}$ real and positive, hence equal to $u_{y\to x}=u_{x\to y}^*$  as the residual moduli space. \endproof

Note that for a tree graph $|X|=|E_0|+1$ and indeed, we chose one initial value $\gamma_{x_0}$ randomly then determined the remaining $\{\gamma_x\}$ from the phases of the equal number of $\{u_{x\to y}\}$ counted in one direction only for a given orientation.

\subsubsection{$N$-gon graph $X = \Z_N$}\label{secloop}

Take now the simplest non-simply connected graph, the triangle $\Z_3$ with calculus $\CC = \{\pm1 \}$,
\[
\begin{tikzcd}
 & 1\arrow[rd,shift left] \arrow[ld,shift left] & \\
2 \arrow[ru,shift left] \arrow[rr,shift left] & & 3 \arrow[lu,shift left] \arrow[ll,shift left]
\end{tikzcd}.
\]
Here $\CU^\times=(\C^\times)^{3(|G|-1)}$ as the value in $\C^\times$ of $u_{1\to 2}, u_{2\to 3}, u_{3\to 1}$ (say) at each $\rho\ne 1$. We do all the analysis for a fixed $\rho$ at a time, so we will suppress writing this explicitly. Similarly, the group of gauge transformations is $\CG=(U(1)\times U(1)\times U(1))^{|G|-1}=U(1)^{3(|G|-1)}$ as the values of $\gamma_1,\gamma_2,\gamma_3$ at each $\rho\ne 1$. Following the algorithm of the proof of Proposition~\ref{propsimply}, we choose $\gamma_1$ randomly then $\gamma_2$ such that $\gamma_1\gamma_2^*$ cancels the phase of $u_{1\to 2}$. Using this value of $\gamma_2$ we choose $\gamma_3$ so that $\gamma_2\gamma_3^*$ cancels the phase of $u_{2\to 3}$. We will not, however, be able to cancel the phase of $u_{3\to 1}$ as $\gamma_1$ was already specified. In fact the product $u_{1\to 2} u_{2\to 3}u_{3\to 1}$ is gauge invariant hence the product of the phases cannot be changed by a gauge transformation. We also cannot affect the real parts of the $u$, which are independent of the direction  by $u_{x\to y}^*=u_{y\to x}$. Hence
\[ \CU^\times/\CG\simeq (U(1)\times \R_{>0}^3)^{|G|-1}\]
when we put in the possible values of $\rho\ne 1$. We see that the real part of the moduli space is $C(E_0\times(\hat G\setminus\{1\}),\R_{>0})$ of an $\R_{>0}$ `gravity-like' assignment to every (undirected) edge, for each $\rho\ne 1$ as for a tree  graph, but there is an additional $U(1)$ for each $\rho\ne 1$ for the phase part of the holonomy around the triangle in our description. This immediately generalises.

\begin{proposition}\label{proploop}
Let $G$ be an Abelian finite group and $X = \Z_N$ with calculus $\CC = \{\pm1\}$. Then $\CU^\times/ \CG \simeq   (U(1)\times \R_{>0}^N)^{|G|-1} $.
\end{proposition}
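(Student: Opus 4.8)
The plan is to adapt the gauge-fixing algorithm from Proposition~\ref{propsimply} directly, tracking the single obstruction that arises from the one independent loop in $\Z_N$. As in the triangle case $\Z_3$ worked out explicitly above, I would fix a nontrivial irrep $\rho\ne 1$ and suppress it throughout, since by the Abelian Peter--Weyl decomposition $\C G\simeq \C(\hat G)$ the whole analysis factorises over $\rho\in\hat G\setminus\{1\}$ and the final answer is obtained by raising the per-$\rho$ moduli space to the power $|G|-1$. For fixed $\rho$ we have $\CU^\times\simeq(\C^\times)^N$, recording the values $u_{i\to i+1}^\rho\in\C^\times$ for $i=1,\dots,N$ (indices mod $N$) along a chosen orientation of the polygon, with the reversed arrows determined by $u_{i+1\to i}=u_{i\to i+1}^*$; and $\CG\simeq U(1)^N$, recording the phases $\gamma_i^\rho$ at each vertex. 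The gauge action is the pointwise product $(u^\gamma)_{i\to i+1}^\rho=\gamma_i^\rho\,u_{i\to i+1}^\rho\,(\gamma_{i+1}^\rho)^*$.

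First I would run the tree-gauge-fixing algorithm along the open path $1\to 2\to\cdots\to N$: choose $\gamma_1$ arbitrarily, then successively choose $\gamma_2,\dots,\gamma_N$ so that each $\gamma_i\gamma_{i+1}^*$ cancels the phase of $u_{i\to i+1}$, making these $N-1$ holonomies real and positive. This is exactly the tree algorithm applied to the spanning path obtained by deleting the closing edge $N\to 1$. The only edge not covered is the return edge $u_{N\to 1}$, and since $\gamma_1$ and $\gamma_N$ are now both fixed, its phase can no longer be adjusted.

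Next I would identify the surviving invariant. The product $P:=u_{1\to 2}u_{2\to 3}\cdots u_{N\to 1}$ around the loop is gauge invariant, because under $u^\gamma$ the factors $\gamma_i^*\gamma_i=1$ telescope; hence the phase of $P$, equivalently the residual phase of $u_{N\to 1}$ after the first $N-1$ have been made real positive, is a genuine modulus valued in $U(1)$. Meanwhile the magnitudes $r_{i\to i+1}=|u_{i\to i+1}|$ are untouched by the $U(1)$-valued gauge action for every edge including the return edge, and they are well defined on undirected edges since $|u_{x\to y}|=|u_{y\to x}|$ follows from $u_{y\to x}=u_{x\to y}^*$. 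There are $N$ undirected edges, giving $\R_{>0}^N$, and one residual phase, giving $U(1)$, so for each fixed $\rho$ the per-$\rho$ moduli space is $U(1)\times\R_{>0}^N$. Taking the product over the $|G|-1$ nontrivial irreps yields $\CU^\times/\CG\simeq(U(1)\times\R_{>0}^N)^{|G|-1}$.

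I do not expect a serious obstacle, as this is a mild generalisation of the already-verified $\Z_3$ case; the one point requiring care is confirming that after the path gauge-fixing \emph{no} residual gauge freedom remains that could further act on the data. This holds because, once $\gamma_1$ is fixed, the requirement that $u_{1\to2},\dots,u_{N-1\to N}$ stay real and positive rigidly determines $\gamma_2,\dots,\gamma_N$, so the stabiliser of the gauge-fixed slice is the global constant phase $\gamma_i\equiv\gamma_1$, which acts trivially on every $u_{i\to i+1}$; thus the count $N$ magnitudes plus $1$ loop phase is exhaustive and no modulus is double-counted.
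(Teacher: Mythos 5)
Your proof is correct and follows essentially the same route as the paper's: gauge-fix along the spanning path $1\to 2\to\cdots\to N$ for each fixed $\rho\ne 1$, observe that the loop product $u_{1\to 2}\cdots u_{N\to 1}$ is gauge invariant so its phase survives as a residual $U(1)$, and combine this with the $N$ gauge-invariant positive magnitudes on the undirected edges to get $(U(1)\times\R_{>0}^N)^{|G|-1}$. Your closing check that the stabiliser of the gauge-fixed slice is only the trivially-acting global phase is a detail the paper leaves implicit rather than a different method.
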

\begin{proof} The argument is identical. We number the vertices $1,2,\cdots, N$, assign a random $\gamma_1$ and proceed to assign $\gamma_i$ up to and including $\gamma_{N-1}$ to cancel the phases of $u_{1\to 2},\cdots,u_{N-1\to N}$. We cannot choose $\gamma_N:=\gamma_1$ as this was assigned and indeed the product, $u_{1\to 2}\cdots u_{N-1\to N}u_{N\to 1}$ is gauge invariant leaving a residual $U(1)$ freedom for each $\rho$ and the real parts of the $u_{x\to y}$ associated to each undirected edge. \end{proof}

\subsubsection{General connected graph $X$ case} 

The above results are warm-up for the general case of any connected  graph $X$. First fix a minimal {\em spanning tree} of $X$ as an undirected graph, i.e. a subgraph with the same vertices and which is a tree. Then any two vertices have a unique path between them in $T$. A {\em fundamental loop} is given for any edge $x\leftrightarrow y$ that is {\em not in the tree} completed to an undirected loop by travelling back through $T$. The number of such fundamental loops is  $L=|E_0|-|X|+1$ and it is known that they form a basis of the cycle space \cite{Die}. 

\begin{proposition}
Let $G$ be an Abelian finite group and $X$ a bidirected  graph with $L$ fundametal loops. Every regular extended holonomy configuration can be gauge transformed to a unique standard form that for each $\rho\ne 1$ is real on the spanning tree $T$. The moduli space is
\begin{align*}
\CU^\times / \CG \simeq (U(1)^L\times \R_{>0}^{|E_0|})^{|G|-1},
\end{align*} 
where $L=|E_0|-|X|+1$. 
\end{proposition}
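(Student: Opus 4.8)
The plan is to reduce to a single nontrivial irrep and then combine the tree-gauging algorithm of Proposition~\ref{propsimply} with the loop obstruction already observed in Proposition~\ref{proploop}. By the Fourier isomorphism $\C G\simeq \C(\hat G)$ the whole problem factorises over $\rho\ne 1$, and the functional dependence on $\rho$ is carried along passively, so it suffices to establish the single factor $U(1)^L\times \R_{>0}^{|E_0|}$ and take its $(|G|-1)$-fold power. I therefore suppress $\rho$ and work with $u_{x\to y}^\rho\in\C^\times$ and gauge parameters $\gamma_x^\rho\in U(1)$, recalling that the gauge action is $u_{x\to y}\mapsto \gamma_x u_{x\to y}\gamma_y^*$ and that in the Abelian case the global $U(1)$ (all $\gamma_x$ equal) acts trivially, so the effective gauge group is $U(1)^{|X|}/U(1)\cong U(1)^{|X|-1}$.

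First I would isolate the gauge-invariant positive data. Since $|\gamma_x|=1$, the modulus $r_{x\to y}:=|u_{x\to y}|$ is unchanged by the gauge action, and because $u_{y\to x}=u_{x\to y}^*$ it depends only on the undirected edge. This produces the factor $\R_{>0}^{|E_0|}$ at once and shows these $|E_0|$ positive reals descend untouched to the quotient. Next I would gauge-fix the phases by choosing a minimal spanning tree $T$, which has $|X|-1$ edges, and running the propagation of Proposition~\ref{propsimply}: pick $\gamma_{x_0}$ arbitrarily, then move outward along $T$, choosing each $\gamma_x$ so that $\gamma_{\mathrm{parent}}\gamma_x^*$ cancels the phase of the corresponding tree edge. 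Because $T$ is spanning and connected this sets every tree holonomy real positive and exhausts the $|X|-1$ effective gauge parameters, the only residual freedom preserving ``tree real positive'' being the trivial global phase. Hence each configuration has a \emph{unique} standard form real on $T$.

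Finally I would read off the loop phases. The $L=|E_0|-|X|+1$ edges \emph{not} in $T$ are precisely the fundamental loops. In the standard form the tree legs contribute phase $1$, so the phase of each non-tree holonomy equals the phase of the Wilson loop $W_\rho$ around its fundamental loop; these are gauge invariant and, since the fundamental loops form a basis of the cycle space \cite{Die}, mutually independent, yielding the factor $U(1)^L$. Conversely any assignment of $L$ loop phases and $|E_0|$ moduli is realised by some $u\in\CU^\times$, giving the claimed bijection. The only genuinely nontrivial point — and hence the main obstacle — is verifying that fixing the tree real positive exhausts the gauge freedom up to the trivial global phase, so that the standard form is truly unique and the $L$ non-tree phases are free, unconstrained coordinates; this is exactly where connectedness of $T$ and the basis property of the fundamental loops enter, while the dimension count $L=|E_0|-|X|+1$ is then routine bookkeeping.
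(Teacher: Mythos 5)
Your proof is correct and takes essentially the same route as the paper: pick a spanning tree $T$, propagate gauge transformations outward along $T$ (the algorithm of Proposition~\ref{propsimply}) to make all tree holonomies real positive, and observe that the $L=|E_0|-|X|+1$ non-tree edges retain both a phase and a modulus, giving $(U(1)^L\times\R_{>0}^{|E_0|})^{|G|-1}$. Your two refinements—that the moduli $|u_{x\to y}|$ are manifestly gauge invariant from the outset, and that in the standard form the residual phases coincide with the gauge-invariant Wilson-loop phases of the fundamental loops—are correct elaborations of the same argument rather than a different method.
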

\begin{proof}
We proceed for each representation $\rho\ne 1$ in the same way.  Pick a vertex $x_0$ in $T$ and assign $\gamma_{x_0}=1$ there. Then move along  $x_0\to x_1$ along $T$ assigning $\gamma_{x_1}$ to cancel the phase there. Proceed until you reach a univalent vertex. Do the same at any branches where there was more than one direction in $T$ in which to proceed. As there are no loops in $T$ there is no obstruction. Up to an overall constant, there is a unique gauge transform that achieves this. Any edges that are not in the spanning tree cannot be changed further so these have both a real and phase degree of freedom. The number of these edges (which is also the number of fundamental loops) is $L=|E_0|-|X|+1$ because the tree $T$ has $|X|-1$ edges and there are $|E_0|$ edges in total, so $L$ many edges that are not in the tree. We also have parallel arguments in the nonAbelian case to be covered later. \end{proof}

\subsubsection{Yang-Mills action for Abelian $G$}

We have seen that, in the case $G$ Abelian and for each $\rho\ne 1$ we have a regular noncommutative $U(1)$-gauge theory. Moreover, we have that the polar decomposition of the $u^\rho_{x\to y}=r^\rho_{x\to y}e^{\imath\theta^\rho_{x\to y}}$ can be viewed as two so far independent theories, for each $\rho\ne 1$. Namely, a regular lattice $U(1)$ for $e^{\imath\theta^\rho_{x\to y}}$ with $U(1)$ gauge transforms at vertices, and a theory for real fields $r^\rho_{x\to y}=r^\rho_{y\to x}>0$ which are already gauge invariant. Here, the Wilson lines factorise as
\[ W_{\rho, x_0\to\cdots x_n}[u]=W^{>0}_{\rho,x_0\to\cdots x_n}[r] W^{U(1)}_{\rho, x_0\to\cdots x_n}[\theta];\quad W^{>0}_{\rho,x_0\to\cdots x_n}[r]=\prod_{i}r^\rho_{x_i\to x_{i+1}};\quad  W^{U(1)}_{\rho,x_0\to\cdots x_n}[\theta]=e^{\imath\sum_i\theta^\rho_{x_i\to x_{i+1}}}.\]
This is in contrast with LGT, as there only the phase degrees of freedom are considered. In Theorem~\ref{thm:Wact} we saw that different representations $\rho\neq 1$ are not coupled to each other. From now on we will restrict ourselves to one representation and will drop  $\rho \neq 1$ everywhere for clarity. We have the following contribution
\[ S^{min}[r,\theta]= - \frac{1}{c} \sum_{ x\to y\to z\to y'\to x}   \lambda_{x\to y\to z\to y'} r_{x\to y}r_{y\to z} r_{z\to y'} r_{y'\to x} e^{i (\theta_{x\to y}+\theta_{y\to z}+\theta_{z\to y'}+\theta_{y'\to x})}\]
which interestingly couples these two theories. Similarly for the other variants with different differential calculi. 

In particular, if we restrict ourselves to tree graphs, the phase degrees of freedom disappear and we find that the theory is equivalent to a massless positive scalar field on a graph $X_{E_0} = E_0$, where the vertices correspond to undirected edges $e = x-y$ of the original graph, with arrows $e \to f$ if $e$ and $f$ share a vertex in $X$, i.e.
\[
y \overset{e}{-}x \overset{f}{-} y' \quad \text{or} \quad y \overset{f}{-}x \overset{e}{-} y'.
\]
We define the inverse metric via the coefficients of $(\ , \ )_2$ as $\lambda_{e\to f} = \lambda_{x\to y \to x\to y'}$ for $e= x-y$, $f= x-y'$, which is well-defined, real and edge symmetric if we restrict the coefficients $\lambda_{x\to y \to z\to y'}$ to be real. This is the case of interest as we expect $(\ , \ )_2$ to be constructed via $(\ , \ )$ on $X$ which has real coefficients. We will also need the graph Laplacian $\Delta\colon \C(X_{E_0}) \to \C(X_{E_0})$ which is defined on $w\in \C(X_{E_0})$ as \cite[Proposition 1.28]{BegMa} 
\[
\Delta w(e) = 2 \sum_{f:e\to f} \lambda_{e\to f} (w(e) - w(f)).
\]

\begin{theorem}
\label{thm:MasslessScalar}
On a tree graph $X$ with set of arrows $E$ and $\lambda_{x\to y \to z\to y'}\in \R$, Abelian Yang Mills theory is equivalent to massless positive scalar field theory on $X_{E_0}$ with arrows and inverse metric as above.
\end{theorem}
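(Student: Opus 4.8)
The plan is to put each gauge orbit into the real standard form of Proposition~\ref{propsimply}, observe that on a tree only the star-shaped contributions to $S^{min}$ survive, and then recognise the resulting quadratic form in the squared edge variables as the graph Laplacian of $X_{E_0}$ acting on $w=r^2$. I work with $\Omega_{min}$ (on a tree there are no triangles, so $B[u]=0$ and $\Omega_{med'}$ gives the same action). First I would gauge-fix: by Proposition~\ref{propsimply} every regular configuration on a connected tree is gauge-equivalent to a unique representative with all phases removed, so that for each $\rho\ne 1$ we have $u_{x\to y}^\rho=r^\rho_{x\to y}>0$, and the reality condition $(u^\rho_{x\to y})^*=u^\rho_{y\to x}$ forces $r^\rho_{x\to y}=r^\rho_{y\to x}$; hence $w^\rho(e):=(r^\rho_{x\to y})^2$ is a well-defined positive function of the undirected edge $e=x-y$. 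Since the representations decouple in $S[u]$ by Theorem~\ref{thm:Wact}, I fix one $\rho\ne 1$, drop it from the notation, and recover the general case as the independent product over $\rho\ne 1$.

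Second I would isolate the surviving loops. The action $S^{min}$ sums $\lambda_{x\to y\to z\to y'}W_{\rho,x\to y\to z\to y'\to x}$ over closed $4$-loops. On a tree any two distinct vertices are joined by a unique path, so for $x\ne z$ there is at most one intermediate vertex $y$; the constraint $\sum_{y:x\to y\to z}\lambda_{x\to y\to z\to y'}=0$ of Lemma~\ref{lem:InnerProductOmega2X} then has a single term and forces $\lambda_{x\to y\to z\to y'}=0$ for $x\ne z$. Thus only the star-shaped terms with $z=x$ remain, and on the gauge-fixed representative each evaluates to $W_{\rho,x\to y\to x\to y'\to x}=r_{x\to y}^2 r_{x\to y'}^2=w(e)w(f)$ with $e=x-y$, $f=x-y'$, giving
\[ S^{min}=-\frac1c\sum_{x}\ \sum_{y,y':\,x\to y,\ x\to y'}\lambda_{x\to y\to x\to y'}\,w(e)\,w(f). \]

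Third — the key step — I would turn this into the Laplacian form. For fixed $x$ the matrix $M_{yy'}:=\lambda_{x\to y\to x\to y'}$ is real, symmetric (by the reality condition $\lambda_{x\to y\to x\to y'}=\lambda_{x\to y'\to x\to y}$ together with $\lambda\in\R$), and has vanishing row sums $\sum_{y}M_{yy'}=0$ (the $z=x$ case of the same constraint). Any such $M$ is a weighted graph Laplacian, and a short computation using these two properties gives the identity
\[ -\sum_{y,y'}M_{yy'}\,w(e)\,w(f)=\frac12\sum_{y\ne y'}M_{yy'}\bigl(w(e)-w(f)\bigr)^2. \]
Setting $\lambda_{e\to f}:=\lambda_{x\to y\to x\to y'}=M_{yy'}$ for $e\ne f$, and using that on a tree two distinct edges meet in at most one vertex so each arrow $e\to f$ of $X_{E_0}$ comes from exactly one triple $(x;y,y')$, this reads $S^{min}=\tfrac1{2c}\sum_{e\to f}\lambda_{e\to f}(w(e)-w(f))^2$. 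Comparing with $\langle w,\Delta w\rangle=\sum_e w(e)\Delta w(e)=\sum_{e\to f}\lambda_{e\to f}(w(e)-w(f))^2$ for the stated Laplacian (again using $\lambda_{e\to f}=\lambda_{f\to e}$), I conclude $S^{min}=\tfrac1{2c}\langle w,\Delta w\rangle$, the free massless action for the positive scalar $w=r^2$, with the bijection $r\mapsto r^2$ of $\R_{>0}$ identifying the moduli space of Proposition~\ref{propsimply} with the scalar field configurations. If desired, Proposition~\ref{eq:SolLambda(,)2} makes the weights explicit as $\lambda_{e\to f}=A_{x,x}\lambda_{x\to y}\lambda_{x\to y'}$.

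I expect the main obstacle to be this third step, and specifically establishing that there is no mass term: it is exactly the sum-to-zero constraint of Lemma~\ref{lem:InnerProductOmega2X} (equivalently the fact that $\Delta$ annihilates constants) that forces $M$ to be a Laplacian and removes any contribution proportional to $w(e)^2$, so that the theory is massless rather than a massive or tadpole-shifted scalar. The remaining points — vanishing of the $x\ne z$ coefficients, symmetry of $\lambda_{e\to f}$, the bijective field redefinition $w=r^2$, and the identification of star-shaped loops with arrows of $X_{E_0}$ — are routine bookkeeping.
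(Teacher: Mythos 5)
Your proof is correct and is essentially the paper's own argument: vanishing of the $x\neq z$ coefficients on a tree from the sum rule in Lemma~\ref{lem:InnerProductOmega2X}, use of the $z=x$ sum rule to remove the diagonal (mass) terms and rewrite the quadratic form in $w=r^2$ as squared differences, and identification of the result with the graph Laplacian on $X_{E_0}$; your preliminary gauge fixing via Proposition~\ref{propsimply} is valid but unnecessary, since the paper works directly with the gauge-invariant variables $w_{x\to y}=W_{x\to y\to x}(u)=r^2_{x\to y}$, whose phases cancel automatically on trivial loops. The only discrepancy is the overall constant: you obtain $S=\frac{1}{2c}\int_{X_{E_0}}w\,\Delta w$, whereas the paper states $\frac{1}{4c}$ after inserting a factor $1/2$ for a claimed double counting in $\sum_{e}\sum_{f:e\to f}$; since two distinct edges of a simple graph share at most one vertex, each arrow $e\to f$ of $X_{E_0}$ corresponds to exactly one triple $(x;y,y')$, so no such factor arises, and your normalisation is the one consistent with $S_{A_{n+1}}[w]=\frac{2}{c}\sum_{i}(w_i-w_{i+1})^2$ in \eqref{eq:ActionsAn+1ZnWilsonLoop}. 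In any case this is an overall constant absorbable into $c$ and does not affect the stated equivalence.
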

\begin{proof} 
The Yang Mills action for a tree graph $X$ has the form
\begin{equation}\label{Swsimply}
S[w] = - \frac{1}{c} \sum_{x\to y \to x \to y'\to x} \lambda_{x\to y \to x \to y'} w_{x\to y}  w_{x\to y'}
\end{equation}
with the Wilson loop variables $w_{x\to y} := W_{x \to y\to x}(u) = r^2_{x\to y}$ satisfying $w_{x\to y} = w_{y\to x}$. There is no contribution from loops $x\to y \to z \to y \to x$ with $x\neq z$, since for such pairs vertices $x,z$ there is only one vertex $y$ between them, otherwise the undirected graph would not be a tree. Together with Lemma \ref{lem:InnerProductOmega2X}, this implies $\lambda_{x\to y\to z \to y} = 0$ for $x\neq z$. We can rewrite $S[w]$ as 
\[
S[w] = - \frac{1}{c}\sum_x \sum_{y,y':x\to y\to x\to y'\to x} \lambda_{x\to y\to x\to y'} w_{x\to y}w_{x\to y'} = 
 - \frac{1}{c}\sum_x\sum_{y:x\to y} \left( \lambda_{x\to y\to x\to y} w^2_{x\to y}  +  \sum_{y'\neq y: x\to y'} \lambda_{x\to y\to x\to y'} w_{x\to y}w_{x\to y'} \right).
\]
Using the relations from  Lemma \ref{lem:InnerProductOmega2X}  $\lambda_{x\to y\to x\to y} = -\sum_{y'\neq y: x\to y\to x\to y'\to x} \lambda_{x\to y \to x\to y'}$ (this follows from taking the complex conjugate of the first relation and then using the second) we get
\[S[w] = \frac{1}{c} \sum_x\sum_{y:x\to y}w_{x\to y} \sum_{y'\neq y:x\to y'} \lambda_{x\to y\to x\to y'}   \left(w_{x\to y}  - w_{x\to y'} \right)
= \frac{1}{2c} \sum_{e\in X_{E_0}} \sum_{f:e\to f} \lambda_{e\to f} w_e(w_e - w_f).
\]
where the $1/2$ comes from
\[
\sum_{e\in X_{E_0}} \sum_{f: e\to f} = \sum_x \sum_{y:x \to y} \sum_{y'\neq y:x\to y'} + \sum_y \sum_{y \to x} \sum_{x'\neq x: y\to x'} = 2\sum_x \sum_{y:x \to y} \sum_{y'\neq y:x\to y'}.
\]
We recognise the term in the action as a quarter of the graph Laplacian on the graph $X_{E_0} = E_0$ with arrows and inverse metric described above, leading to
\[
S[w] = \frac{1}{4c} \int_{X_{E_0}} w  \Delta  w 
\]
now with the positive scalar field $w\colon X_{E_0} \to \R_{>0}$. 
\end{proof}

Due to this equivalence, it is clear that the action has a translation symmetry which shifts $w \to w + \alpha$, $\alpha \in \R_{>0}$, and following the discussion in \cite{FaIva:fms}, that there is a zero mode corresponding to the sum $ \sum_{x\to y} w_{x\to y}$, which will need to be dealt with when quantising the theory. This  points to a kind of spontaneous symmetry breaking in these systems. Also note that for a non-simply connected graph we can just restrict the Yang-Mills action in Theorem~\ref{thm:Wact} to trivial loops, by separately restricting the loops to $z=x$ and $y=y'$, resulting in
\[
- \frac{1}{c} \sum_{x\to y \to x \to y'\to x} \lambda_{x\to y \to x \to y'} w_{x\to y}  w_{x\to y'} - \frac{1}{c} \sum_{x\to y \to z \to y\to x} \lambda_{x\to y \to z \to y} w_{x\to y}  w_{z\to y}
\]
The first term is \ref{Swsimply}, which appears as the kinetic term for a positive scalar field. Using $w_{x\to y} = w_{y\to x}$ and changing the labels, the second term can be rewritten to be
\[
- \frac{1}{c} \sum_{x\to y \to x \to y'\to x} \lambda_{y \to x \to y' \to x} w_{x\to y}  w_{x\to y'},
\]
which can be interpreted as another derivative coupling different from the Laplacian. In the case of $\Z_N$ we have $\lambda_{y \to x \to y' \to x} = 0$ and the additional terms from this vanish, therefore Theorem~\ref{thm:MasslessScalar} still applies as written. Some examples of $X_{E_0}$ including this case are:
\begin{center}    \begin{tabular}{c|c}
          $X$ & $ X_{E_0}$ \\
          \hline
          Starshaped graph with $N$ legs & $\Z_N$, $\CC = \Z_N \backslash\{e\}$ \\
          \hline
          $A_{n+1}$ & $A_{n}$\\
          \hline
          $\Z_N$, $\CC = \{\pm 1\}$ & $\Z_N$, $\CC = \{\pm 1\}$ \\
          \hline
          $\Z$, $\CC = \{\pm 1\}$ & $\Z$, $\CC = \{\pm 1\}$.
    \end{tabular}
    \end{center}


\subsection{NonAbelian case and example of $G=S_3$ for small graphs $X$}\label{secnona}

In the case of a nonAbelian group $G$, we still have, for each $\rho\ne 1$, a polar decomposition 
\[ u^\rho_{x\to y}= U_{x\to y} r_{x\to y},\quad  U_{x\to y}\in U(d_\rho),\quad r_{x\to y}\in M^{>0}_{d_\rho}(\C),\]
where $M^{>0}_{d_\rho}(\C)$ denotes positive Hermitian matrices. But now  
\[ u_{y\to x}= U_{y\to x}r_{y\to x}=r_{x\to y}U^{-1}_{x\to y}\]
tells us that
\[ r_{y\to x} = \sqrt{U_{x\to y} r^2_{x\to y} U^{-1}_{x\to y}},\quad U_{y\to x}=r_{x\to y}U^{-1}_{x\to y}r_{y\to x}^{-1},\]
which no longer decouples into independent reality constraints on $U,r$ but nevertheless tells us how they are determined under a change of arrow orientation. If $U_{x\to y}=1$ then $r_{y\to x}=r_{x\to y}$ and $U_{y\to x}=1$. 

Similarly, when we analyse $\CU^\times/\CG$, this is also more complicated as the unitary and Hermitian elements do not commute and the theory fails to decouple into positive and unitary parts. We still have the following. 

\begin{proposition}\label{propnona} For a bidirected tree graph, 
\[ \CU^\times/\CG\simeq\prod_{\rho\ne 1} M_{d_\rho}^{>0}(\C)^{E_0}/U(d_\rho)\]
as an assignment for each $\rho\ne 1$ of a positive matrix to every edge modulo an overall global gauge transformation by conjugation. 
\end{proposition}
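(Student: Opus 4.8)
The plan is to fix one nontrivial irrep $\rho$ and work entirely within the corresponding noncommutative $U(d_\rho)$-gauge theory supplied by the Peter--Weyl decomposition of Proposition~\ref{propdec}, assembling the product over $\rho\ne 1$ only at the very end. For a single $\rho$ the data is a matrix $u_{x\to y}\in GL(d_\rho,\C)$ on each oriented edge subject to $u_{x\to y}^\dagger=u_{y\to x}$, and the gauge group is a copy of $U(d_\rho)$ at each vertex acting by $u_{x\to y}\mapsto \gamma_x u_{x\to y}\gamma_y^\dagger$ (with $\gamma_y^\dagger=\gamma_y^{-1}$). The target is to show that every orbit meets the slice $\mathcal S:=M^{>0}_{d_\rho}(\C)^{E_0}$ of configurations that are positive Hermitian on every edge, and that the residual freedom on this slice is exactly one global conjugation by $U(d_\rho)$.

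First I would establish the gauge-fixing algorithm, paralleling Proposition~\ref{propsimply} but with the matrix polar decomposition playing the role of the scalar polar form $r\,e^{\imath\theta}$. Root the tree at a vertex $x_0$, choose $\gamma_{x_0}$ arbitrarily, and propagate outward: along each parent-to-child edge $x\to y$ take the polar decomposition $\gamma_x u_{x\to y}=PV$ with $P\in M^{>0}_{d_\rho}(\C)$ and $V$ unitary, and set $\gamma_y:=V$, so that $\gamma_x u_{x\to y}\gamma_y^\dagger=P$ is positive Hermitian. Uniqueness of the polar decomposition of an invertible matrix makes $\gamma_y$ uniquely determined by $\gamma_x$, and because the tree has no cycles each vertex has a unique parent, so there is never a conflict and $\gamma$ is determined on all of $X$ once $\gamma_{x_0}$ is fixed. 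The reverse orientation is then automatically positive Hermitian, $u^\gamma_{y\to x}=(u^\gamma_{x\to y})^\dagger=P$, so the gauge-fixed data is a single positive Hermitian matrix $r_e$ per undirected edge, i.e. an element of $\mathcal S$.

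The crux -- and the step I expect to be the main obstacle -- is to pin down the residual freedom, since here the positive and unitary parts no longer decouple. Two facts are needed. For the first, replacing $\gamma_{x_0}$ by $g\gamma_{x_0}$ with $g\in U(d_\rho)$ multiplies every $\gamma_x$ on the left by $g$, by induction using $\sqrt{g X g^{-1}}=g\sqrt{X}\,g^{-1}$ for $X>0$; hence the entire configuration is conjugated, $r_e\mapsto g r_e g^{-1}$, so the slice configurations obtained from a fixed $u$ form exactly one orbit under global conjugation. Conversely, any gauge transformation $\delta$ carrying one slice configuration $\{r_e\}$ to another must be constant: since $\{r_e\}$ is already gauge-fixed, running the propagation from $\delta_{x_0}=g$ and using that the polar decomposition of $g r$ with $r>0$ is $gr=(grg^{-1})\,g$, whose unitary part is $g$, forces $\delta_x=g$ for every $x$. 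Thus no non-constant local gauge transformation preserves the slice. This is precisely where the non-commutativity of $U(d_\rho)$ with the positive matrices enters, in contrast to the Abelian tree case of Proposition~\ref{propsimply}, where conjugation by a phase is trivial and leaves no residual quotient.

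Finally I would assemble the bijection. Gauge-fixing defines a map $\CU^\times\to \mathcal S/U(d_\rho)$ which is well-defined by the first fact, is manifestly gauge-invariant, and so descends to $\CU^\times/\CG$; it is surjective because a slice configuration gauge-fixes to itself, and injective because two configurations whose gauge-fixed data agree up to a global $g$ are related through the slice by the constant gauge transformation $\gamma_x\equiv g$. Taking the product over all $\rho\ne 1$ then yields $\CU^\times/\CG\simeq\prod_{\rho\ne 1}M^{>0}_{d_\rho}(\C)^{E_0}/U(d_\rho)$, as claimed.
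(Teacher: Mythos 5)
Your proof is correct and follows essentially the same route as the paper's: gauge-fix by propagating the unitary part of the matrix polar decomposition outward along the tree, then use uniqueness of the polar decomposition (your observation that $g r=(grg^{-1})g$ for $r>0$ is exactly the paper's $r'=1\cdot r'$ factorisation step) to show the only residual freedom on the positive slice is a global conjugation. The only differences are cosmetic: you take the left polar form of $\gamma_x u_{x\to y}$ where the paper writes $u_{x\to y}=U_{x\to y}r_{x\to y}$ and sets $\gamma_y=\gamma_x U_{x\to y}$, and you run the constancy argument as an induction from the root rather than edge-by-edge.
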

\proof We consider each $\rho\ne 1$ at a time, of some dimension $d$. We follow the same steps as in the proof of  Proposition~\ref{propsimply}, but this time $u_{x_i\to x_{i+1}}=U_{x_i\to x_{i+1}}r_{x_i\to x_{i+1}}$ is a matrix polar decomposition. Starting a chain at $x_0$, we choose $\gamma_{x_0}$ randomly and choose $\gamma_{x_1} = \gamma_{x_0}U_{x_0\to x_1}$ so that 
\[ \gamma_{x_0}U_{x_0\to x_1}r_{x_0\to x_1}\gamma_{x_1}^{-1}=\gamma_{x_0}U_{x_0\to x_1}\gamma_{x_1}^{-1}r'_{x_0\to x_1}=r'_{x_0\to x_1}\]
where $r'_{x_0\to x_1}:=\gamma_{x_1}r_{x_0\to x_1}\gamma_{x_1}^{-1}$ is a conjugate of $r_{x_0\to x_1}$ and hence another positive matrix. This puts $\gamma_{x_1}$ at the start of $x_1\to x_2$ and we repeat the process by choosing $\gamma_{x_2}$ to render $r'_{x_1\to x_2}$ alone.  Then dropping the primes, we see that every connection can be gauge transformed to a `positive' one where there is only a matrix  $r_{x\to y}\in M_d^{>0}(\C)$ at each edge. The reverse direction necessarily gauge transforms correctly, but one can check for example at the first step that $u_{x_1\to x_0}$ gauge transforms to 
\[ \gamma_{x_1}U_{x_1\to x_0}r_{x_1\to x_0}\gamma^{-1}_{x_0}=\gamma_{x_1}r_{x_0\to x_1}\gamma_{x_1}^{-1}\gamma_{x_1}U^{-1}_{x_0\to x_1}\gamma^{-1}_{x_0}=r'_{x_0\to x_1}=r'_{x_1\to x_0}.\]

Next, a gauge transform $\{\gamma_{x}\}$ connecting two positive configurations $\{r_{x\to y}\}$ and $\{r'_{x\to y}\}$ requires 
\[ \gamma_x\gamma_y^{-1} (\gamma_y r_{x\to y}\gamma_y^{-1})= r'_{x\to y}.\]
But $\gamma_y r_{x\to y}\gamma_y^{-1}$ is positive and by uniqueness of factorisation of $r'_{x\to y}=1\cdot r'_{x\to y}$ into an unitary times a positive matrix, we see that $\gamma_x\gamma_y^{-1} =1$ and $\gamma_y r_{x\to y}\gamma_y^{-1}= r'_{x\to y}$. \endproof

 As an example, we can take  $G=S_3$, which has 2 nontrivial representations $\rho$, one of them 1-dimensional so a noncommutative $U(1)$ as before, and the other 2-dimensional, which we focus on.

\begin{lemma} $M_2^{>0}(\C)$ can be identified with the positive light cone in $\R^{1,3}$ Minkowski space. Conjugation by $U(2)$ is then a spatial rotation. 
\end{lemma}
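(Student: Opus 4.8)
The plan is to use the standard Pauli parametrisation of $2\times 2$ Hermitian matrices. First I would write an arbitrary Hermitian $H\in M_2(\C)$ as
\[ H = t\,\id + x\,\sigma_1 + y\,\sigma_2 + z\,\sigma_3 = \begin{pmatrix} t+z & x-\imath y \\ x+\imath y & t-z \end{pmatrix},\qquad (t,x,y,z)\in\R^4, \]
where $\sigma_1,\sigma_2,\sigma_3$ are the Pauli matrices. This is a real-linear isomorphism from the $4$-dimensional space of Hermitian matrices onto $\R^4$, which I identify with $\R^{1,3}$ by taking $t$ as the time coordinate and $\vec n=(x,y,z)$ as the spatial part. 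An immediate computation gives $\tfrac12\Tr H = t$ and $\det H = t^2 - x^2 - y^2 - z^2$, so the determinant is exactly the Minkowski norm of $(t,\vec n)$.

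Next I would characterise positivity in these coordinates. Using $(\vec n\cdot\vec\sigma)^2 = |\vec n|^2\,\id$, the eigenvalues of $H$ are $t\pm|\vec n|$. Hence $H$ is positive (both eigenvalues $\ge 0$) if and only if $t\ge 0$ and $t^2\ge x^2+y^2+z^2$, which is precisely the condition that $(t,\vec n)$ lie in the closed forward light cone of $\R^{1,3}$; strict positive-definiteness corresponds to the open cone. This establishes the identification of $M_2^{>0}(\C)$ with the positive light cone.

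For the second statement I would reduce conjugation by $U(2)$ to conjugation by $SU(2)$: every $g\in U(2)$ factors as $g=e^{\imath\phi}u$ with $u\in SU(2)$, and the central phase acts trivially, so $gHg^{-1}=uHu^{-1}$. Since conjugation preserves both $\Tr$ and $\det$, it fixes the time coordinate $t=\tfrac12\Tr H$ and preserves the Minkowski norm, hence also preserves $|\vec n|^2 = t^2-\det H$. The induced action on $\vec n$ is therefore a norm-preserving linear map fixing the origin, i.e. an element of $O(3)$, and lies in $SO(3)$ since $SU(2)$ is connected. Concretely I would invoke the standard double cover $SU(2)\to SO(3)$, defined by $u(\vec n\cdot\vec\sigma)u^{-1} = (R_u\vec n)\cdot\vec\sigma$, to conclude that conjugation realises a genuine spatial rotation, and that every spatial rotation arises this way by surjectivity of the cover.

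All of these computations are elementary and standard; the only points needing a little care are the bookkeeping distinguishing the open versus closed cone (equivalently positive-definite versus positive-semidefinite in the meaning of $M_2^{>0}$), and the observation that the $U(1)$ centre of $U(2)$ drops out under conjugation so that the effective symmetry group acting on $\R^{1,3}$ is exactly $SO(3)$ rather than anything larger.
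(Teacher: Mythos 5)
Your proof is correct and follows essentially the same route as the paper: the Pauli-matrix identification of Hermitian $2\times 2$ matrices with $\R^{1,3}$, with $t=\tfrac12\Tr H$ and $\det H$ the Minkowski norm, positivity characterised by the forward-cone condition, and conjugation acting as a rotation on the spatial part. Your version merely fills in more detail (explicit eigenvalues $t\pm|\vec n|$, factoring out the $U(1)$ centre, and the $SU(2)\to SO(3)$ double cover) where the paper argues via the signs of the trace and determinant and simply cites the rotation fact.
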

\proof We use the standard picture of $\R^{1,3}$ as the Hermitian elements of $M_2(\C)$ via Pauli matrices according to $X=t {\rm id} + \sum_i x^i \sigma^i$. Here $\det(X)=t^2-\sum_i (x^i)^2$ and $t={1\over 2}{\rm Tr}(X)$. If $X$ is a positive Hermitian matrix then both of these are positive, while conversely if both of these are positive then the eigenvalues have the same sign and their sum is positive, hence $X$ is positive. Meanwhile, conjugation of $\sum_i x^i\sigma^i$ rotates the vector $\{x^i\}$.  \endproof

By similar arguments as in the proof for Proposition \ref{propnona}, we have a parallel result for a general graph $X$. As before, we let $T$ be a maximal spanning tree  and $L=|E_0|-|X|+1$ the number of  fundamental loops of the undirected graph.

\begin{proposition}\label{propplanar}
Let $G$ be a nonAbelian finite group and $X$ a bidirected planar graph with $L$ fundamental loops. Then $\CU^\times / \CG \simeq \prod_{\rho\neq 1}(M^{>0}_d(\C)^{|E_0|} \times U(d)^L)/U(d)$.
\end{proposition}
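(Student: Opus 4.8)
The plan is to reduce to a single nontrivial irrep and then transplant the tree argument of Proposition~\ref{propnona}, with the $L$ non-tree edges supplying the extra unitary factors exactly as the fundamental loops do in the Abelian general-graph case. First I would invoke the Peter-Weyl decomposition of Proposition~\ref{propdec} to work with one $\rho\ne 1$ at a time, writing $d=d_\rho$. The data is then an invertible matrix $u_{x\to y}\in GL_d(\C)$ on each oriented edge subject to the reality constraint $u_{y\to x}=u_{x\to y}^\dagger$, and the gauge group is a copy of $U(d)$ at every vertex acting by $u_{x\to y}\mapsto \gamma_x u_{x\to y}\gamma_y^{-1}$ (with $\gamma_y^{-1}=\gamma_y^\dagger$). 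Since each undirected edge carries one free element of $GL_d(\C)$ (the opposite orientation being determined), these matrices are mutually independent.

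Next I would fix a maximal spanning tree $T$ and run the gauge-fixing algorithm from the proof of Proposition~\ref{propnona}: starting from a root $x_0$, propagate along $T$, at each traversed edge choosing $\gamma$ so as to absorb the unitary part of that edge's holonomy, leaving a positive matrix $r_{x\to y}\in M^{>0}_d(\C)$. Because $T$ has no loops this meets no obstruction and determines all $\{\gamma_x\}$ uniquely up to the single free choice of $\gamma_{x_0}$. The $L=|E_0|-|X|+1$ edges not in $T$ cannot be adjusted further; polar-decomposing each such holonomy as $U_e r_e$ with $U_e\in U(d)$ and $r_e\in M^{>0}_d(\C)$ yields the $U(d)^L$ factor, while the positive parts of all $|E_0|$ edges (tree and non-tree together) supply $M^{>0}_d(\C)^{|E_0|}$. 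Here I would note that the non-tree holonomies, being free before gauge-fixing and only multiplied by fixed group elements afterwards, range freely, so the $U_e$ and $r_e$ are genuinely independent.

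For the residual identifications I would argue exactly as in Proposition~\ref{propnona}. A gauge transformation carrying one standard form to another must preserve each tree edge's positive matrix; by uniqueness of the factorisation of a positive matrix as $1\cdot r$ into a unitary times a positive matrix, this forces $\gamma_x\gamma_y^{-1}=1$ across every edge of $T$, so $\gamma$ is constant on the connected tree and equals a single global $g\in U(d)$. This $g$ then acts on the entire configuration by simultaneous conjugation, keeping the $r_e$ positive and merely conjugating the $U_e$, so the only surviving identification is this global $U(d)$. Hence the moduli space for each $\rho$ is $(M^{>0}_d(\C)^{|E_0|}\times U(d)^L)/U(d)$, and taking the product over $\rho\ne 1$ gives the stated result.

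The main obstacle is precisely this residual-gauge step, where, in contrast to the Abelian case, the positive and unitary parts no longer decouple: one must use uniqueness of the polar decomposition both to reduce the tree to positive form and to pin the leftover freedom down to global conjugation, taking care that conjugation preserves positivity. Planarity itself is not needed for the gauge-fixing argument, which uses only that $X$ is connected; it enters to interpret the $L$ fundamental loops as the bounded faces of $X$ via Euler's formula $L=|E_0|-|X|+1$, so that the $U(d)^L$ factor is naturally indexed by the plaquettes of the planar embedding, paralleling the Abelian general connected-graph computation.
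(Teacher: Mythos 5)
Your proposal is correct and follows essentially the same route as the paper, which (in place of a formal proof) illustrates exactly this argument on $\Z_3$ and a two-loop graph: per-irrep reduction via Proposition~\ref{propdec}, spanning-tree gauge fixing by absorbing unitary polar parts as in Proposition~\ref{propnona}, the $L$ non-tree edges left stuck in $M^{>0}_{d_\rho}(\C)\times U(d_\rho)$, and uniqueness of the polar decomposition pinning the residual gauge freedom to a single global $U(d_\rho)$ acting by conjugation. Your observation that planarity is inessential to the gauge-fixing (only connectivity and the count $L=|E_0|-|X|+1$ matter) is consistent with the paper's own general-connected-graph treatment in the Abelian case.
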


Rather than giving a formal proof, we illustrate how this works for $\Z_3$ and for a planar graph with two loops. For $\Z_3$, we choose $\gamma_1$ randomly, choose $\gamma_2$ to transform $u_{1\to 2}$ to $r'_{1\to 2}$. We similarly choose $\gamma_3$ to do the same for $u_{2\to 3}$. We took these two edges as the spanning tree. However, for the last edge (which provides the fundamental loop around the triangle), we are stuck with $\gamma_3 U_{3\to 1}r_{3\to 1}\gamma_1^{-1}=U'_{3\to 0}r'_{3\to 1}$ without being able to set $U'_{3\to 1}=1$. Hence every regular connection can be gauge transformed to one with one unitary and three positive matrices on the arrows (in one direction) like this. If we were to further gauge transform such a configuration then by arguments as above, we would need $\gamma_1=\gamma_2=\gamma_3$, so a global one. Hence, the moduli space for each $U(d)$-gauge theory component is $(M_d^{>0}(\C)^3\times U(d))/U(d)$, where the latter  $U(d)$ acts by conjugation. It is important to note that the representative of $u$ in $\CU^\times / \CG$ depends on the edge where one does not erase the polar part of the holonomy. For two loops,  we consider two triangles
 \[
\begin{tikzcd}
 & 1 \arrow[rd,shift left]\arrow[ld,shift left] \arrow[dd,shift left] & \\
2 \arrow[rd,shift left]\arrow[ru,shift left ] & {}  & 4 \arrow[ld,shift left] \arrow[lu,shift left]\\
  & 3\arrow[ru,shift left] \arrow[lu,shift left] \arrow[uu,shift left] & \\
\end{tikzcd}
\]
and take spanning tree starting at 1 and branching to 2,3,4. Then following the steps as in the proof of Proposition~\ref{propnona}, we  choose $\gamma_1 = e$ and then $\gamma_2$, $\gamma_3$, $\gamma_4$ to fix the holonomies $u_{1\to 2}$, $u_{1\to 3}$ and $u_{1\to 4}$ to be in $M^{>0}_d$. There are then no more $\gamma_i$ to fix remaining  $u_{2\to 3}$, $u_{3\to 4}$, so they will be stuck in $M^{>0}_d \times U(d)$. These correspond to each triangle as a fundamental loop. As before, a gauge transformation connecting two such configurations needs to be a global one. We see this by restricting the argument presented before to the spanning tree without the arrows $2 \leftrightarrow 3$, $3\leftrightarrow 4$.

\section{Abelian quantum gauge field theory via functional integrals}\label{secQFT}

In this section, we look at quantum gauge field theory (QGFT) for our Yang-Mills actions on graphs using a functional integral approach. Because a graph is a perfectly good (but noncommutative) geometry, we see the finite graph and discrete lattice cases as exact theories which are not necessarily approximations of a continuum theory, and where the finite case can be fully computed numerically. We focus on the case of Abelian structure group $G$. The choice $G=\Z_2$ is equivalent as we have seen above to noncommutative $U(1)$-gauge theory, but this already carries the flavour of the nonAbelian case.

\subsection{Measure on $\CU^\times$ and partition function}
\label{sec:measureAb}

To start, we need to define an integration measure on the space of connections $\CA$. As this is a finite dimensional vector space in our setting, one can simply take the Lesbegue measure. Note that this will also be the Lesbegue measure on $\CU$, as the map $\CA \to \CU$ is given by a shift. In the case of Abelian gauge theory we have the isomorphism $\CU^\times \simeq (\C^\times)^{|E_0|(|G|-1)}$ via the Fourier transform, and thus can define the integral on $\CU^\times$ by decomposing $\C^\times$ into polar coordinates
\begin{align*}
	\int_{\CU^\times} \CD u \coloneqq  \int_{(\R_{>0} \times U(1))^{|E_0|(|G|-1)}} \prod_{\rho\neq 1} \prod_{x \leftrightarrow y}  r^\rho_{x\to y} \extd r^\rho_{x\to y} \extd \theta^\rho_{x\to y} 
\end{align*}
where $\prod_{x \leftrightarrow y}$ means that an orientation on the arrows is chosen, i.e. either $x\to y$ or $y\to x$. Choosing differently only changes the integration by an overall sign, as $r^\rho_{x\to y} = r^\rho_{y\to x}$, $\theta^\rho_{x\to y} = -\theta^\rho_{y\to x}$, which is inconsequential when computing expectation values. Furthermore, gauge transformations change $\theta^\rho_{x\to y}$ by a shift. The measure is therefore gauge invariant and descends to the moduli space $\CU^\times / \CG$. Using the Wilson loop variables $w^\rho_{x\to y} := W^\rho_{x \to y\to x}(u) = (r^\rho_{x\to y})^2$ results in the partition function
\[
	\CZ = \int_{\CU^\times / \CG} \CD[u] e^{-S[u]} = \prod_{\rho \neq 1} \int_{\R^{|E_0|}_{>0}\times U(1)^L} \prod_{x\leftrightarrow y} \prod^L_{l=1} \extd w^\rho_{x\to y} \extd \theta^\rho_{l}  e^{-S[w,\theta]}
\]
for a planar graph $X$ where the index $l$ runs over the fundamental loops of $X$. We see that the quantised Abelian gauge theory on $X$ is just $|G|-1$ copies of $\C\Z_2$-gauge theory on $X$, we therefore will drop the $\rho$ index and will only consider the theory at a fixed representation from now on.

To be concrete, we fix the exterior algebra to $\Omega_{min}$, so the action in this section will be the $S=S_{min}$ case. This obeys 
$S[w,\theta] = S_{nt}[w,\theta] + S_{t}[w]$ where the first term is a sum over all non-trivial loops appearing in the action (i.e. fundamental loops) and the second over the trivial ones, the latter of which does not depend on the phases $U(1)$, as these cancel out over trivial loops due to $u_{x\to y} = u^*_{y\to x}$. It is interesting to note that the phase degrees of freedom can be integrated out to give an effective theory for the $w$ variables
\[
\CZ = \int \CD w \CD \theta e^{-S_{nt}[w,\theta] - S_{t}[w]}.
\]
Let $L_{nt}$ denote the set of non-trivial loops of the graph $X$, then the non-trivial loop part of the action can be written as
\begin{align*}
S_{nt}[r,\theta]  &= -\frac{1}{c} \sum_{x\to y\to z\to y'\to x\in L_{nt}} \mathrm{Re}[ \lambda_{x\to y\to z\to y'}W^{>0}_{x\to y\to z\to y'\to x}[r] W^{U(1)}_{x\to y\to z\to y'\to x}[\theta]]\\
&=  - \frac{1}{c} \sum_{x\to y\to z\to y'\to x\in L_{nt}}
W^{>0}_{x\to y\to z\to y'\to x}[r]( \mathrm{Re}[ \lambda_{x\to y\to z\to y'} ]\cos(\theta_{x\to y\to z\to y'\to x} ) - \mathrm{Im}[ \lambda_{x\to y\to z\to y'} ]\sin(\theta_{x\to y\to z\to y'\to x} ))\\
&=-\frac{1}{c} \sum_{x\to y\to z\to y'\to x\in L_{nt}}| \lambda_{x\to y\to z\to y'}| W^{>0}_{x\to y\to z\to y'\to x}[r]\cos(\varphi^\lambda_{x\to y\to z\to y'} + \theta_{x\to y\to z\to y'\to x} ) \end{align*}
where in the first equality we use that the terms for $x\to y\to z\to y' \to x$ and $x\to y'\to z\to y \to x$ are the complex conjugate to one another, and take $\varphi^\lambda_{x\to y\to z\to y'} $ to be the phase of $\lambda_{x\to y \to z\to y'}$. As the phase degrees of freedom over different fundamental loops on a planar graph are independent, we can explicitly compute the integrals to give the effective contribution
\[
S_{nt,eff}[r]  = - \ln \left( \int \CD \theta e^{-S_{nt}^{\rho}[r,\theta]} \right) = -\sum_{x\to y\to z\to y'\to x\in L_{nt}}\ln\left( 2\pi I_{0}(c_\rho|\lambda_{x\to y\to z\to y'}| W^{>0}_{x\to y\to z\to y'\to x}[r] )\right)
\] 
where $I_0$ is the modified Bessel function. This generalises the results of \cite{Ma:cli}, where the partition function of noncommutative $U(1)$-gauge theory was also found to be proportional to the Bessel function after integrating out the phase degrees of freedom.

\subsection{Abelian QGFT on Euclidean $A_{n+1}$ and $\Z_n$}

Here we give results for  $\C\Z_2$-gauge theory on $X$ a discrete open line $\bullet-\bullet-\cdots-\bullet$ and $X$ a discrete circle, both with $n$ edges and represented by the Dynkin graph $A_{n+1}$ and the $n$-gon $\Z_n$ respectively. We take the Euclidean metric $\lambda_{i\to i\pm1 }=1$ for both, which by Proposition \ref{eq:SolLambda(,)2} with $A_{i,i} = 2$ gives $\lambda_{i\to i+1\to i \to i+1} =-2$ for $A_{n+1}$ and $\Z_n$. In terms of the Wilson loop variables $w_i \coloneqq u_{i\to i+1} u_{i+1\to i} = r^2_{i\to i+1}$ the actions as in Examples \ref{ex:ActionAn1}, \ref{ex:ActionZ} are
\begin{equation}
\label{eq:ActionsAn+1ZnWilsonLoop}
S_{A_{n+1}}[w] = \frac{2}{c} \sum^{n-1}_{i=1} (w_i - w_{i+1})^2 ,
\quad 
S_{\Z_n}[w] = \frac{2}{c} \sum^{n}_{i=1} w_i (w_i - w_{i-1}),
\end{equation}
where in the latter case the index is mod $n$. As discussed after Theorem \ref{thm:MasslessScalar}, these are the actions for a massless positive scalar field $w$ on $A_n, \Z_n$ respectively, and have a zero mode corresponding to the sum $\sum^{n}_{i=1} w_i$ which makes the path integral (IR) divergent. We regulate this divergence by introducing a scale $L$ as an upper value for the zero mode
\[
\CZ_{A_{n+1},L} = \int_{\R^n_{>}} \extd^n w \,\,\, \chi_{[0,L]}\left(\sum^{n}_{i=1} w_i\right)\, e^{-S_{A_{n+1}}[w]},\quad \CZ_{\Z_{n},L} = 2\pi \int_{\R^n_{>}} \extd^n w \,\,\, \chi_{[0,L]}\left(\sum^{n}_{i=1} w_i\right)\, e^{-S_{\Z_n}[w]}
\]
where $\chi_{[0,L]}$ is the characteristic function for the interval $[0,L]$. Here $\CZ_{\Z_{n},L}$ has an extra $2\pi$ factor due to the extra phase in the moduli space as seen in Proposition \ref{proploop}. To make the divergence explicit, first make a coordinate change to increasing coordinates $s'_1 \leq s'_2 \leq \dots \leq s'_n$, $s'_i = \sum^i_{j=1} w_j$ and then to the coordinates
\begin{equation}
    \label{eq:VariablesS}
	s_i = \frac{s'_i}{s'_{i+1}} = \frac{ \sum^i_{j=1} w_j}{ \sum^{i+1}_{j=1} w_j},
	\quad \quad
	 s_n = \frac{s'_n}{L} = \frac{ \sum^n_{j=1} w_j}{L}
\end{equation}
where $i=1,\dots,N-1$. Changing to the new coordinates results in the following the partition function for the $A_{n+1}$ graph
\[
\CZ_{A_{n+1},L} = \int^L_0 \extd s'_n \int^{s'_{n}}_0 \extd s'_{n-1} \dots \int^{s'_3}_0 \extd s'_2 \int^{s'_{2}}_0 \extd s'_{1}e^{-S_{A_{n+1}}[s']}  = L^{n-1} \int_{[0,1]^n} \CD s e^{-S_{A_{n+1}}[s]} 
\]
with the measure $\CD s = s^{n-1}_n  s^{n-2}_{n-1} \cdots s^2_3 s_2 \extd^n s$, similarely for $\Z_n$. The Wilson loop variables can be recovered via
\[
w_1 = s_1 s_2 \cdots s_n L, \quad \quad w_i = (1-s_{i-1})  s_i \cdots s_n L
\]
which in turn results in the following actions in the new coordinates
\[
S_{A_{n+1}}[s] =  S_{A_{n+1},R}[s]/c_R;\quad c_R= \frac{c}{2L^2},\quad   S_{A_{n+1},R}[s]= \sum^{n-1}_{i=1}((2-s_{i-1}) s_i -1)^2 s^2_{i+1} \cdots s^2_{n} 
\]
where we have set $s_0 = 0$ and use the rescaled coupling $c_R$ and $S_{A_{n+1},R}[s]$ as stated. For $\Z_n$ we similarly have $S_{\Z_n}[s] =  S_{\Z_n,R}[s]/c_R$ with
\[
S_{\Z_n,R}[s]= (s^2_1\cdots s^2_{n-1} - s_1 \cdots s_{n-1}(1-s_{n-1})) s^2_n + (1-3s_1+2s^2_1 )s^2_2\cdots s^2_n + \sum^{n}_{i=3}(1-3s_{i-1} + 2 s^2_{i-1} + s_{i-1} s_{i-2} -s^2_{i-1}s_{i-2})s^2_{i} \cdots s^2_{n} 
\]

It is now clear that the only divergent part of $\CZ_L$ for both $A_{n+1}$ and $\Z_{n}$ is given by the factor $L^{n-1}$, and that expectation values are finite if the observables do not depend on $L$, i.e. if they are functions of the rescaled Wilson loop variables
\begin{equation}
\label{eq:WilsonLoopInSvariables}
w_{R,1} = \frac{w_1}{L} = s_{1} s_2 \cdots s_n,
\quad \quad
w_{R,i} = \frac{w_i}{L} = (1-s_{i-1}) s_i \cdots s_n.
\end{equation}
The expectation values of such observables as functions of $c_R$,
\[
\langle \CO \rangle_{A_{n+1}} = \frac{\int_{[0,1]^n} \CD s \CO e^{- S_{A_{n+1}}[s]/c_R} }{\int_{[0,1]^n} \CD s e^{-S_{A_{n+1}}[s]/c_R}},
\quad
\langle \CO \rangle_{\Z_{n}} = \frac{\int^{2\pi}_0\extd \theta \int_{[0,1]^n} \CD s \CO e^{- S_{\Z_{n}}[s]/c_R} }{2\pi \int_{[0,1]^n} \CD s e^{-S_{\Z_{n}}[s]/c_R}}
\]
are indeed finite and independent of $L$, only depending on the coupling $c_R$.

\subsubsection{Exact results for $A_3$}
\label{sec:ExactA3}

The simplest case $n=2$ has $S=0$ for $X=\Z_2$ according to (\ref{eq:ActionsAn+1ZnWilsonLoop}), while the finite line $X=A_3$ $\bullet - \bullet - \bullet$ is simple enough to be solved exactly. Here, the action is $S_{R}[s] = (2s_1-1)^2 s^2_2 = 4 \tilde s^2_1 s_2^2$ with $\tilde s^2_1 = s_1 - \frac{1}{2}$, and we can compute the partition function and expectation values of Wilson loops $w_1,w_2$ to be
\begin{equation}
    \label{eq:PartFunA3}
\CZ_{A_3} = \int^1_0 \extd s_2 s_2 \int^{\frac{1}{2}}_{-\frac{1}{2}} \extd \tilde s_1 e^{-\frac{4}{c_R}\tilde s^2_1 s_2^2 } 
= \frac{1}{2} c_R \left( \sqrt{\frac{\pi}{c_R}} \text{erf}\left(\sqrt{\frac{1}{c_R}}\right)+e^{-\frac{1}{c_R}}-1\right),
\end{equation}
\begin{equation}
    \label{eq:ExpValwA3}
\langle w_1 \rangle = \langle w_2 \rangle = \frac{\sqrt{\pi } (c_R-2) e^{\frac{1}{c_R}} \text{erf}\left(\frac{1}{\sqrt{c_R}}\right)-2 \sqrt{c_R}}{8
\sqrt{c_R} \left(e^{\frac{1}{c_R}}-1\right)-8 \sqrt{\pi } e^{\frac{1}{c_R}}
\text{erf}\left(\frac{1}{\sqrt{c_R}}\right)}
\end{equation}
with variances $(\Delta w_i)^2 = \langle w^2_i \rangle-\langle w_i \rangle^2$ and connected correlation $\langle w_1 w_2 \rangle^c = \langle w_1 w_2 \rangle - \langle w_1 \rangle \langle w_2 \rangle$
\[     (\Delta w_i)^2 = \frac{\left(\frac{\sqrt{\pi } (3 c_R+2) \text{erf}\left(\frac{1}{\sqrt{c_R}}\right)}{\sqrt{c_R}}-8 c_R+(8 c_R+2) e^{-1/c_R}\right) F(x)-\frac{3 e^{-2/c_R} \left(\sqrt{\pi } (c_R-2) e^{\frac{1}{c_R}} \text{erf}\left(\frac{1}{\sqrt{c_R}}\right)-2 \sqrt{c_R}\right)^2}{4 c_R}}{12\, F(x)^2}
\]
\[     
    \langle w_1 w_2 \rangle^c =\frac{e^{-1/c_R} \left(-\frac{\sqrt{\pi } (3 c_R-2) e^{\frac{1}{c_R}}
    \text{erf}\left(\frac{1}{\sqrt{c_R}}\right)}{\sqrt{c_R}}+4 c_R
    \left(e^{\frac{1}{c_R}}-1\right)+2\right) F(x)-\frac{3 e^{-2/c_R} \left(\sqrt{\pi } (c_R-2)
    e^{\frac{1}{c_R}} \text{erf}\left(\frac{1}{\sqrt{c_R}}\right)-2 \sqrt{c_R}\right)^2}{4
    c_R}}{12 \, F(x)^2}
\]
where $F(x) \coloneqq \left(E_{\frac{3}{2}}\left(\frac{1}{c_R}\right)+2
\sqrt{\frac{\pi }{c_R}}-2\right)$. Here $\text{erf}(x) = \int_{-\infty}^{x} e^{-t^2} \extd t$, $E_{\frac{3}{2}}\left(x\right) = \int_{1}^{\infty} e^{-xt}t^{-\frac{2}{3}} \extd t$ are the error function and the exponential integral function of integer order for $n=2/3$. We plot these in Figure~\ref{fig:ExactResultsA3}, with grey dotted and dashed lines representing the $c_R \to 0, \infty$ limits respectively as determined in the following sections.

\begin{figure}[h!]
    \centering
    \includegraphics[width=1\textwidth]{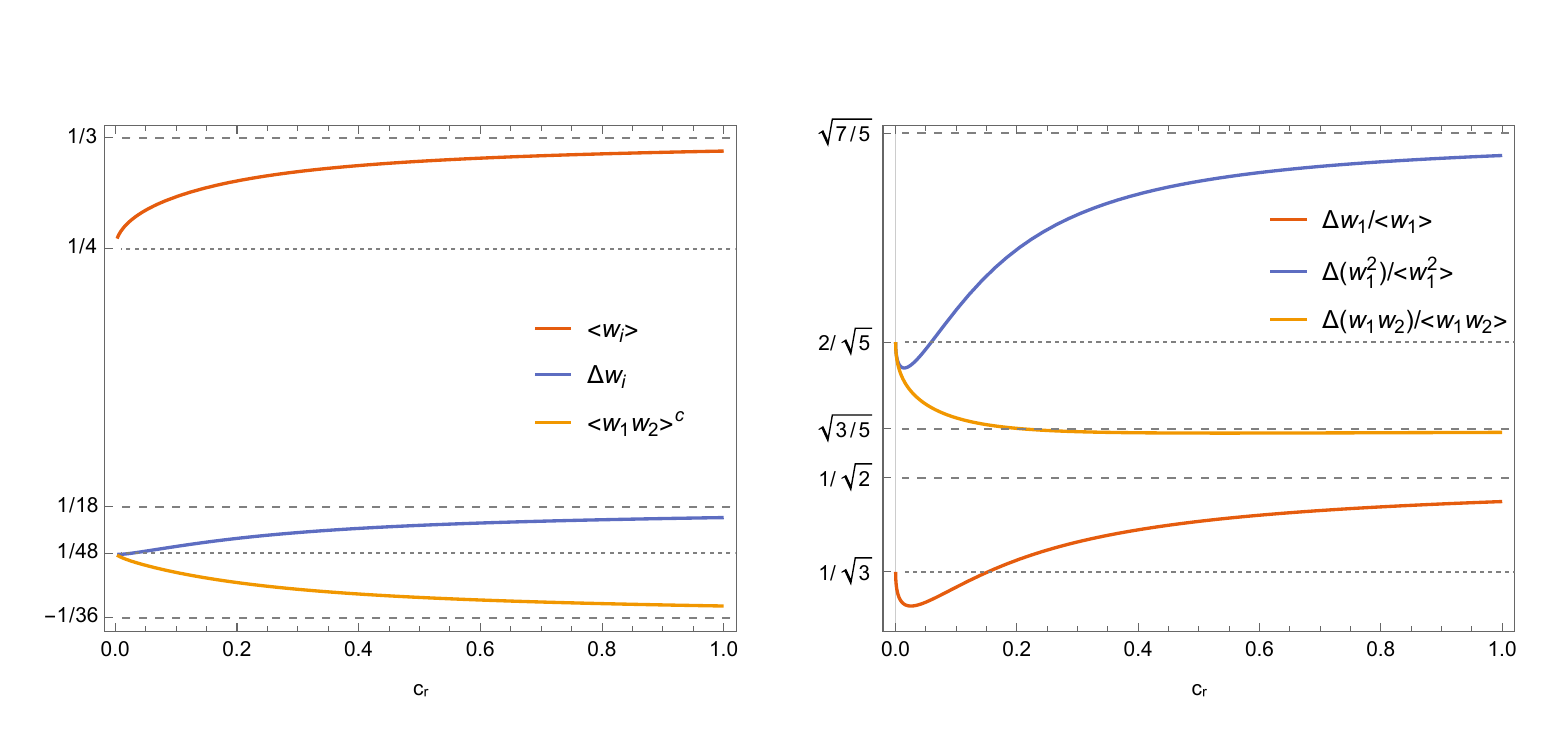}
    \caption{Wilson loop correlation functions for  $X=A_3=\bullet-\bullet-\bullet$ plotted against the coupling constant $c_R$. The dotted and dashed grey lines denote the $c_R \to 0,\infty$ limits, respectively.}
    \label{fig:ExactResultsA3}
\end{figure}

\subsubsection{Expansion at strong coupling}

We next look expansions for strong and weak coupling $c_R=c/(2 L^2)$, starting here with the strong case. We aim to compute the $c_R\to \infty$ limit of $\C\Z_2$-gauge theory on $A_{n+1}$ and $\Z_n$. Expanding the exponential to first order in $1/c_R$ and writing
\[
\langle \CO \rangle = \frac{\langle \CO \rangle _{\infty} -  \langle \CO S \rangle _{\infty}/c_R +O(c^{-2}_R)}{1 -  \langle  S \rangle _{\infty}/c_R +O(c^{-2}_R)} = \langle \CO \rangle _{\infty} +\frac{1}{c_R} \left(\langle \CO \rangle _{\infty}  \langle S \rangle _{\infty} - \langle \CO  S\rangle _{\infty} \right)+O(c^{-2}_R),
\]
where $\langle \ \rangle _{\infty}$ denotes the expectation value in the limit $c_R\to \infty$ with $e^{-S_R/c_R} = 1$. Note that at 0th order, the expectation value does not depend on the action functional, and therefore both the $A_{n+1}$ and $\Z_n$ theories result in the same expressions. For simplicity, we drop the index $R$ on the rescaled Wilson loop variables.

\begin{proposition}
\label{prop:crinftyLimitExpVal}
The expectation values of Wilson loops at first order in $1/c_R$ for $A_{n+1}$, $\Z_n$ are given by
\begin{align*}
	\langle w^{m_1}_1 w^{m_2}_2 \cdots w^{m_n}_n  \rangle_{A_{n+1}} &= \binom{n+M}{m_1,\dots,m_n,n}^{-1}\left(1+ \frac{1}{c_R} \left(2(n-1) \frac{n!}{(n+2)!} - C^{A_{n+1}}_1(w^{m_1}_1 w^{m_2}_2 \cdots w^{m_n}_n) \right)\right) +O(c^{-2}_R),\\
	\langle w^{m_1}_1 w^{m_2}_2 \cdots w^{m_n}_n  \rangle_{\Z_{n}} &= \binom{n+M}{m_1,\dots,m_n,n}^{-1}\left(1+ \frac{1}{c_R} \left(n \frac{n!}{(n+2)!} - C^{\Z_{n}}_1(w^{m_1}_1 w^{m_2}_2 \cdots w^{m_n}_n) \right)\right)+O(c^{-2}_R)
\end{align*}
for $m_i \in \N_0$, $M = \sum_{k} m_k$ the length of the Wilson loop, the prefactor the multinomial coefficients and 
\begin{align*}
	C^{A_{n+1}}_1(w^{m_1}_1 w^{m_2}_2 \cdots w^{m_n}_n) &= \frac{(n+M)!}{(n+M+2)!} \left(2(n+M) -m_1 - m_n + \sum^{n-1}_{i=1}(m_i - m_{i+1})^2\right),\\
	C^{\Z_{n}}_1(w^{m_1}_1 w^{m_2}_2 \cdots w^{m_n}_n) &= \frac{(n+M)!}{(n+M+2)!} \left(n+M + \sum^{n}_{i=1} m_i(m_i-m_{i-1}) \right)
\end{align*}
\end{proposition}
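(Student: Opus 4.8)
The plan is to reduce everything to moments of monomials against the flat measure on a standard simplex, which are all computed by a single Dirichlet (Liouville) integral. First I would observe that $\langle\,\cdot\,\rangle_\infty$ is by definition integration against $e^{-S_R/c_R}\to 1$, i.e. the normalised flat integral over the rescaled Wilson-loop variables. Since for $\Z_n$ the single loop phase enters neither the action nor the observables and so cancels between numerator and denominator, in both cases $\langle\,\cdot\,\rangle_\infty$ is integration over the simplex $\Sigma_n=\{w\in\R^n_{>0}:\sum_i w_i\le 1\}$ with normalised Lebesgue measure. The one identity needed is the Dirichlet integral
\[
\int_{\Sigma_n}\prod_{i=1}^n w_i^{m_i}\,\extd^n w=\frac{\prod_i m_i!}{(n+M)!},\qquad M=\textstyle\sum_i m_i,
\]
so that, dividing by the volume $1/n!$,
\[
\langle w_1^{m_1}\cdots w_n^{m_n}\rangle_\infty=\frac{n!\prod_i m_i!}{(n+M)!}=\binom{n+M}{m_1,\dots,m_n,n}^{-1}.
\]
This is exactly the stated leading term and, being independent of the action, it is common to $A_{n+1}$ and $\Z_n$.

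For the first-order term I would use the expansion displayed just before the Proposition, $\langle\CO\rangle=\langle\CO\rangle_\infty+\tfrac1{c_R}(\langle\CO\rangle_\infty\langle S\rangle_\infty-\langle\CO S\rangle_\infty)+O(c_R^{-2})$, with $S$ the rescaled action. The key simplification is to write $S$ directly in the $w$-variables rather than in the $s$-coordinates: from Examples~\ref{ex:ActionAn1} and \ref{ex:ActionZ} together with $c_R=c/(2L^2)$ one has $S_{A_{n+1},R}=\sum_{i=1}^{n-1}(w_i-w_{i+1})^2$ and $S_{\Z_n,R}=\sum_{i=1}^n w_i(w_i-w_{i-1})$. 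Since $S_R$ is then an explicit quadratic, both $\langle S_R\rangle_\infty$ and $\langle\CO S_R\rangle_\infty$ are again simplex moments, obtained from the formula above by shifting two exponents. Factoring out $\langle\CO\rangle_\infty$ and using the elementary ratios
\[
\frac{\langle\CO\,w_j^2\rangle_\infty}{\langle\CO\rangle_\infty}=\frac{(m_j+1)(m_j+2)}{(n+M+1)(n+M+2)},\qquad
\frac{\langle\CO\,w_jw_k\rangle_\infty}{\langle\CO\rangle_\infty}=\frac{(m_j+1)(m_k+1)}{(n+M+1)(n+M+2)}\ (j\neq k),
\]
which are immediate from the Dirichlet formula, reduces the problem to a finite sum. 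Taking $\CO=1$ reproduces $\langle S_R\rangle_\infty=2(n-1)\tfrac{n!}{(n+2)!}$ and $n\tfrac{n!}{(n+2)!}$ respectively, while the remaining piece is precisely $C_1=\langle\CO S_R\rangle_\infty/\langle\CO\rangle_\infty$.

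The only genuine work is the bookkeeping of the sums over $i$. For $A_{n+1}$ I would expand $(w_i-w_{i+1})^2$, set $a_i:=m_i+1$, and recognise the numerator as $\sum_i[(a_i-a_{i+1})^2+a_i+a_{i+1}]$: the quadratic part is the advertised $\sum_i(m_i-m_{i+1})^2$, while $\sum_{i=1}^{n-1}(a_i+a_{i+1})=2(M+n)-a_1-a_n$ produces the two endpoint corrections $-m_1-m_n$. For $\Z_n$ the analogous manipulation of $\sum_i(m_i+1)(m_i-m_{i-1}+1)$ is cleaner, because the purely linear increments $\sum_i(m_i-m_{i-1})$ telescope to zero around the cycle, leaving $n+M+\sum_i m_i(m_i-m_{i-1})$ with no boundary terms. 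I expect the main obstacle to be exactly this telescoping/endpoint accounting in the open-chain case — correctly tracking the truncated index range and completing the square — whereas the cyclic case is essentially automatic once the ratio identities are in hand. Cross-checking the resulting $A_{n+1}$ coefficient against the exact $A_3$ result of Section~\ref{sec:ExactA3}, whose $c_R\to\infty$ expansion gives $\langle w_1\rangle=\tfrac13-\tfrac1{90c_R}+\cdots$, then pins down any stray additive constant in the endpoint terms.
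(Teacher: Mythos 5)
Your proposal is correct and is essentially the paper's own argument in different clothes: the paper evaluates exactly the same moments, only in the $s$-coordinates where the simplex integral factorises into iterated Beta functions (its computation $\langle w_1^{m_1}\cdots w_n^{m_n}\rangle_\infty = n!\,m_1!\cdots m_n!/(n+M)!$ is precisely a proof of your Dirichlet formula), and its first-order step --- expanding $e^{-S/c_R}$ and reducing $\langle \CO S\rangle_\infty$ to shifted moments --- is the same as yours, including your two ratio identities.

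The one place where care is needed is the endpoint bookkeeping for $A_{n+1}$, and here your computation, carried out exactly, does \emph{not} reproduce the proposition as printed. With $a_i=m_i+1$ you correctly have $\sum_{i=1}^{n-1}(a_i+a_{i+1})=2(n+M)-a_1-a_n=2(n+M)-m_1-m_n-2$, so your $C_1^{A_{n+1}}$ carries $2(n+M)-2$ where the statement has $2(n+M)$; describing $-a_1-a_n$ as ``the endpoint corrections $-m_1-m_n$'' silently drops the $-2$. Your version is the correct one: it is the only one consistent with $C_1(1)=\langle S\rangle_\infty=2(n-1)\,n!/(n+2)!$ (the printed formula gives $2n\,n!/(n+2)!$ at $m_i=0$), and your own $A_3$ cross-check settles the matter, since for $\CO=w_1$ (so $n=2$, $M=1$) the Dirichlet moments give $\langle w_1 S\rangle_\infty/\langle w_1\rangle_\infty = 1/5$ and hence $\langle w_1\rangle = \tfrac13 - \tfrac{1}{90 c_R}+\cdots$ in agreement with the exact result of Section~\ref{sec:ExactA3}, whereas the printed value $C_1(w_1)=3/10$ would give $\tfrac13-\tfrac{2}{45 c_R}$. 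So the statement's $C_1^{A_{n+1}}$ contains a typo ($2(n+M)$ should read $2(n+M)-2$), which your proof, followed through faithfully, both detects and corrects. The cyclic case has no such issue: your telescoping for $\Z_n$ reproduces the printed $C_1^{\Z_n}$ exactly.
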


\begin{proof}
We show the computation for $\langle w^{m_1}_1 w^{m_2}_2 \cdots w^{m_n}_n  \rangle_{\infty}$, which does not depend on the model. After this, the computations of $\langle S  \rangle_{\infty},$ $\langle w^{m_1}_1 w^{m_2}_2 \cdots w^{m_n}_n  S \rangle_{\infty}$ follow in a similar manner. Start by computing the partition function in the limit $c_R\to \infty$
\[
\CZ_\infty = \lim_{c_R\to \infty}\CZ = \prod^{n}_{i=1}  \int^1_0 s^{i-1}_i\extd s_i = \frac{1}{n!}.
\]
For the expectation value of any Wilson loop we use \eqref{eq:WilsonLoopInSvariables} and introduce the sums $M_i = \sum^i_{k=1} m_k$ to give
\begin{align*}
\langle w^{m_1}_1 w^{m_2}_2 \cdots w^{m_n}_n  \rangle _{\infty} &= \langle \prod^n_{i=1} s^{M_i}_i (1-s_i)^{m_{i+1}} \rangle _{\infty}
= n! \prod^n_{i=1}  \int^1_0 s^{M_i + i -1}_i (1-s_i)^{m_{i+1}} \extd s_i = n! \prod^n_{i=1}  \frac{m_{i+1}!(M_i+i-1)!}{(M_{i+1} +i)!}\\
&= n!\frac{m_1! \cdots m_n!}{(n+M)!}
\end{align*}
with $m_{n+1}=0$. This results in the expressions $\langle w^{m_1}_1 w^{m_2}_2 \cdots w^{m_n}_n  \rangle_{\infty} = m_1!m_2!\cdots m_n! n! / (n+M)!$ for both $A_{n+1},\Z_n$ at $c_R\to \infty$, which is the inverse of the multinomial coefficient as shown above. To compute the first order, we can write the respective actions in terms of the rescaled Wilson loop variables similar to \eqref{eq:ActionsAn+1ZnWilsonLoop} and use the formula for  $\langle w^{m_1}_1 w^{m_2}_2 \cdots w^{m_n}_n  w^2_i \rangle_{\infty}$ etc to compute 
\[ \langle w^{m_1}_1 w^{m_2}_2 \cdots w^{m_n}_n  S\rangle_{\infty} = \langle w^{m_1}_1 w^{m_2}_2 \cdots w^{m_n}_n \rangle_{\infty} C^{A_{n+1},\Z_n}_1( w^{m_1}_1 w^{m_2}_2 \cdots w^{m_n}_n),\]
with $\langle S \rangle_\infty$ being  recovered by setting $m_i=0$ for all $i$. 
\end{proof}

At 0th order, the expectation values for the $A_{n+1}$ and $\Z_n$ coincide in the limit. Some examples of expectation values and standard deviations for small loops are in Table~\ref{table1}. The variances and connected correlations are
\[
    (\Delta w_i)^2 = \frac{n}{(n+1)^2 (n+2)},
\quad
\langle w_i w_j \rangle^c_{\infty} = -\frac{1}{(n+1)^2(n+2)},
\quad i\neq j
\]
with non-trivial values for $A_{n+1}, \Z_n$. In fact, from the connected correlation functions we see that long range order appears for any $n$, and do not depend on the distance $|i-j|$. We see different behaviours at $i=j$ and $i\neq j$, the negative correlation functions $\langle w_i w_j\rangle_\infty^c$ implying that increasing $w_i$ lowers $w_j$ and vice versa. 

\begin{table}
    \centering
    \begin{tabular}{|c|c|c|c|}
        \hline  
        $\CO$ & $\langle \CO \rangle_{\infty}$ & $\Delta(\CO)$  \\
          \hline
          $w_i$ & $\frac{1}{n+1}$ & $\frac{1}{n+1} \sqrt{\frac{n}{n+2}}$ \\
          \hline
          $w^2_i$ & $\frac{2}{(n+1)(n+2)}$ & $ 2 \sqrt{\frac{n (5 n+11)}{\left(n^2+3 n+2\right)^2 \left(n^2+7 n+12\right)}}$ \\
          \hline
          $w_iw_j$ & $\frac{1}{(n+1)(n+2)}$ & $\sqrt{\frac{3 n^2+5 n-4}{\left(n^2+3 n+2\right)^2 \left(n^2+7 n+12\right)}}$ \\
          \hline
    \end{tabular}
    \caption{Wilson loop correlation functions for $X=A_{n+1}, \Z_n$ in strong coupling  $c_R\to \infty$ limit}
    \label{table1}
\end{table}

Interestingly, in the $n\to \infty$ limit where one considers $\C\Z_2$-gauge theory on the integer line $\Z$, all expectation values, (connected) correlations and standard deviations computed above vanish. The only `quantum' effects that remain are seen through the relative uncertainties $\Delta \CO / \langle \CO \rangle$, which for the observables above converge to the following values
\[\lim_{n\to \infty }\frac{\Delta(w_i)}{\langle w_i \rangle_{\infty}} = 1,
\quad 
\lim_{n\to \infty }\frac{\Delta(w^2_i)}{\langle w^2_i \rangle_{\infty}} = \sqrt{5},
\quad
\lim_{n\to \infty }\frac{\Delta(w_i w_j)}{\langle w_i w_j \rangle_{\infty}} = \sqrt{3},\]
giving a measurable prediction in the theory of quantum behaviour. In general we can compute by using Proposition \ref{prop:crinftyLimitExpVal} that 
\[\lim_{n\to \infty} \frac{\Delta( w^{m_1}_1 w^{m_2}_2 \cdots w^{m_j}_j )}{\langle w^{m_1}_1 w^{m_2}_2 \cdots w^{m_j}_j  \rangle_{\infty}} = \sqrt{\frac{(2m_1)! \cdots (2 m_j)!}{(m_1!)^2 \cdots (m_j!)^2}-1}. \]
We see that the QGFT here is nontrivial,  in contrast usual LGT on the $\Z$ lattice where the theory would be trivial for any choice of gauge group.

For simplicity, we now focus on $\Z_n$ to understand the effect of first order terms (for the $A_{n+1}$ theory, one would have to compute many more cases for $C^{A_{n+1}}_1$ due to boundary terms.) We have
\[
C^{\Z_n}_1(w_i) = \frac{1}{n+3},
\quad
C^{\Z_n}_1(w^2_i) = \frac{n+6}{(n+3)(n+4)},
\quad
C^{\Z_n}_1(w_i w_{i-1}) = \frac{1}{n+4},
\quad
C^{\Z_n}_1(w_i w_j) = \frac{1}{n+3},
\]
where we take $|i-j|>1$. Here we see that the long range order starts to change as $C^{\Z_n}_1(w_i w_{i-1}) \neq C^{\Z_n}_1(w_i w_j)$. One can then follow the computations to the first order in $1/c_R$, and find 
\begin{align*}
    \langle w_i \rangle&= \frac{1}{n+1} -\frac{2}{(n+1)^2 (n+2) (n+3)}\frac{1}{c_R} +O(c_R^{-2})\overset{n\to \infty}{\longrightarrow} 0,\\
    \langle w^2_i \rangle&= \frac{2}{(n+1) (n+2)}-\frac{4 \left(n^2+4 n+6\right)}{(n+1)^2 (n+2)^2 (n+3) (n+4)} \frac{1}{c_R}+O(c_R^{-2})\overset{n\to \infty}{\longrightarrow} 0,\\
    \langle w_i w_{i-1} \rangle^c_{c_R} &=-\frac{1}{(n+1)^2 (n+2)}+\frac{n^3+6 n^2+19 n+26}{(n+1)^3 (n+2)^2 (n+3) (n+4)}\frac{1}{c_R}+O\left(c^{-2}_R\right)\overset{n\to \infty}{\longrightarrow} 0,\\
    \langle w_i w_{j} \rangle^c_{c_R} &= -\frac{1}{(n+1)^2 (n+2)}+\frac{2}{(n+1)^3 (n+2)^2}\frac{1}{c_R}+O\left(c^{-2}_R\right)\overset{n\to \infty}{\longrightarrow} 0\\
    \Delta(w_i) &= \frac{1}{n+1}\sqrt{\frac{n}{(n+2)}}-\frac{2 \left(n (n+2)^2-2\right)}{ \left(\sqrt{n} (n+1)^2 (n+2)^{3/2} (n+3) (n+4)\right)}\frac{1}{c_R}+O(c_R^{-2}) \overset{n\to \infty}{\longrightarrow} 0,\\
    \frac{\Delta(w_i)}{\langle w_i \rangle} &= \sqrt{\frac{n}{n+2}}+\frac{4-2 n (n+2)}{ \sqrt{n} (n+2)^{3/2} (n+3) (n+4)} \frac{1}{c_R} +O(c_R^{-2}) \overset{n\to \infty}{\longrightarrow} 1
\end{align*}
which shows that first order corrections make near neighbours be stronger correlated than further away ones $\langle w_i w_{i-1} \rangle^c_{c_R} > \langle w_i w_{j} \rangle^c_{c_R}$, and we also see that first order corrections do not change the $n\to \infty$ behaviour of the theory.

\subsubsection{Expansion at weak coupling}

We start by comparing the perturbative expansion in the weak limit for $X=A_3$ with the exact results from section \ref{sec:ExactA3}. The partition function \eqref{eq:PartFunA3} and expectation values for the Wilson loops \eqref{eq:ExpValwA3} can be approximated in the $c_R\to 0$ limit as 
\[\CZ_{A_3} \approx \frac{\sqrt{\pi c_R}}{2}-\frac{c_R}{2},\quad \quad \langle w_1 \rangle = \langle w_2 \rangle \approx \frac{1}{4}\left(1 + \sqrt{\frac{c_R}{\pi }} + \frac{c_R}{\pi }\right)-\frac{c_R}{8},\]
where we have ignored terms of the form $c^m_R e^{-1/c_R}$ as these are rapidly surpressed for $c_R\to 0$. Using Mathematica, the exact expression for $\langle w^{m_1}_1 w^{m_2}_2 \rangle$ for $m_1,m_2=1,\dots,10$ can be computed exactly and then approximated at low $c_R$, to give 
\[\langle w^{m_1}_1 w^{m_2}_2 \rangle = \frac{1}{2^{M}(M+1)} \left(1+\sqrt{\frac{c_R}{\pi}} + \frac{c_R}{\pi} \right) + \frac{M}{2^{M+2}}  c_R + O(c_R^{3/2}).\]

Let us try to compute these expressions by approximating the integrals. We start with the partition function
\[
\CZ_{A_3} = \int^1_0 \extd s_2 \int^{1/2}_{-1/2} \extd \tilde s_1 s_2 e^{-4 \tilde s_1^2 s_2^2/c_R} =\int^1_0 \extd s_2 \frac{1}{2} \sqrt{\pi c_R} \text{erf}\left(\frac{s_2}{\sqrt{c_R}}\right) \approx \int^1_0\extd s_2 \left(\frac{\sqrt{\pi c_R} }{2}-\frac{c_R e^{-\frac{s_2^2}{c_R}}}{2 s_2}\right)
\]
where we expanded the integrand $\frac{1}{2} \sqrt{\pi c_R} \text{erf}\left(\frac{s_2}{\sqrt{c_R}}\right)$ around $c_R=0$. The first term gives $\sqrt{\pi c_R}/2$ as expected. For the second, the integral over $s_2$ is divergent, and we therefore introduce a cutoff $\epsilon > 0$ such that
\[
    - \int^1_\epsilon \extd s_2 \frac{c_R e^{-\frac{s_2^2}{c_R}}}{2 s_2} =  \frac{1}{4} c_R \text{Ei}\left(-\frac{\epsilon ^2}{c_R}\right)-\frac{1}{4} c_R \text{Ei}\left(-\frac{1}{c_R}\right) \approx \frac{c_R}{2}\log (\epsilon )
\]
where $\text{Ei}(x) = -\int^\infty_{-x} e^{-t}/t\extd t$ and the last approximation is the $\epsilon \to 0$ behaviour of the function. All in all the expansion results in
\[
\CZ_{A_3} \approx  \frac{\sqrt{\pi c_R}}{2} + \frac{c_R}{2}\log (\epsilon).  
\]
This is $\log$ divergent as $\epsilon\to 0$, nonetheless we see that the factor $\frac{c_R}{2}$ agrees with the expansion of the exact form for $\CZ_{A_3}$, if we take into account that $\text{sgn}(\log(\epsilon))=-1$ as $\epsilon\to 0$.

Consider now $\CO = w^{M}_1 = (1/2 + \tilde s_1)^M s_2^M$. We have
\begin{align*}
    \langle w^{M}_1 \rangle 
    &\approx \frac{1}{\CZ_{A_3}} \int^1_0 \extd s_2 s^{1+M}_2 \int^{1/2}_{-1/2} \extd \tilde s_1 \left(\frac{1}{2} + \tilde s_1\right)^{M} e^{-\frac{4}{c_R} \tilde s^2_1 s^2_2}\\
    &\approx \frac{1}{\CZ_{A_3}} \int^1_0 \extd s_2 s^{1+M}_2 \int^{1/2}_{-1/2} \extd \tilde s_1 \left(\frac{1}{2^M} + \binom{M}{1} \frac{1}{2^{M-1}} \tilde s_1 + \binom{M}{2}\frac{1}{2^{M-2}} \tilde s_1^{2}\right) e^{-\frac{4}{c_R} \tilde s^2_1 s^2_2}.
\end{align*}
For the second approximation, we use the fact that $\int^{1/2}_{-1/2} x^{m}e^{-x^2/c_R}$ vanishes for $m$ odd and behaves as $c_R^{(1+m)/2}$ for $m$ even when $c_R\to 0$. As we are considering the $c_R\to 0$ limit, only the lower orders in $\tilde s_1$ will contribute to the integral. The integral over $\tilde s_1$ can then be solved exactly. We then expanded up to order $c^{3/2}_R$ before taking the integral over $s_2$. The result is
\begin{align*}
    \langle w^{M}_1 \rangle \approx \frac{1}{\CZ_{A_3}} \left(\frac{\sqrt{\pi c_R} (c_R M (M+1)+4)}{2^{M+3}(M+1)}\right)
    \approx 
    \frac{1}{2^{M}(M+1)} \left(1-\log(\epsilon)\sqrt{\frac{c_R}{\pi}} + \log(\epsilon)^2 \frac{c_R}{\pi} \right) + \frac{M}{2^{M+2}}c_R.
\end{align*}
As for $\CZ_{A_3}$, we see that the factors of the $\log$ divergences correspond to the expansion of the exact formulas, if we take into account the sign of $\log$ when $\epsilon\to 0$. Here when expanding the fraction
\[
  \frac{\alpha_1 x + \alpha_2 x^2 + \alpha_3 x^3 + \cdots}{\beta_1 x +\beta_2 x^2 + \beta_3 x^3 + \cdots}  
\]
up to $x^2$ around $x=0$, it is important to note that the coefficients $\alpha_i, \beta_i$ up to $i=3$ contribute to the second order coefficient of the expansion. Hence why we expanded the numerator of $\langle w^{M}_1 \rangle$ up to $c^{3/2}_R$. In the case of $\CZ_{A_3}$, there is no $c^{3/2}_R$ order term, hence why we did not consider it. We see that the first order of the expansion when $c_R\to 0$ shows some divergences, due to the fact the integrals and limits should not be swapped in these cases. It does not seem that there is a straight forward expansion similar to Feynman diagrams that we can perform in our case where one considers positive valued fields.

For general $n$, we can compute the 0th order of the weak coupling limit $c_R\to 0$ using Laplace's method for approximating Gaussian integrals over finite intervals. We start by approximating $S_R[s]$ up to second order around its minima. As all the summands of $S_{A_{n+1}}$ in \eqref{eq:ActionsAn+1ZnWilsonLoop} are non-negative, it is extremal when the value of $w$ is constant with $w_i=w_{j}$. In the case of $\Z_n$, extremising $S_{\Z_n}$ is equivalent to solving the Laplace equation
\[(\Delta_{\Z_n}w)(i) = 2w_i - w_{i-1} - w_{i+1} =0\]
which again has solutions $w_i=w_j$. In terms of the $s_i$ variables \eqref{eq:VariablesS}, the minima are at the values $s^0_i = \frac{i}{i+1}$ for $i=1,\dots, n-1$ and $s_n\in \R_{>0}$, which are degenerate as $s_n$ can take any value. This reflects the fact that $s_n$ is a zero mode and should be considered separately.
We therefore only expand $S_{R}$ for $A_{n+1}$ and $\Z_n$ in the first $n-1$ variables, to give
\[
S_{A_{n+1}}[s_1,\dots,s_n] \simeq \frac{1}{2} \sum^{n-1}_{i=1}\sum^{n-1}_{j=1} (s_i - s^0_i)(s_j - s^0_j) H_{ij}(s^0_1,\dots,s^0_{n-1}) s^2_n + O(s^3_i)
\]
where $H(s_1,\cdots,s_{n-1})s^2_n$ denotes the Hessian matrix of $S_{R}$ w.r.t the first $n-1$ variables, and where we have factored out $s^2_n$ for future convenience. Take now the variables  $\tilde s_i = (s_i - s^0_i)$ for $i=1,\dots,n-1$ so that the expectation of an observable $\CO$ is of the form
\[
\langle \CO \rangle = \frac{1}{\CZ} \int^1_0 \extd s_n s^{n-1}_n \int^{1-s^0_{n-1}}_{-s^0_{n-1}} \extd \tilde s_{n-1} (\tilde s_{n-1} + s^0_{n-1})^{n-2} \dots \int^{1-s^0_{1}}_{-s^0_{1}} \extd \tilde s_{1} \CO e^{-\frac{1}{2c_R}s_n^2 \<\tilde s, H \tilde s \>}.
\]
Here, Laplace's method states that in the $c_R\to \infty$ limit, the integrals over $\tilde s_i$ can be approximated by the evaluation of $(\tilde s_{n-1} + s^0_{n-1})^{n-2} \dots (\tilde s_{2} + s^0_{2}) \CO$ at the minima $\tilde s_i = 0$ times the value of the Gaussian integral $\int_{\R^{n-1}} \extd \tilde s^{n-1} e^{-\frac{1}{2}c_Rs_n^2 \<\tilde s, H \tilde s \>} $, where $H$ is the Hessian matrix as above. We have
\[
\langle \CO \rangle = \frac{1}{\CZ} \sqrt{\frac{(2\pi c_R)^{n-1}}{ \det(H)}} (s^0_{n-1})^{n-2} \cdots  (s^0_{3})^2 s^0_{2} \int^1_0 \extd s_n \CO(s^0_1,\cdots, s^0_{n-1},s_n) + O(c_R^{\frac{n+1}{2}})
\]
where we used $\det\left(\frac{1}{c_R} s^2_n H\right) = \frac{1}{c^{n-1}_R} s^{2(n-1)}_n \det (H)$. Similarly we can compute the partition function to be approximated by 
\[
\CZ = \sqrt{\frac{(2\pi c_R)^{n-1}}{ \det(H)}} (s^0_{n-1})^{n-2} \cdots  (s^0_{3})^2 s^0_{2} + O(c_R^{\frac{n+1}{2}})
\]
giving
\[
\langle \CO \rangle = \int^1_0 \extd s_n  \CO(s^0_1,\cdots, s^0_{n-1},s_n) + O(c_R).
\]
In terms of Wilson loop variables we have $w_i(s^0_1,\cdots, s^0_{n-1},s_n) = \frac{s_n}{n}$, and therefore
\[
\langle w^{m_1}_1 w^{m_2}_2 \cdots w^{m_n}_n  \rangle = \frac{1}{n^M(M+1)} + O(c_R)
\]
with $M = \sum_i m_i$. At 0th order, the value for both $A_{n+1}$ and $\Z_n$ again coincide. In the weak coupling limit the theory has the interesting feature that expectation values only depend on $M$, not $m_i$. For example, connected correlation where $M=2$ is
\[\langle w_i w_j \rangle^c =\<w_i^2\>^c= (\Delta w_i)^2=\frac{1}{12n^2}+O(c_R)\]
for any $i,j$. The standard deviation and relative uncertainties can similarly be computed and in the $\Z$ limit we have
\[
    \lim_{n\to \infty} \langle w^{m_1}_1 w^{m_2}_2 \cdots w^{m_j}_j  \rangle_{0} = 0, \quad 
    \lim_{n\to \infty} \Delta(w^{m_1}_1 w^{m_2}_2 \cdots w^{m_j}_j)=0,
    \quad
    \lim_{n\to \infty} \frac{\Delta(w^{m_1}_1 w^{m_2}_2 \cdots w^{m_j}_j)}{ \langle w^{m_1}_1 w^{m_2}_2 \cdots w^{m_j}_j  \rangle_{0}} = \frac{M}{\sqrt{2 M+1}},
\]
showing again that the `quantum' behaviour of the theory in the $n\to \infty$ case can be measured through relative uncertainties.

\subsubsection{Numerical results for $\mathbb{Z}_5$ and $A_6$}

 As well as the $c_R$ limits analysed above, we can also approach the higher $X=\Z_n, A_{n+1}$ using numerical methods to compute expectation values, standard deviations and relative uncertainties. Here, we illustrate  this for $A_6$ and $\Z_5$ using Monte-Carlo integration with C++, both in Euclidean and Lorentzian signature. For the Lorentzian case, we interpret $A_6$ as a spacelike point evolving 5 steps in time with timelike edges, and $\Z_5$ as a 5 step time loop.  Then the only changes are at the level of  the partition function, where we now take the weight $e^{\imath S}$ instead of $e^{-S}$. 

The results for the expectation values and relative uncertainties of $w_i$ are presented in Figure \ref{fig:A6Z5}. In the case of $A_6$ we only plot $w_1, w_2, w_3$ as the results for $w_4, w_5$ can be obtained by symmetry, similar in the case of $\mathbb{Z}_5$. We see clear boundary effects both with the expectation value and relative uncertainties of $w_1$ in the $A_6$ graph, while bulk values of the expectations and relative uncertainties of $w_2, w_3$ are closer to the $\Z_5$ theory. Boundary vs bulk effects have similarly been observed  in the quantum Riemannian geometry of  $A_{n+1}$ in \cite{ArgMa2}. Furthermore, we notice that quantum fluctuations become more relevant in the strong coupling regime where $\Delta(w_i)/\langle w_i \rangle$ is of order 1. The  Lorentzian theory is broadly similar to the Euclidean one but with an even more pronounced  boundary effect in $\<w_1\>$. We also see a bulk effect for $\<w_2\>$ and $\<w_3\>$ in the weak $c_R$ theory in the case of Lorentzian signature.

\begin{figure}[h!]
    \centering
    \includegraphics[width=1\textwidth]{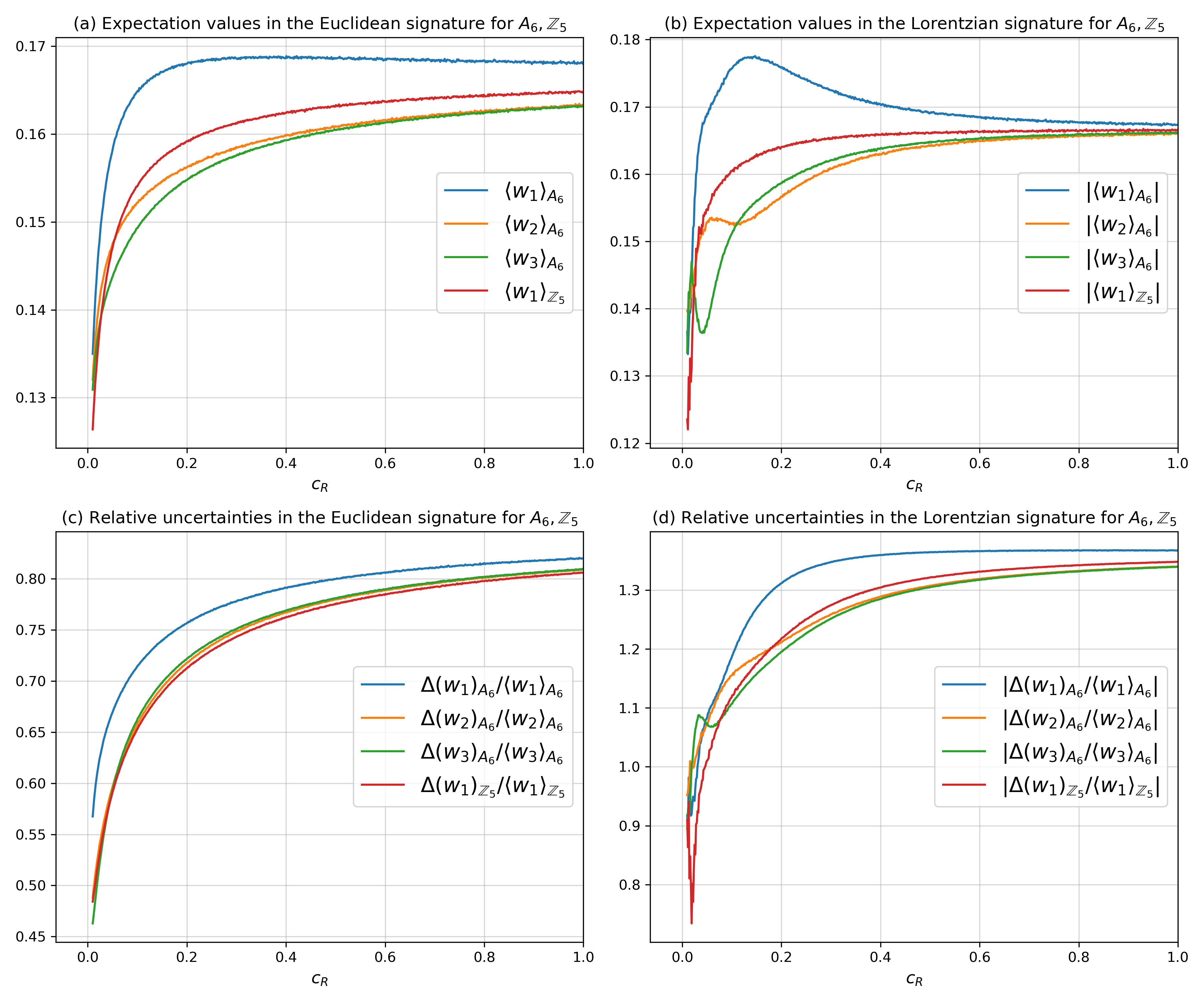}
    \caption{Parts (a) and (c) show the expectation values and relative uncertainties of the real positive scalar field $w$ for $X=A_6, \mathbb{Z}_5$ in the Euclidean signature plotted against the coupling $c_R$. Parts (b) and (d) show parallel results in the Lorentzian signature, where for simplicity we only plot  absolute values.}
    \label{fig:A6Z5}
\end{figure}

\subsubsection{Comparison with a massless real scalar field}\label{sec:scalar}

We have seen that the real modes in the QGFT on $\Z$ and $\Z_n$ look a lot like a positive $\R$-valued (or `tropical') scalar field theory. For comparison, we use the same methods for quantum field theory of a massless scalar field theory $\phi\colon \Z_n\to \R$ on $\Z_n$ in order to see what are the differences are.  We take the action $S[\phi] = \frac{1}{4 c} \int_{\Z_n} \phi \Delta_{\Z_n} \phi$ and solve it by using the Fourier transform $\tilde \phi_k = \frac{1}{\sqrt{n}}\sum^{n}_{j=1}\phi_j e^{\imath \frac{2\pi}{n} k j}$, resulting in 
\[S[\phi] = \frac{1}{2 n c} \sum^{n-1}_{k=1} \left(1-\cos\left(\frac{2\pi k}{n}\right)\right)\tilde \phi_k \tilde \phi_{-k}. \]
Due to the shift symmetry $\phi_i\to \phi_i+\alpha$ the $k=n$ mode $\tilde \phi_n = \frac{1}{\sqrt{n}}\sum^{n}_{j=1}\phi_j$ is a zero mode and does not appear in the action. As before we will need to consider it seperately.

In this case we can use standard QFT methods to compute expectation values. We consider a partition function with a source $J$, and regulate the zero mode by introducing the function $\chi_{[-L,L]}\left( \frac{1}{\sqrt{n}}\sum^{n}_{j=1}\phi_j\right)$
\begin{align*}
    \CZ[J] &= \int_{\R^n} \extd \phi_1\cdots \extd \phi_n \, \chi_{[-L,L]}\left( \frac{1}{\sqrt{n}}\sum^{n}_{j=1}\phi_j\right) e^{- \frac{1}{4 c} \sum^{n}_{j=1} \phi_j (\Delta_{\Z_n} \phi)_j + \sum^{n}_{j=1} J_j \phi_j} \\
    &= \int_\R \extd \tilde \phi_n e^{\tilde J_n \tilde \phi_n} \chi_{[-L,L]}\left( \tilde \phi_n\right) \int_{\R^{n-1}} \extd \tilde \phi_1\cdots \extd \tilde \phi_{n-1} e^{-\frac{1}{4 n c}\sum^{n-1}_{k=1} \left(1-\cos\left(\frac{2\pi k}{n}\right)\right)\tilde \phi_k \tilde \phi_{-k} + \sum^{n-1}_{k=1} \tilde J_{k} \tilde \phi_{k}} \\
    &= \CZ[0] \frac{\sinh (\tilde J_n L)}{ \tilde J_n L} e^{\frac{n c}{2} \sum^{n-1}_{k=1} \frac{\tilde J_k \tilde J_{-k}}{\left(1-\cos\left(\frac{2\pi k}{n}\right)\right)}}\\
    &= \CZ[0]\frac{ \sinh \left(\sum^{n}_{j=1}J_j L\right)}{\sum^{n}_{j=1}J_j L}
    e^{\frac{L^2}{2} \sum_{i,j} J_i G_{i-j}J_j}
\end{align*}
where $G_{i-j} =  c_R \sum^{n-1}_{k=1} \frac{e^{\frac{2\pi\imath}{n} k(i-j)}}{\left(1-\cos\left(\frac{2\pi k}{n}\right)\right)}$ is the propagator with rescaled coupling $c_R = n c / L^2$ and $\CZ[0] = 2 L \sqrt{\frac{(2\pi)^{n-1}}{\det(\tilde c G)}}$. The expectation values and correlation functions of the rescaled field can be computed to be
\[ \frac{1}{L}\langle \phi_i \rangle = 0, \quad \frac{1}{L^2}  \langle \phi_i \phi_j\rangle =\left(\frac{1}{\CZ[J]}\frac{\del^2\CZ[J]}{\del J_i\del J_j}\right)\vert_{J=0} = \frac{1}{3} + G_{i-j}.\]

We see that a massless real scalar field $\phi\colon \Z_n \to \R$ behaves differently from a massless positive scalar field $w\colon \Z_n\to \R_{>0}$. First, while the field $\phi$ have vanishing vacuum expectation value, the field $w$ does not. Secondly, while long range order appears in both theories, the behaviour of both theories in the weak and strong coupling regimes differ drastically. In the case of real scalar field theory these depend linearly on the coupling $c_R$, and therefore approach a constant value of $1/3$ in the weak regime and diverge in the strong coupling regime, except for the values of $i-j$ for which the sum $\sum^{n-1}_{k=1} \frac{e^{\frac{2\pi\imath}{n} k(i-j)}}{\left(1-\cos\left(\frac{2\pi k}{n}\right)\right)}$ vanishes. In the case of a positive scalar field, we have seen that correlations of the theory are finite in the weak and strong coupling regimes, and that they interpolate between the two. Another difference is the behaviour of the relative uncertainties in both cases. For the real scalar fields we find that the relative uncertainty for $\phi_i$ diverges, and that the relative uncertainty of $\phi_i \phi_j$ has the following behaviours in the $c_R \to 0,\infty$ limits
\[\lim_{\tilde c\to 0} \frac{\Delta(\phi_i \phi_j)}{\langle\phi_i \phi_j \rangle} = \sqrt{\frac{13}{5}}, \quad\quad\quad \lim_{\tilde c\to \infty} \frac{\Delta(\phi_i \phi_j)}{\langle\phi_i \phi_j \rangle} = \sqrt{\frac{G^2_{0}}{G^2_{i-j}}+1}.\]
Again, we see that the values are very different from the positive scalar field case.

\subsection{QGFT on $\Z_2\times \Z_2$}\label{secZ2Z2}

We cover quantisation of the $\C\Z_2$-gauge theory on a single plaquette $X=\Z_2\times \Z_2$ in Example~\ref{Z2Z2act}  as a simplified version of the square lattice $\Z^2$. 
The Euclidean case was treated in \cite{Ma:cli} but we take this further.  As we have a square, the moduli space $\CU^\times/\CG \simeq U(1)\times \R^4_{>0}$ is parametrized by a real scalar field $w$ on the edges (we write $w_1,w_2,w_3,w_4$ for the values of $w$ on the lower, right, upper and left edges respectively) and a phase $e^{\imath \theta}$ for a loop going around the square. Recall in the definition of $w_i$ in Section~\ref{sec:measureAb} that $\sqrt{w_i}$ is the real part of the transport in one direction of the edge.  We fix $\rho$ to be the unique nontrivial 1-dimensional representation of $G=\Z_2$. Then following Example \ref{Z2Z2act} the action functional simplifies to
\[
    S_{\Z_2\times \Z_2}[u] 
    = -\frac{4}{c}\left(4\sqrt{w_1 w_2 w_3w_4}\cos(\theta) - (w_1+w_3)(w_2+w_4)\right)
\]
with minima at $w_1=w_3$, $w_2=w_4$, $\theta = 0$. These minima are degenerate in two lines and correspondingly the path integral will have two scale parameters $L$, $L'$ which make it divergent. We choose the following coordinates
\[
s_1 = \frac{w_1}{w_1+w_3},\quad
s_2 = \frac{w_2}{w_2+w_4},\quad
s_3 = \frac{w_1+w_3}{L},\quad
s_4 = \frac{w_2+w_4}{L'},
\]
with inverse $w_1 = s_1 s_3 L$, $w_2 = s_2 s_4 L'$, $w_3 = (1-s_1) s_3 L$, $w_4 = (1-s_2) s_4 L'$
which allow us to write the partition function as
\[
\CZ_{\Z_2\times \Z_2} = L^2 L'^2 \int^{2\pi}_0 \extd \theta \int_{[0,1]^4} \extd s_1 \extd s_2 \extd s_3 \extd s_4 \,\,s_3\,\, s_4\,\, e^{\frac{2}{c_R} s_3s_4 (4\sqrt{(1-s_1)s_1(1-s_2)s_2}\cos(\theta)-1)} 
\]
where we introduce $c_R = \frac{c}{2 L L'}$. As before we rescale the Wilson loop variables with the respective scale $L$ or $L'$, effectively setting $L=L'=1$, and do not change the notation. So that for example $w_1$ now denoted the rescaled Wilson loop $00\to 10\to 00$ and so on. 

We will focus on 4 types of observables: the Wilson loop $w_1$, the Wilson loop around the plaquette $\sqrt{w_1 w_2 w_3 w_4} \, e^{\imath \theta}$, its pure positive scalar field $\sqrt{w_1 w_2 w_3 w_4}$ and pure phase $e^{\imath \theta}$ parts. Note that
\[
\langle \sin(\theta) \rangle = \frac{1}{\CZ_{\Z_2\times \Z_2}} \int^{2\pi}_0 \extd \theta \int_{[0,1]^4} \extd s_1 \extd s_2 \extd s_3 \extd s_4 \,\,s_3\,\, s_4\,\, \sin(\theta) e^{\frac{2}{c_R} s_3s_4 (4\sqrt{(1-s_1)s_1(1-s_2)s_2}\cos(\theta)-1)} = 0
\]
due to $\sin(\theta)$ being antisymmetric and $\cos(\theta)$ symmetric wrt $\theta$, making the expectation values of $e^{\imath \theta}$ and similarly $\sqrt{w_1 w_2 w_3 w_4} \, e^{\imath \theta}$ real valued.
Following the previous sections, we can compute the $c_R \to \infty$ limit of expectation values as
\[
\lim_{c_R\to \infty} \langle \CO \rangle = \frac{\int^{2\pi}_0 \extd \theta \int_{[0,1]^4} \extd s_1 \extd s_2 \extd s_3 \extd s_4 \,\,s_3\,\, s_4\,\, \CO}{\int^{2\pi}_0 \extd \theta \int_{[0,1]^4} \extd s_1 \extd s_2 \extd s_3 \extd s_4 \,\,s_3\,\, s_4} = \frac{2}{ \pi}\int^{2\pi}_0 \extd \theta \int_{[0,1]^4} \extd s_1 \extd s_2 \extd s_3 \extd s_4 \,\,s_3\,\, s_4\,\, \CO.
\]
For the weak coupling limit, we again use Laplace's method with respect to $s_1$, $s_2$ and $\theta$, followed by integration wrt to $s_3$, $s_4$. The minima of $S_{\Z_2\times \Z_2,R} = S_R/c_R$ are at $s_1=s_2=\frac{1}{2}$, $\theta = 0$, and the Hessian of $S_R$ wrt these coordinates at the minimum is
\[
    s_3 s_4 H_{s_1,s_2,\theta}(S_{R})\vert_{s_1=1/2,s_2=1/2,\theta=0} = 2 s_3 s_4
\begin{pmatrix}
    4 & 0 & 0\\
    0 & 4 & 0\\
    0 & 0 & 1
    \end{pmatrix} 
\]
with determinant $\det(s_3 s_4 H_{s_1,s_2,\theta}(S_{R})\vert_{s_1=1/2,s_2=1/2,\theta=0}) = 128 s^3_3s^3_4$. Using Laplace's method as in the previous sections we then have 
\[
\lim_{c_R\to 0} \langle \CO \rangle = \frac{1}{2}\int_{[0,1]^2} \extd s_3 \extd s_4 \CO\left(s_1=\frac{1}{2},s_2=\frac{1}{2},s_3,s_4,\theta=0\right) \frac{1}{\sqrt{s_3 s_4}}.
\]

\begin{table}[h!]
    \centering
    \begin{tabular}{|c|c|c|c|}
        \hline  
        $\langle \CO \rangle$ & $c_R\to 0$ & $c_R\to \infty$  \\
          \hline
          $\langle w_1\rangle$ & $1/6$ & $1/3$ \\
          \hline
          $\langle\sqrt{w_1 w_2 w_3 w_4}\rangle$ & $1/36$ & $\pi^2/144$ \\
          \hline
          $\langle\sqrt{w_1 w_2 w_3 w_4} \, e^{\imath \theta}\rangle$ & $1/36$ & $ 0$ \\
          \hline
          $\langle e^{\imath \theta}\rangle$ & $1$ & $0$ \\
          \hline
    \end{tabular}
   \caption{Expectation value of several observables for $X=\Z_2\times\Z_2$ in weak and strong coupling limits.}
    \label{tab:limitsZ2Z2}
\end{table}

The results are summarised in Table~\ref{tab:limitsZ2Z2} for the mentioned observables. We see that in the weak coupling limit, the pure phase $e^{\imath \theta}$ is dominant over the expectation of observables including the positive scalar field $w_i$. In contrast, in the strong coupling limit the expectations values of the pure phase die down, while the expectations of the positive scalar field remain non-zero. 

Of further interest is the comparison of standard LGT on the Euclidean square with our results. We restrict the Yang-Mills action to the case $w_i=1$ to recover $U(1)$-LGT with $S_{LGT}[\theta] = -\frac{8}{c_R}\cos(\theta)$, where we have dropped the constant terms. We then further restrict this to $\Z_2$-LGT by allowing only $\theta = 0,\pi$, similarly with $S_{LGT}[\sigma] = -\frac{8}{c_R}\sigma$, $\sigma = \pm 1$. One can then compute the expectation value and relative uncertainties of the Wilson loop to be
\[
\langle W \rangle_{U(1)} = \langle e^{\imath \theta} \rangle_{U(1)} = \frac{I_1\left(\frac{8}{c_R}\right)}{I_0\left(\frac{8}{c_R}\right)},
\quad \quad
\langle W \rangle_{\Z_2} = \langle \sigma \rangle_{\Z_2} = \tanh\left(\frac{8}{c_R}\right), 
\]
where $I_0$, $I_1$ are the modified Bessel functions of first kind. These have the same $c_R\to 0,\infty$ limits as the expectation $\langle e^{\imath \theta}\rangle$ of the pure phase in the noncommutative-$U(1)$-gauge theory, but their behaviour between the limits differs as can be seen Figure~\ref{fig3}, where we have also plotted the standard deviations and relative uncertainties. When we have the possibility of complex valued observables, we define the standard deviation through
\begin{equation*} \Delta \CO = \sqrt{\langle(\CO - \langle \CO \rangle )^*(\CO - \langle \CO \rangle) \rangle} = \sqrt{\langle |\CO|^2 \rangle  - \langle \CO^* \rangle \langle \CO \rangle}. \end{equation*}
In the Euclidean case we have $\langle \CO^* \rangle = \langle \CO \rangle^*$ which further simplifies the above to a real positive quantity $\Delta \CO = \sqrt{\langle |\CO|^2 \rangle  - |\langle \CO \rangle|^2}$.

A few things are noticeable from the numerical results in Figure~\ref{fig3}.  First, the expectation of Wilson loop $W= \sqrt{w_1 w_2 w_3 w_4} e^{\imath \theta}$ around the plaquette coming from NCG geometry  is much smaller than for the Wilson loop in the other theories. From the perspective of NCG, this should be the correct Wilson loop relevant to the magnetic flux through the plaquette. Even if one wishes to just consider the expectation value of the phase $e^{\imath \theta}$, nonetheless the presence of the $w$ field will lower the expectation value in comparison with the expectation computed from standard LGT, especially in the weak coupling regime. Furthermore, the relative uncertainties show that the quantum fluctuations compared to $\langle W \rangle$ are much stronger in the strong coupling regime for NCG in comparison to the $U(1)$ and $\Z_2$-LGT theories.  In the weak coupling regime, while the relative uncertainties of the Wilson loop in the two restricted theories approach zero, the limit of the relative uncertainty of the Wilson loop in the full theory
\[
\lim_{c_R\to 0} \frac{\Delta(\sqrt{w_1 w_2 w_3 w_4}e^{\imath \theta})}{\langle \sqrt{w_1 w_2 w_3 w_4}e^{\imath \theta}\rangle} = \frac{\sqrt{137}}{5} \sim 2.34
\]
shows that quantum fluctuations are still relevant in this regime.

\begin{figure}[h!]
    \centering
    \includegraphics[width=1\textwidth]{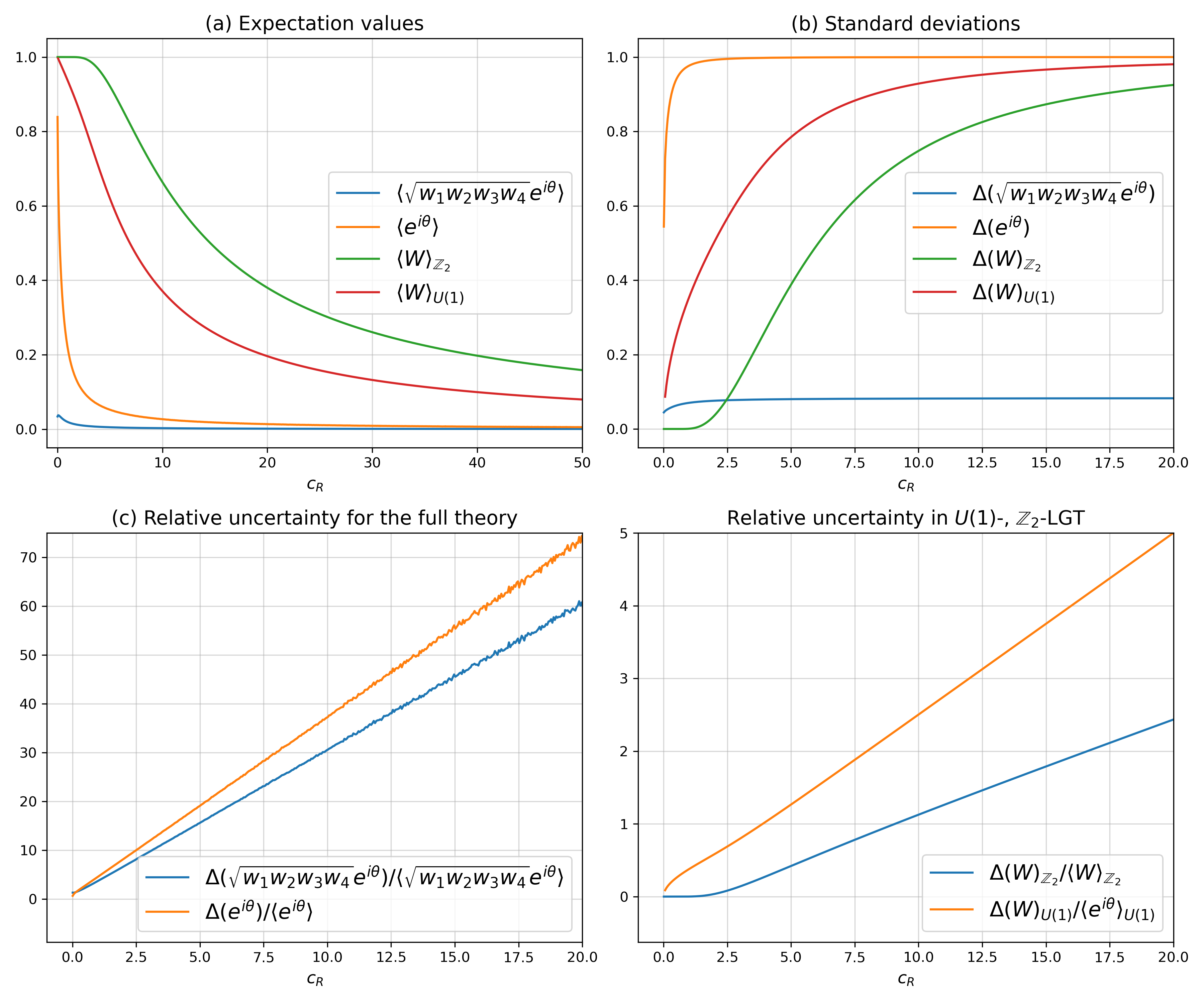}
    \caption{(a) Numerical plots for the expectation values of the Wilson loop and phase $e^{\imath \theta}$ for NCG and $U(1)$, $\Z_2$-LGT (they come out as real) on Euclidean $\Z_2\times\Z_2$. (b) Shows the standard deviations, with the relative uncertainties presented in (c) and (d).}
    \label{fig3}
\end{figure}

\subsubsection{Case of Lorentzian square $\Z_2\times \Z_2$}
\label{sec:LorZ2Z2}

We now turn our attention to the Lorentzian case. For a square lattice such as $\Z^M$ or $\Z_N^M$, namely (if we think of time as the vertical axis) we let all vertical arrows have $\lambda_{x\to y}>0$ and other (spacelike) arrows have $\lambda_{x\to y}<0$. This is because these edge weights play the role of `square-lengths'. Quantum gravity on a Lorentzian square was computed on this basis in \cite{Ma:squ,BliMa} and here we treat $\C\Z_2$-gauge theory in the same setting. We stick to the `flat case' so $\lambda_{(i,j)\to (i+1,j)}=-1$ and $\lambda_{(i,j)\to (i,j+1)}=1$ for the metric. Then returning to (\ref{SYMab}), we see that the factor comes
in pairs $\lambda_{x\to x a}\lambda_{x\to xb}$ where $a\ne b$. Here $a,b\in \{(1,0),(0,1)\}$ so there will always be one horizontal and one vertical so that these metric factors together always give -1. It follows that the action $S_{\Z_2^{1,1}}[u]$ is just the negative of what we had in the Euclidean square.  The resulting theory 
\[\CZ_{\Z_2^{1,1}} = L^2 L'^2 \int^{2\pi}_0 \extd \theta \int_{[0,1]^4} \extd s_1 \extd s_2 \extd s_3 \extd s_4 \,\,s_3\,\, s_4\,\, e^{-\frac{2 \imath }{c_R} s_3s_4 (4\sqrt{(1-s_1)s_1(1-s_2)s_2}\cos(\theta)-1)} 
\]
nevertheless has a very different character from that in the preceding section due to the additional $\imath$ in the exponent that we should use in a Lorentzian theory. 

\begin{figure}[h!]
    \centering
    \includegraphics[width=1\textwidth]{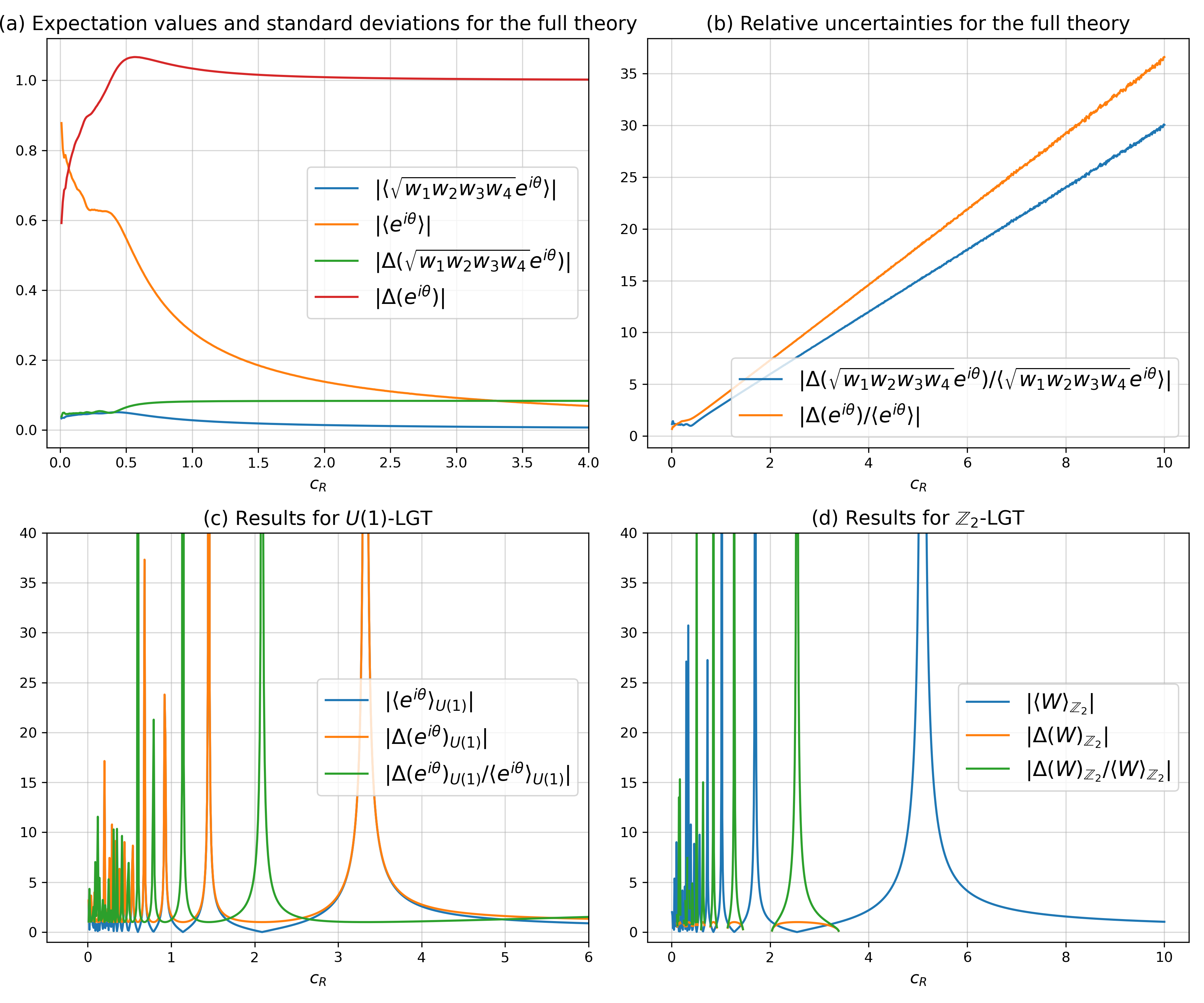}
    \caption{(a) Numerical plots for the absolute values of the expectation values and quantum fluctuations of the Wilson loop and phase $e^{\imath \theta}$ for NCG on Lorentzian $\Z_2\times\Z_2$. (b) Shows the absolute values of the relative uncertainties. Parts (c) and (d) show the much wilder results for $U(1)$ and $\Z_2$-LGT respectively.}
    \label{fig:Z2Z2Lorexpval}
\end{figure}

The limiting values of the observables in Table~\ref{tab:limitsZ2Z2} are the same as in the Euclidean theory, where the $c_R \to 0$ limit can be computed via the stationary phase formula
\[
\lim_{c_R\to 0} \int_{\R^n} g(x) e^{\frac{\imath}{c_R} f(x)} =\sum_{x_0 \in \Sigma} \frac{e^{\frac{\imath}{c_R}f(x_0)}}{\sqrt{|H(f(x_0))|}} e^{\frac{\imath \pi}{4}\mathrm{sgn}(H(f(x_0)))} \left(2\pi c_R \right)^{n/2} g(x_0)
\]
for $\Sigma$ the set of critical points of $f(x)$. The non-limit behaviour of the expectation values is plotted in Figure \ref{fig:Z2Z2Lorexpval}, together with their standard deviations $\Delta(\CO) = \sqrt{\langle |\CO|^2 \rangle - \langle \CO^* \rangle\langle \CO \rangle}$ and relative uncertainties. In the Lorentzian case these can be complex and we therefore show the absolute values. We note a stark difference between the full NCG theory and the restricted $U(1)$ and $\Z_2$ theories, where
\[
\langle W \rangle_{U(1)} = \langle e^{\imath \theta} \rangle_{U(1)} = \imath \frac{J_1\left(\frac{8}{c_R}\right)}{J_0\left(\frac{8}{c_R}\right)}
\quad \quad
\langle W \rangle_{\Z_2} = \langle\sigma \rangle_{\Z_2} = -\tan\left(\frac{8}{c_R}\right)
\]
with $J_n(z)$ are the Bessel functions of first kind. While the full theory behaves in a similar way as the Euclidean case, both $U(1)$-LGT and $\Z_2$-LGT show repeated divergences on the considered quantities. It appears  that the presence of the real positive scalar field $w$ highly regulates the behaviour of these theories.

\section{Non-universal calculi on the gauge group}
\label{secnonu}

So far, we  have considered gauge transformations $\gamma$ to be $\C G$ valued as a kind of `quantisation' (in the sense of quantum computing) of being valued in $G$. The latter is also possible and leads to much larger moduli spaces, as the notion of gauge equivalence is then much weaker.  In algebraic terms the less restrictive notion corresponds to  invertible elements of $\C(X)\tens\C G$ or equivalently to  `convolution-invertible' linear maps $\C(G)\to \C(X)$. The latter makes sense in noncommutative geometry as convolution-invertible linear maps $A\to B$ for a possibly noncommutative coordinate algebra $B$ in place of $\C(X)$ and a quantum group coordinate algebra $A$ in place of $\C(G)$. The more restrictive $G$-valued case corresponds algebraically to algebra homomorphisms $A\to B$. Both versions make sense and can be used, but the original theory in \cite{BrzMa} used the convolution-invertible definition as it is the maximal choice (and hence more things are gauge equivalent) whereas typically between unrelated algebras $A,B$ there may be very few algebra homs, i.e. this will typically be too restrictive. 

On the other hand, the more restrictive algebra hom (or in our setting, $G$-valued) version is easier to work with and was used, for example in \cite{HanLan}. Particularly, it is clearer in this case what means a differentiable gauge transform: when $A$ and $B$ are equipped with differential 1-forms $\Omega^1_A, \Omega^1$, an algebra map $A\to B$ is differentiable of it extends to $\Omega^1_A\to \Omega^1$ compatibly with $\extd$ and the bimodule structures on each side, see \cite[Sec.~1.1]{BegMa}. In the case relevant to us, a differentiable map $\gamma: X\to G$, where both $X,G$ are graphs, is a set map  where\cite[Theorem 3.1]{Ma:gra} for all arrows $x\to y$, either $\gamma_x = \gamma_y$ or $\gamma_x \to \gamma_y$ is an arrow. Such maps pull back to maps between the corresponding functions and differential forms by
\[ \gamma^*: \C(G)\to \C(X),\quad \gamma^*(f)(x)=f(\gamma_x),\quad \gamma^*:\Omega^1_G\to \Omega^1_X,\quad \gamma^*(e_{g\to h})=\sum_{x\to y} \delta_{g, \gamma_x} \delta_{h,\gamma_y} e_{x\to y}.\]
So far this has  not been a consideration since the differential calculus on $G$ as structure group was so far relevant only to the finite group `Lie algebra'  in which a connection $\alpha$ has its values, and we have so far just taken this to be $\C G^+$. The latter corresponds to the universal differential calculus on $G$, which has all possible arrows between distinct elements so that all $G$-valued $\gamma$ are automatically differentiable. In the case of $\C G$-valued $\gamma$ there is currently no concept of differentiable convolution-invertible maps, so either way we have not had  to require conditions for $\gamma$ to be differentiable. 

By contrast, we now want to look at $\alpha$ being valued on the finite group `Lie algebra' of a general (nonuniversal) calculus on $G$, and we also need to restrict to $G$-valued $\gamma$ to be able to impose its differentiability.  Recall from Section~\ref{secdiffG} that nonuniversal calculi on $G$ are induced by a subset $\CC \subsetneq G\cup\{e\}$ by right multiplication $g\to g a$, $a\in \CC$ (these are the arrows of the corresponding Cayley graph on $G$). We take $\CC$ to be $\Ad$-invariant and stable under inverses. As for a Lie group, the space of 1-forms is a free module $\Omega^1_G \simeq \C(G) \tens \Lambda^1$ generated by  the vector space of left invariant 1-forms $\Lambda^1$. It's dual $\Lambda^{1*} \simeq\C \CC$ corresponds to the `Lie algebra', where we recall that $\Lambda^{1*}$ has a natural basis $\{a-e\ |\ a\in \CC\}$ and hence can be identified with the span of $\CC$ if we are being sloppy,  but strictly speaking its elements are displaced by $-e$. This is necessary for them to live in $\C G^+$ (so the finite group `Lie algebra' for a nonuniversal calculus is a subspace of this, corresponding to the space of 1-forms being a quotient). We thus define connections as 1-forms with values in $\Lambda^{1*}$ obeying the same reality condition as before,
\[
\CA = \{\alpha \in \Lambda^{1*} \tens \Omega^1_X\ \vert\ \alpha^* = -\alpha\}.
\]
We write $\alpha \in \CA$ as $\alpha = \sum_{x\to y} \alpha_{x\to y} \tens e_{x\to y}$, then the conditions are $ \alpha_{x\to y}\in \Lambda^{1*}$ and $\alpha_{x\to y}^*=\alpha_{y\to x}$. In terms of the extended holonomies, this is  
\[ u_{x\to y}:=e+\alpha_{x\to y}\in \Lambda^{1*}\oplus\C e,\quad u_{x\to y}^*=u_{y\to x}\]
in line with what we did before to the nonuniversal case. Here $\Lambda^{1*}\oplus\C e\subseteq\C G$ still consists of elements of counit 1, so the allowed holonomies are just a subset of those for the universal calculus.

The next issue  is how to define gauge transformations so that $\alpha^\gamma$ is still in $\CA$. Recall that
\[
\alpha^\gamma = \sum_{x\to y} (\gamma_x \alpha_{x\to y}\gamma^*_y + \gamma_x \gamma_y^* - e) \tens e_{x\to y}.
\]
For this to be a connection, the expression in the parenthesis needs to be $\Lambda^{1*}$-valued. But if we want this condition to hold uniformly for all $\alpha_{x\to y}$ including $\alpha_{x\to y}=0$, then we need the terms to be individually in $\Lambda^{1*}$ for every arrow $x\to y$, 
\begin{align}
\label{eq:conditionGammaAlphaConnection}
\gamma_x \alpha_{x\to y}\gamma^*_y \in \Lambda^{1*},\quad  \gamma_x \gamma_y^* - e \in \Lambda^{1*}
\end{align}
and that (which we still assume) $\gamma_x^*=\gamma_x^{-1}$. Recalling the above basis of $\Lambda^{1*}$, the second part of the condition is equivalent to $\gamma_x \gamma^*_y \in \CC \cup\{e\}$.  We also have, since $\CC$ is assumed to contain inverses and be $\Ad$-invariant, that if $\gamma^*_x\gamma_y \in \CC \cup \{e\}$ then also $\gamma_y (\gamma^*_x\gamma_y)^* \gamma^*_y = \gamma_x \gamma^*_y \in \CC \cup \{e\}$, and conversely. Hence the second part of (\ref{eq:conditionGammaAlphaConnection}) can equally be written as  $\gamma_x^* \gamma_y - e\in \Lambda^{1*}$ or equivalently as $\gamma_x^*\gamma_y\in \CC\cup\{e\}$. We therefore take the space of local gauge transformations to be 
\[
\CG \coloneqq \{\gamma \in \C G \tens \C(X)\ \vert\ \gamma_x\in G\ \forall x\in X,\quad  \gamma^*_x\gamma_y \in \CC \cup \{e\}\ \forall x\to y\}.
\]
On the other hand, since  the calculus on $G$ is induced by $\CC$ via right multiplication, differentiability of a $G$-valued $\gamma$ is equivalent to saying that for every arrow $x\to y$ we have $\gamma_y = \gamma_x a$ for $a\in \CC \cup \{e\}$ or equivalently that $\gamma^*_x\gamma_y \in \CC \cup \{e\}$. Thus, $\CG$ is nothing other than differentiable $G$-valued maps $X\to G$. The original form of the 2nd part of (\ref{eq:conditionGammaAlphaConnection})  similarly amounts to $\gamma^*=\gamma^{-1}$ being differentiable.

The problem here, however, is that $\CG$ is not necessarily closed under pointwise multiplication. If for two local gauge transformations $\gamma$, $\eta$ and we have for $x\to y$ that  $\gamma^*_x \gamma_y, \eta^*_x\eta_y  \in \CC \cup \{e\}$, it is not clear that $\gamma \eta$ likewise obeys this since $(\gamma\eta)^*_x (\gamma \eta)_y = \eta^*_x \gamma^*_x \gamma_y \eta_y$ need not live in $\CC\cup\{e\}$. If on every connected component of $X$ either $\gamma$ or $\eta$ are global (i.e. constant) then there is no problem and $\gamma\eta\in \CG$ due to either $\gamma_x^*\gamma_y=e$ or $\eta_x^*\eta_y=e$ and Ad-invariance of $\CC$ (but the composite may not be global unless both $\gamma,\eta$ are). Restricting to global gauge transformations is an option but does not allow for more interesting local gauge transformations that could vary within some  connected components of the graph. Our proposal for the latter is as follows:

\begin{proposition} 
A sufficient condition for $\CG$ to be closed under multiplication is that \[\CC\cdot \CC \subseteq \CC \cup \{e\}.\] 
In this case,  $\alpha^\gamma\in \CA$ is a connection for any $\alpha\in \CA$. However, the only connected calculus on $G$ obeying this condition is the universal calculus where $\CC=G\setminus\{e\}$. 
\end{proposition}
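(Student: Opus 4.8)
The plan is to dispose of the three assertions in turn: closure of $\CG$ under multiplication, the connection property of $\alpha^\gamma$, and the rigidity statement that connectedness forces the universal calculus. All three rest on the two standing hypotheses on $\CC$ (that it is $\Ad$-invariant and closed under inverses) together with the new condition $\CC\cdot\CC\subseteq\CC\cup\{e\}$. First I would verify closure of $\CG$. Given $\gamma,\eta\in\CG$ and an arrow $x\to y$, the pointwise product satisfies $(\gamma\eta)_x^{-1}(\gamma\eta)_y=\eta_x^{-1}(\gamma_x^{-1}\gamma_y)\eta_y$. Setting $a:=\gamma_x^{-1}\gamma_y\in\CC\cup\{e\}$ and inserting $\eta_x\eta_x^{-1}$, this becomes $(\eta_x^{-1}a\,\eta_x)(\eta_x^{-1}\eta_y)$. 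The first factor is a conjugate of $a$, hence lies in $\CC\cup\{e\}$ by $\Ad$-invariance, and the second lies in $\CC\cup\{e\}$ because $\eta\in\CG$; the hypothesis $\CC\cdot\CC\subseteq\CC\cup\{e\}$ (with the trivial cases where one factor is $e$) then gives $(\gamma\eta)_x^{-1}(\gamma\eta)_y\in\CC\cup\{e\}$, so $\gamma\eta\in\CG$. The same conjugation identity applied to $\gamma^{-1}$, together with $\CC=\CC^{-1}$, shows $\CG$ is in fact a group.

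Second, for the connection property I would check only the two membership conditions of \eqref{eq:conditionGammaAlphaConnection}, since anti-Hermiticity of $\alpha^\gamma$ is automatic from unitarity of $\gamma$ by the general theory. The condition $\gamma_x\gamma_y^{-1}-e\in\Lambda^{1*}$ is exactly $\gamma_x\gamma_y^{-1}\in\CC\cup\{e\}$, which follows from $\gamma_x^{-1}\gamma_y\in\CC\cup\{e\}$ by $\Ad$-invariance and inverse-closure as already observed in the text. For the term $\gamma_x\alpha_{x\to y}\gamma_y^{-1}$ I would expand $\alpha_{x\to y}=\sum_{a\in\CC}\lambda_a(a-e)$ and use the identity $\gamma_x a\gamma_y^{-1}=(\gamma_x a\gamma_x^{-1})(\gamma_x\gamma_y^{-1})$, reducing each summand to $a'c-c$ with $a':=\gamma_x a\gamma_x^{-1}\in\CC$ and $c:=\gamma_x\gamma_y^{-1}\in\CC\cup\{e\}$; rewriting this as $(a'c-e)-(c-e)$ and invoking $\CC\cdot\CC\subseteq\CC\cup\{e\}$ once more places both summands in $\Lambda^{1*}$. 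Hence $\alpha^\gamma\in\CA$.

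Finally, for the rigidity claim the organising observation is that the hypothesis makes $H:=\CC\cup\{e\}$ a subgroup of $G$: it contains $e$, is closed under inverses since $\CC=\CC^{-1}$, and is closed under products precisely by $\CC\cdot\CC\subseteq\CC\cup\{e\}$. A connected calculus means the Cayley graph is connected, i.e.\ $\CC$ generates $G$, so $\langle\CC\rangle=G$; but $\CC\subseteq H$ with $H$ a subgroup forces $H=G$, and therefore $\CC=G\setminus\{e\}$, the universal calculus. The individual computations are short, so I do not anticipate a serious technical obstacle; the one point requiring care is that the raw product $(\gamma\eta)_x^{-1}(\gamma\eta)_y=\eta_x^{-1}\gamma_x^{-1}\gamma_y\,\eta_y$ does not visibly lie in $\CC\cup\{e\}$, and the whole proposition hinges on using $\Ad$-invariance to absorb the inner conjugation into $\CC\cup\{e\}$ before the multiplicative hypothesis can bite. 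Recognising $\CC\cup\{e\}$ as a subgroup—which simultaneously powers the first two parts and drives the rigidity statement—is the key idea I would foreground.
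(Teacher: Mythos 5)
Your proposal is correct and follows essentially the same route as the paper's proof: in all three parts you use $\Ad$-invariance to absorb conjugations into $\CC\cup\{e\}$ before applying the hypothesis $\CC\cdot\CC\subseteq\CC\cup\{e\}$, which is exactly the paper's mechanism for closure of $\CG$ and for $\gamma_x\alpha_{x\to y}\gamma_y^*\in\Lambda^{1*}$. The only cosmetic difference is in the rigidity part, where you package the paper's induction along partial products $a_1,\,a_1a_2,\,\dots,\,g$ as the observation that $\CC\cup\{e\}$ is a subgroup containing the generating set $\CC$ --- an equivalent but slightly cleaner formulation of the same argument.
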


\begin{proof} Note that the condition on $\CC$ stated also implies that $(\CC\cup \{e\})^2, \CC\cdot(\CC\cup\{e\})\subseteq \CC\cup\{e\}$. 
  (1)   Let $\gamma$, $\eta$ be two gauge transformations, so for every arrow $x\to y$ we have $\gamma^*_x \gamma_y, \eta^*_x \eta_y \in \CC \cup\{e\}$. But $y\to x$ is also an arrow, so we also have $\eta_y^*\eta_x\in\CC\cup\{e\}$. As explained above, this also equivalent under our assumptions  to $\eta_y\eta_x^*\in \CC\cup\{e\}$. Then $\eta_x(\eta_x^*\gamma_x^*\gamma_y\eta_y)\eta_x^*=\gamma_x^*\gamma_y\eta_y\eta_x^*\in\CC\cup\{e\}$ using the condition on $\CC$. Hence by Ad-invariance of $\CC$, we also have $\eta_x^*\gamma_x^*\gamma_y\eta_y\in \CC\cup\{e\}$ so that $\gamma\eta\in \CG$.  

(2)  If $x\to y$ then so does $y\to x$ and hence $\gamma_y^*\gamma_x\in \CC\cup\{e\}$. Then $\gamma_x^*(\gamma_x \alpha_{x\to y}\gamma_y^*)\gamma_x=\alpha_{x\to y}\gamma_y^*\gamma_x$ and we show that this lives in $\Lambda^{1*}$.  Writing $\gamma_y^*\gamma_x=b\in \CC\cup\{e\}$ and $\alpha_{x\to y}=\sum_{a\in\CC}\alpha_{x\to y}^a(a-e)$, we have for each term that $(a-e)b=ab-b$ is either $c-b=(c-e)-(b-e)$ for $c\in \CC$ or $e-b$ (by the condition on $\CC$). Either way, the result for each term lives in $\Lambda^{1*}$ as required. Hence, by Ad-invariance of $\Lambda^{1*}$,  $\gamma_x \alpha_{x\to y}\gamma_y^*\in \Lambda^{1*}$ as required. 

(3) If $\CC$ generates $G$ via right multiplication then for all $g\in G$, there is a set $a_i\in \CC$, $i=1,\dots,n$ for some $n\in \mathbb{N}$, such that $g = a_1 \cdot \ldots \cdot a_n$. Repeatedly using the condition on $\CC$ then implies that $a_1, a_{1}a_{2}, ...,  g\in \CC \cup \{e\}$.  Applying this to all $g\ne e$, we see that  $G \setminus \{e\}\subseteq \CC$. But $\CC\subseteq G\setminus\{e\}$ as part of the specification of a calculus, so we have equality.  \end{proof}

Thus, the condition stated does the job and we have a local gauge theory but the calculus on $G$ for this to work cannot be a connected one. We will see this in our examples. 

\begin{example} For $G = \Z$ (note that this is with its additive group structure), we can fix a positive integer $k\in \N$ and take the calculus  $\CC = k \Z\setminus \{0\}$, i.e. all nonzero multiples of $k$. Then $nk,mk\in \CC$ imply that $nk+mk=(n+m)k\in \CC\subset\CC\cup\{0\}$ as required.  For this calculus,  the integer line is split into $k$ connected components labelled by $q = 0,\dots,k-1$, with vertices $\{q + n k\vert n\in \Z\}$, i.e. the set of connected components is just $\Z_k$ defined as integers mod $k$. Each component is the complete graph, i.e. has the universal calculus, since $(q+nk)-(q+mk)=(n-m)k\in \CC$ so that there is an arrow $(q+mk)\to (q+nk)$. The zeroth de Rham cohomology class (i.e. the functions with zero $\extd$) can be computed as 
\[ H^0(\Z,\CC) \simeq \C(\Z_k)=\C^k,\]
 i.e. functions that are $k$ periodic, meaning they are constant on the connected components. In this specific example every connected component is bijective to the integer line by $q+nk$ corresponding to $k$, so we have $k$ copies of $\Z$ each with the universal calculus. 

For $G=\Z_N$, we take $k \in \Z_N$ and set $\CC = k \Z_N \setminus \{0\}$. If $k$ is not a divisor of $N$ then $\CC = k \Z_N \setminus \{0\}$ is the universal calculus. Otherwise, we can list its elements as $\CC=\{jk\ |\ j=1,\cdots, N/k-1\}$ which is bijective to $\Z_{N/k}\setminus\{0\}$. If $nk, mk\in \CC$ then in $\Z_N$ (i.e. working modulo $N$), we have $nk+mk=[n+m]k$ where $[\ ]$ denotes modulo $N/k$, so $nk+mk\in \CC\cup\{e\}$ and hence our condition again holds. By similar arguments as before, the calculus on $\Z_N$ has $k$ connected components, each also bijective to $\Z_{N/k}$ with its universal graph calculus, and $H^0(\Z_N, \CC) \simeq \C(\Z_{k})=\C^k$.
\end{example}

\begin{example}\label{exS3non} For a nonAbelian example, we take $G = S_3$. There are only two nontrivial conjugacy classes, $\{u,v,w\}$ and $\{uv,vu\}$, leading to two choices of calculus. The connected one generated by $\CC = \{u,v,w\}$ cannot obey the condition in the proposition and forces us to restrict gauge transformations to be constant, i.e. global. The other choice $\CC = \{uv,vu\}$ obeys $\CC\cdot \CC \subseteq \CC \cup \{e\}$ since $uv \cdot \CC = \{vu,e\}$, $vu \cdot \CC = \{e,uv\}$, so we have a local gauge theory, but with a non-connected calculus. Indeed, the graph is
\[
\begin{tikzcd}
{} & {u \arrow[dd,leftrightarrow] \arrow[rrd,leftrightarrow]} & {uv \arrow[dd,leftrightarrow] } & {} \\
{e \arrow[urr,leftrightarrow]\arrow[drr,leftrightarrow] } & {} & {} & {w\arrow[lld,leftrightarrow] } \\
{} & {v } & {vu  } & {} \\
\end{tikzcd}.
\]
with two connected components  $S_3 = \{e,uv,vu\} \cup \{u,v,w\}$, each with their universal (triangle) calculus. The first component here is the subgroup $\Z_3$ and  
$H^0(S_3, \CC) = \C(S_3/\Z_3)=\C^2$.
\end{example}

We now compute the moduli space $\CU^\times/\CG$ for the case of $G=\Z_N$ above and the $A_2$ graph $\bullet-\bullet$. 

\begin{example}
    Take the base graph $X$ to be the two vertex graph $x\leftrightarrow y$ and the gauge group $G = \Z_N$ with calculus $\CC =\{jk\ |\ j=1,\cdots,N/k-1\}$  as above, where $k$ is a fixed divisor of $N$. There is only one edge so $u_{x\to y}=u$ and $u_{y\to x}=u^*$ for some element $u\in \Lambda^{1*}\oplus\C e$. To avoid confusion we retain  $e$ to refer to 0 as an element of the group $\Z_N$. We also assume $u$ is invertible as an element of the group algebra $\C\Z_N$, i.e. we look at $\CU^\times$. We can expand and take adjoints in the group algebra $\C \Z_N$ as 
 \[ u=e+\sum_{j=1}^{N/k-1}u^j(jk-e),\quad  u^*= e+\sum_{j=1}^{N/k-1}\overline{u^j}(N-jk-e)= e+\sum_{j=1}^{N/k-1}\overline{u^j}\left(\left({N\over k}-j\right)k-e\right)\]
 for some coefficients $\{u^j\}$, which we can choose to regard as defining a corresponding element $\tilde u=\sum_{j=0}^{N/k-1}u^j j\in \C \Z_{N/k}$, where we choose to set $u^0:=1-\sum_{j=1}^{N/k-1}u^j$. Then $u^*$ corresponds to $(\tilde u)^*$ in this group algebra. To impose invertibility in $\C\Z_N$, we write 
  \[ u=e\left(1-\sum_{j=1}^{N/k-1}u^j\right)+\sum_{j=1}^{N/k-1} u^j (jk)=\sum_{j=0}^{N/k-1}u^j(jk).\]
We think of the coefficients here as a function $U$ on $\Z_N$ with support on elements of the form $jk$, where $j=0,\cdots, N/k-1$, and we then Fourier transform this function on $\Z_N$ to give
\[ \hat U(m)= \sum_{i=0}^{N-1}e^{{2\pi\imath\over N}im}U^i=\sum_{j=0}^{N/k-1}q^{jm}u^j=\widehat {\tilde U}([m]),\]
 where $q=e^{2\pi\imath\over N/k}$ and $\tilde U$ denotes the coefficients of $\tilde u$ regarded as a function on $\Z_{N/k}$, which we then Fourier transformed on this group. Here, $\hat U$ is a function on $\Z_N$ which is periodic with period $N/k$ and one period is the $\Z_{N/k}$ Fourier transform of $\tilde U$. But, invertibility of $u$ in the group algebra corresponds to point-wise invertibility of the Fourier transform of $U$ and hence to pointwise invertibility of the Fourier transform of $\tilde U$ and hence to invertibility of $\tilde u$ in the group algebra $\C \Z_{N/k}$. In other words, $\CU^\times$ is the same as for $\C\Z_{N/k}$-gauge theory with its universal calculus, where we would associate $\tilde u\in \C\Z_{N/k}$  to $x\to y$ and its $*$ to $y\to x$, and would want $\epsilon(\tilde u)=1$ (which was ensured by our choice of $u^0$) and $\tilde u$ invertible. 

Next, a gauge transformation  means  pair $\gamma=(\gamma_x,\gamma_y)$ with $\gamma_x,\gamma_y\in \Z_N$ and (say) $\gamma_x\gamma_y^*=j_\gamma k\in \CC\cup\{e\}$. The gauge transformed extended holonomy is $\gamma_x u\gamma_y^*= (j_\gamma k).u= \sum_{j=0}^{N/k-1} u^j((j+j_\gamma)k)$. The corresponding elements of $\C \Z_{N/k}$ is $\widetilde{u^\gamma}=\sum_{j=0}^{N/k-1}u^j [j+j_\gamma]$ where we work modulo $N/k$. In other words, the group of gauge transformations $\CG$ is bigger but only the associated $j_\gamma$, which we can view as in $\Z_{N/k}$ enters the gauge transformation. Moreover, in terms of the corresponding element $\tilde u$, this transformation is the same as for $\C\Z_{N/k}$-gauge theory with universal calculus {\em but with gauge transformations restricted to group-valued}.  

To finish, working entirely in this $\C\Z_{N/k}$-gauge theory with universal calculus but $\Z_{N/k}$-valued gauge transforms, the action of $j_\gamma\in \Z_{N/k}$ is to multiply the value of $\tilde u$ in $\C\Z_{N/k}$. The corresponding function $\tilde U$ gets translated by $j_\gamma$, which means that its Fourier transform $\widehat{\tilde U}(m)$ gets a phase factor $q^{m j_\gamma}$. The value of $\widehat{\tilde U}(0)=1$ is unchanged but the value at $m=1$ can be changed by any power of $q$ which we can use to bring it to the wedge
\[ \widehat{\tilde U}(1)=r e^{\imath \varphi},\quad r\in(0,\infty),\quad \varphi\in \left[0, {2\pi \imath k\over N}\right).\]
Having used up our discrete gauge freedom, the other values $\widehat{\tilde U}(m)$ for $m\ge 2$ can be chosen freely other than they need to be invertible complex numbers. Hence 
\[ \CU^\times/\CG\simeq \left\{r e^{\imath \varphi}\ \left|\ r\in(0,\infty),\  \varphi\in \left[0, {2\pi \imath k\over N}\right)\right.\right\} \times (\C^{\times})^{N/k-2}.\]
This is in stark contrast to $\CU^\times/\CG=\R_{>0}^{N/k-1}$ if we had allowed $\C\Z_{N/k}$-valued gauge transformations as in Section~\ref{secsimply}. 

For a concrete choice, $N=6$ and $k=2$ gives $\CC=\{2,4\}$. The 6 vertices of $\Z_6$ with this calculus form a graph with two connected components, each a triangle (this is coincidentally the same graph as in Example ~\ref{exS3non}). The equivalent theory is a $\C\Z_3$-gauge theory with universal calculus but  $\Z_3$-valued gauge transforms. The moduli space for $X=\bullet-\bullet$ is then 2-dimensional, namely a $120^\circ$ wedge region as above $\times \C^\times$, in contrast to $\R_{>0}^2$ for $\C \Z_3$-valued gauge transforms as in previous sections. 
\end{example}

\section{Concluding remarks}\label{secrem}

We have presented a new and systematic approach to gauge theory on a graph $X$ with gauge `group' replaced by the group algebra $\C G$ of a group $G$, typically a finite group. Here $\C G$ is a Hopf algebra, and we used noncommutative geometry and Hopf algebra methods that can be generalised to other Hopf algebras such as q-deformed  $U_q(su_2)$, its finite-dimensional quotient $u_q(su_2)$ or the quantum double of a finite group. Conversely, if such q-deformations are relevant to physics (as appears to the case in 2+1 quantum gravity or the Kitaev model in condensed matter) then the $\C G$-valued theory in the present work {\em is} the relevant warm up.  Noncommutative geometry is used also for the base $X$ since, even though the `coordinate algebra' is just functions on the set of vertices,  differential 1-forms are literally linear combinations of graph arrows and  can never commute with all functions. 

The $\C G$-valued theory has a very different character from usual LTG based on $G$-valued `holonomies' (parallel transports) associated to arrows and $G$-valued gauge fields, in that both are allowed to be formal linear combinations of group elements. We found that the enlargement of both somewhat cancels out in that the moduli space $\CU^\times/\CG$ are what we expect in the Abelian case regarding the phase degrees of freedom (namely trivial when the $X$ is simply connected) but with some residual `tropical' $\R_{>0}$ modes. At the same time, we found that the $\C G$-valued theory can be viewed as a collection of noncommutative $U(d_\rho)$-valued gauge theories for the different nontrivial irreps $\rho$ of $G$. For example, our $\C\Z_2$-valued gauge theory can be viewed as a noncommutative $U(1)$-gauge theory of a more conventional form that would be usual $U(1)$-gauge theory if the graph or lattice were replaced by a smooth manifold. 
We studied the associated quantum gauge theories, including correlation functions and expectation values for $G=\Z_2$ and some different graphs $X$, including the non-simply connected case of a square. In particular, we also saw new effects in our detailed study of the resulting quantum gauge field theory on the finite chain and the square. They included boundary vs bulk effects  in the finite chain in the spirit of \cite{ArgMa2} and a well-behaved theory of the Lorentzian square completely unlike usual LGT which does not behave well with $\imath$ in front of the action. It would be interesting to look at such effects for a larger class of graphs. 

Some directions for further work are as follows.  Firstly, one can say rather more about how our theory relates to quantisation of usual LTG both from the spin networks approach and in other approaches. Secondly, since we used well-established Hopf algebra methods, the same detailed analysis could be carried out for other (non-group algebra) quantum groups. (Doing the same for the enveloping algebra of a Lie algebra also fits our methods but is not fundamentally different from the group algebra case.) Next, we focussed mainly on discrete $G$ with the geometry of $G$ being the `universal calculus', which corresponds to equipping $G$ with its complete graph. The last main section of the paper looked at the theory with non-universal (i.e. other Cayley graph) differential structures on $G$ but only solved the problem of what are the differentiable gauge transformations in the case where these are $G$-valued. What means `differentiable' in the $\C G$-valued case remains an open problem.

In fact,  the Hopf algebra methods we used are capable of handling geometrically nontrivial bundles, where a connection is not just a $G$-valued or (in our case) $\C G$-valued 1-form on the base. The notion of quantum group principal bundles $P$ with Hopf algebra fibre, and connections on them, was introduced in \cite{BrzMa} and is by now very well established in algebra. Gauge transformations can then be understood as bundle automorphisms both of a `convolution-invertible' type (generalising our $\C G$-valued ones) or of `algebra homomorphism' type (generalising the $G$-valued case in Section~\ref{secnonu}), with  `differentiability' clear for the latter type. There are also interesting links to Hopf algebroids, see \cite{HanLan}.  Even the case of $\C G$ structure Hopf algebra as  here but analysing the resulting gauge theories for some examples of  nontrivial bundles would be interesting for further work. 

Finally, returning to the physics, several topics of interest in the usual LGT case can be looked at in our version for group algebras $\C G$ (or beyond). Notably, the analysis in \cite{Don:lat} of what happens to the theory when a connected graph is disconnected by removal of some edges, in terms of gauge theories on the two halves and physics on the boundary, could be revisited in our setting. It would also be interesting to look at various concepts of entropy on graphs in relation to the physics in this context. The `tropical' modes also have some similarity to quantum gravity on graphs using noncommutative methods\cite{Ma:squ,ArgMa,BliMa} which should be investigated further. Last but not least, coupling gauge fields with matter fields in this context could be studied systematically as sections of an associated bundle (which makes sense in noncommutative geometry). Scalar matter can be understood in this context by taking $\phi\in A = \C(X)$ and the other relevant constructions accordingly, see \cite[Corollary 5.14]{BegMa}, but the spinor case on graphs still needs to be developed in order to connect to physics, and here the `spectral triples' approach of Connes can be a starting point. This goes back to a different style of noncommutative geometry coming out of operator algebras\cite{Con}, but can be adapted to the more explicit kind of `quantum Riemannian geometry' used here, see \cite[Chap.~8]{BegMa}.  A related problem (for which matter fields would be a good place to start) is  the lack of a theory of noncommutative variational calculus, which represents a gap between the functional integral methods we used vs the equations of motion and Hamiltonian formulation. Both these aspects will be looked at elsewhere. We note that classically, variational calculus goes via the cohomology of the total space of the jet bundle, which is only starting to be understood in noncommutative geometry\cite{MaSi,Flo}.

\end{document}